\newcommand\CR[1]{\textcolor{black}{#1}}
\newtheorem{theorem}{Theorem}
\newtheorem{lemma}{Lemma}
\newtheorem{definition}{Definition}
\renewcommand{\mathbf}[1]{\bm{#1}}
\newcommand{\argmax}{\operatornamewithlimits{argmax}}
\newcommand{\neil}[1]{{#1}}
\newcommand{\myparatight}[1]{\smallskip\noindent{\bf {#1}:}~}
\begin{document}
\pagestyle{empty}

\date{}

\title{\Large \bf PORE: Provably Robust Recommender Systems against  Data Poisoning Attacks}

\author{
{\rm Jinyuan Jia\thanks{Equal contribution.}$\:\;^{1}$, Yupei Liu\textcolor{green!80!black}{\footnotemark[1]}${\:\;^{2}}$, Yuepeng Hu$^2$, Neil Zhenqiang Gong$^2$} \\
$^1$The Pennsylvania State University, $^2$Duke University\\
$^1$jinyuan@psu.edu, $^2$\{yupei.liu, yuepeng.hu, neil.gong\}@duke.edu}

\maketitle

\begin{abstract}

 \emph{Data poisoning attacks} spoof a recommender system to make arbitrary, attacker-desired recommendations 
via injecting fake users with carefully crafted rating scores into the recommender system. 
We envision a cat-and-mouse game for such data poisoning attacks and their defenses, i.e., new defenses are designed to defend against existing attacks and new attacks are designed to break them. 
To prevent such cat-and-mouse game,  we propose PORE, the first framework to build \emph{provably} robust recommender systems in this work. 
PORE can transform \emph{any} existing recommender system  to be provably robust against \emph{any} \neil{untargeted} data poisoning attacks, \neil{which aim to reduce the overall performance of a recommender system}. 
Suppose PORE recommends top-$N$  items to a user when there is no attack. We prove that PORE still recommends at least $r$ of the $N$ items to the user  under {any} data poisoning attack, where $r$ is a function of the number of fake users in the attack. Moreover, we design an efficient algorithm to compute $r$ for each user. We empirically evaluate PORE on popular benchmark datasets.     
    
\end{abstract}

\section{Introduction}

Many web service platforms (e.g., Amazon, YouTube, and TikTok) leverage recommender systems to engage users and improve user experience. 
Typically, a platform first collects a large amount of rating scores that users gave to items, which is known as  a \emph{rating-score matrix}.  
Then, the platform uses them to build a recommender system that models the complex relationships between user interests and item properties. Finally, the recommender system recommends top-$N$  items to each user that match his/her interests. 

However, due to its openness, i.e., anyone can register users and provide rating scores to items, recommender system is fundamentally not robust to 
\emph{data poisoning attacks}~\cite{lam2004shilling,mobasher2007toward,li2016data,yang2017fake,fang2018poisoning,fang2020influence,huangdata2021}. 
Specifically, in a data poisoning attack, an attacker  creates fake users in a recommender system and assigns carefully crafted rating scores to \neil{items}. Different data poisoning attacks essentially use different methods to craft the fake users' rating scores.  When a recommender system is built based on the \emph{poisoned-rating-score matrix}, which includes the rating scores of both genuine and fake users, it recommends attacker-chosen, arbitrary top-$N$ items to a user. As a result, the recommendation performance (e.g.,  Precision@$N$,  Recall@$N$, and  F1-Score@$N$) is substantially degraded. Data poisoning attacks pose severe challenges to the robustness/security of recommender systems. 

Many defenses have been proposed to enhance the robustness of recommender systems against data poisoning attacks. In particular, one family of defenses~\cite{burke2006classification,zhang2006attack,wu2012hysad,zhang2014hht,fang2018poisoning} aim to detect fake users before building a recommender system. These methods rely on the assumption that the rating scores of fake users and genuine users have statistically different patterns, which they leverage to distinguish between fake and genuine users. 
\neil{Another family of defenses~\cite{sandvig2007robustness,mehta2007robust,tang2019adversarial,chen2019adversarial,yuan2019adversarial,liu2020certifiable,hidano2020recommender} aim to design new methods of training  recommender systems such that they have good recommendation performance even if they are trained on a poisoned-rating-score matrix, e.g., using trim learning~\cite{hidano2020recommender}.  
However, these defenses only achieve  \emph{empirical} robustness, leading to an endless cat-and-mouse game between attacks and defenses: a new empirical defense is proposed to mitigate existing attacks but can be broken by new attacks that adapt to the defense.} For instance,  fake users can adapt their rating scores  such that they cannot be detected based on the rating scores' statistical patterns ~\cite{fang2018poisoning,fang2020influence,huangdata2021}.  As a result, a recommender system's performance is still substantially degraded by strong, adaptive attacks.

\myparatight{Our work} In this work, we aim to end such cat-and-mouse game via proposing PORE, the first framework to build \emph{provably} robust recommender systems against \emph{any} \neil{untargeted} data poisoning attacks. 
Suppose, under no attacks, a recommender system algorithm  trains a recommender system on a clean rating-score matrix,  which recommends a set of top-$N$ items (denoted as $\Gamma_u$) to a user $u$. Under attacks, the recommender system algorithm trains a recommender system on a poisoned-rating-score matrix, which recommends a set of top-$N$ items (denoted as $\Gamma_u'$) to the user.  
We say the recommender system algorithm is $(e,r)$-provably robust for the user $u$ if the intersection between $\Gamma_u$ and $\Gamma_u'$ includes at least $r$ items when there are at most $e$ fake users, no matter how an attacker crafts the fake users' rating scores. In other words, an $(e,r)$-provably robust recommender system guarantees that at least $r$ of the recommended top-$N$ items are unaffected by $e$ fake users no matter what rating scores they use. We note that $r$ depends on the number of fake users $e$ and we call $r$ \emph{certified intersection size}. A provably robust recommender system can guarantee a lower bound of recommendation performance under \emph{any} data poisoning attack, i.e.,  no matter how fake users craft their rating scores.

Suppose a \emph{submatrix} consists of $s$ rows of the  rating-score matrix, i.e., a submatrix includes rating scores of $s$ users. Intuitively, when the fraction of fake users is bounded, a randomly sampled  submatrix is likely to not contain fake users and thus a recommender system built based on the submatrix is not affected by fake users.  Based on this intuition, \neil{PORE uses \emph{bagging}~\cite{breiman1996bagging}, a well-known ensemble method,} to achieve provable robustness. In particular, PORE aggregates recommendations from multiple base recommender systems to recommend top-$N$ items to each user.  
Specifically, we can use any recommender system algorithm (called \emph{base algorithm}) to build a recommender system (called \emph{base recommender system}) on a submatrix. Therefore, we could build ${n \choose s}$ base recommender systems since there are ${n \choose s}$ submatrices, where $n$ is the total number of users. Each base recommender system makes recommendations to users.  We denote by $p_i$ the fraction of the ${n \choose s}$ base recommender systems that recommend item $i$ to a user $u$. We call $p_i$ \emph{item probability}.\footnote{Item probability $p_i$ also depends on user $u$, but we omit it for simplicity.} PORE recommends the top-$N$ items with the  largest item probabilities to user $u$.

Our major theoretical result is that we prove 
PORE is $(e,r)$-provably robust, no matter what base algorithm is used to train the base recommender systems. Moreover, for any given number of fake users $e$, we derive the certified intersection size $r$ for each genuine user, which is the solution to an optimization problem. 
 PORE relies on the item probabilities $p_i$'s to make recommendations. 
Moreover, the optimization problem to calculate $r$ also involves item probabilities. However, it is challenging to compute the exact item probabilities as it requires building ${n \choose s}$  base recommender systems. To address the challenge, we design an efficient algorithm to estimate the lower/upper bounds of the item probabilities via building $T \ll {n \choose s}$ base recommender systems, where 
the $T$ base recommender systems can be built in parallel. 
PORE makes recommendations based on the estimated item probabilities in practice. 
Moreover, we use the estimated item probabilities to solve the optimization problem to obtain $r$ for each user.

We empirically evaluate  PORE on three benchmark datasets, i.e., MovieLens-100k, MovieLens-1M, \neil{and MovieLens-10M}. Moreover, we consider two state-of-the-art base algorithms, i.e., Item-based Recommendation (IR)~\cite{argyriou2020microsoft} and Bayesian Personalized Ranking (BPR)~\cite{rendle2012bpr}, to show the generality of PORE. 
\CR{We also generalize  state-of-the-art provably robust defense~\cite{jia2020intrinsic} against data poisoning attacks for machine learning classifiers to recommender systems and compare PORE with it.}
We have \CR{three} key observations from our experimental results. \CR{First, PORE substantially outperforms the defense generalized from classifiers.} Second, when there are no data poisoning attacks, PORE has comparable recommendation performance (i.e., Precision@$N$,  Recall@$N$, and  F1-Score@$N$) with a standard recommender system built on the entire rating-score matrix. Third, under any data poisoning attacks, PORE can guarantee a lower bound of recommendation performance, while the standard recommender systems cannot. 

Our key contributions are summarized as follows:
\begin{itemize}
    \item We propose PORE, the first framework to build  recommender systems that are provably robust against \neil{untargeted} data poisoning attacks. 
    
    \item We prove the robustness guarantees of PORE and derive its certified intersection size. Moreover, we design an algorithm to compute the certified intersection size. 
    
    \item We perform extensive evaluation on popular benchmark datasets using two state-of-the-art base recommender system algorithms.  
\end{itemize}

\section{Background}
\label{problem}

\begin{figure*}[!t]
	 \centering
{\includegraphics[width=0.9\textwidth]{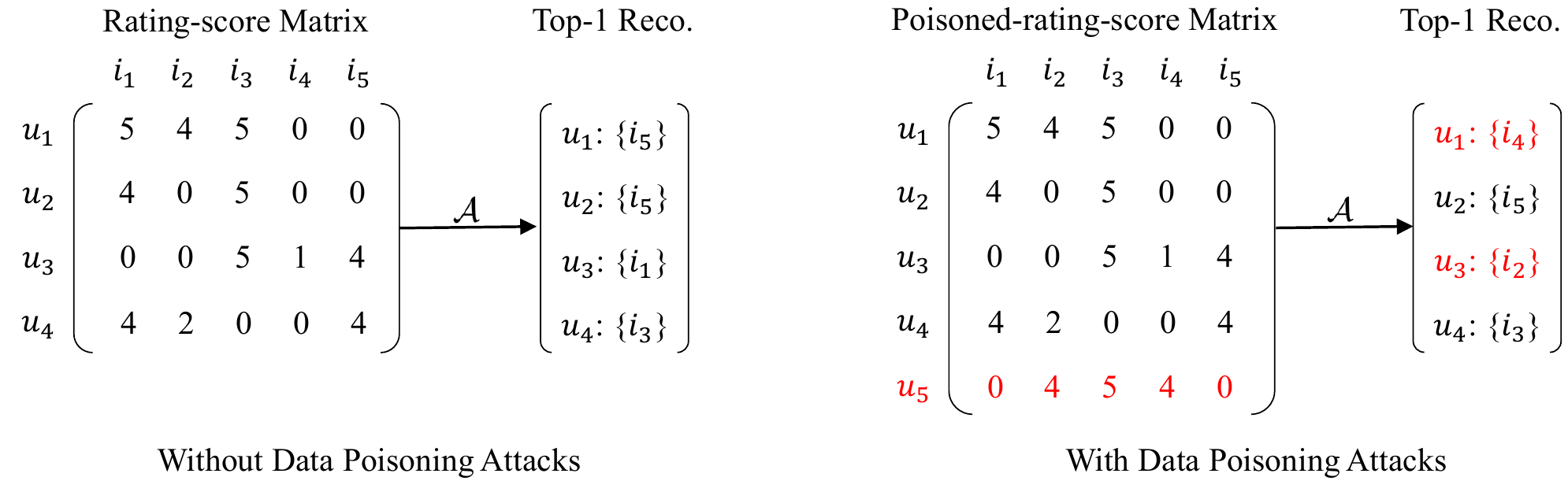}} 
\caption{\emph{Left}: a recommender system without data poisoning attacks.  \emph{Right}: an attacker manipulates the recommended items for  users $u_1$ and $u_3$ by injecting a fake user (i.e., $u_5$) into the system.}
\label{fig:attacks}
\end{figure*}

\subsection{Recommender Systems}
\myparatight{Rating-score matrix} Suppose we have $n$ users and $m$ items which are denoted as $\mathcal{U}=\{u_1, u_2, \cdots, u_n\}$ and $\mathcal{I}=\{i_1, i_2, \cdots, i_m\}$, respectively. We use $\mathbf{M}$ to denote the  rating-score matrix which has $n$ rows and $m$ columns, where a row corresponds to a user and a column corresponds to an item. Essentially, the rating-score matrix $\mathbf{M}$ captures the users' interests towards different items.  In particular, an entry $\mathbf{M}_{ui}$  represents the rating score that the user $u$ gave to the item $i$. For instance, the rating score could be an integer in the range $[1, 5]$, where $1$ is the lowest rating score and denotes that the item does not attract the user's interest, and $5$ is the largest rating score and denotes that the item attracts the user's interest substantially. We note that our method is applicable to any type of rating scores, e.g., binary, integer-valued, and continuous. $\mathbf{M}_{ui} = 0$ means that the user $u$ has not rated the item $i$ yet. For convenience, we denote by $\mathcal{R}$ the domain of a rating score including 0, i.e., $\mathbf{M}_{ui} \in \mathcal{R}$.  

\myparatight{Top-$N$ recommended items} A recommender system aims to help users discover new items that may arouse his/her interests. 
A recommender system algorithm takes the rating-score matrix $\mathbf{M}$ as input and recommends top-$N$ items to each user that he/she has not rated yet but is potentially interested in. 
For simplicity, we use $\mathcal{A}$ to denote a recommender system algorithm. Moreover, we use $\mathcal{A}(\mathbf{M},u)$ to denote the set of top-$N$ items recommended to  user $u$ when the recommender system  is built by $\mathcal{A}$ on $\mathbf{M}$.

\myparatight{Recommender system algorithms} Many algorithms have been proposed to build recommender systems such as Item-based Recommendation (IR)~\cite{sarwar2001item,linden2003amazon,argyriou2020microsoft}, Bayesian Personalized Ranking (BPR)~\cite{rendle2012bpr}, Matrix Factorization~\cite{koren2009matrix}, Neural Collaborative Filtering (NCF)~\cite{he2017neural}, and LightGCN~\cite{he2020lightgcn}. For instance, {IR} 
calculates the similarities between different items based on their rating scores, predicts users' missing rating scores using  such similarities, and recommends a user the $N$ items that the user has not rated yet but have the largest predicted rating scores. Due to its scalability, IR has been widely deployed in industry, e.g., Amazon~\cite{linden2003amazon}. 
According to recent benchmarks released by Microsoft~\cite{Recommenders_url}, BPR achieves state-of-the-art  performance, e.g., BPR even outperforms more complex algorithms such as NCF~\cite{he2017neural} and LightGCN~\cite{he2020lightgcn}.

\subsection{Data Poisoning Attacks}
Many studies~\cite{lam2004shilling,mobasher2007toward,li2016data,yang2017fake,fang2018poisoning,fang2020influence,huangdata2021} 
showed that recommender systems are not robust to data poisoning attacks (Section~\ref{related} discusses more details).  
In a data poisoning attack, an attacker creates fake users in a recommender system and assigns carefully crafted rating scores to them, such that the recommender system, which is built based on the rating scores of  genuine and fake users,  makes attacker-desired, arbitrary recommendations. 
For instance, a data poisoning attack could substantially degrade the performance of a recommender system; and a data poisoning attack could promote certain items (e.g., videos on YouTube and products on Amazon) via spoofing a recommender system  to recommend  them to many genuine users. \CR{Figure~\ref{fig:attacks} illustrates data poisoning attacks.}

We denote by $\mathbf{M}'$ the \emph{poisoned-rating-score matrix}.  A data poisoning attack  aims to reduce the intersection between  $\mathcal{A}(\mathbf{M},u)$ and $\mathcal{A}(\mathbf{M}',u)$ via carefully designing the rating scores of the fake users. Different attacks essentially assign different rating scores to the fake users.

\section{Problem Formulation}
\myparatight{Threat model}
We assume an attacker can inject fake users into a recommender system via registering and maintaining fake accounts~\cite{thomas2013trafficking}. 
We consider an attacker can inject at most $e$ fake users into a recommender system, e.g., because of limited resources to register and maintain fake accounts. However, we assume each fake user can arbitrarily rate as many items as the attacker desires. Moreover, we assume the attacker has whitebox access to the recommender system, e.g., the attacker has access to the rating scores of all genuine users as well as the recommender system algorithm and its parameters. In other words, we consider strong attackers, who can perform any data poisoning attacks. 

A poisoned-rating-score matrix $\mathbf{M}'$ extends the rating-score matrix  $\mathbf{M}$ by at most $e$ rows, which correspond to the rating scores of the at most $e$ fake users. Different data poisoning attacks essentially select different rating scores for the fake users and  result in different poisoned-rating-score matrix $\mathbf{M}'$. We use $\mathcal{L}(\mathbf{M},e)$ to denote the set of all possible poisoned-rating-score matrices when the clean rating-score matrix is $\mathbf{M}$ and the number of fake users is at most $e$. $\mathcal{L}(\mathbf{M},e)$ essentially denotes all possible data poisoning attacks with at most $e$ fake users.  Formally, we define $\mathcal{L}(\mathbf{M},e)$ as follows: 
\begin{align}
   \mathcal{L}(\mathbf{M},e) = \{\mathbf{M}' | \mathbf{M}'_{ui}=\mathbf{M}_{ui} \text{ and } \mathbf{M}'_{vi} \in \mathcal{R}, \nonumber\\ \forall u\in \mathcal{U}, v\in \mathcal{V}, i\in \mathcal{I} \}, 
\end{align}
where $\mathcal{R}$ is the domain of a rating score, $\mathcal{U}$ is the set of genuine users, $\mathcal{V}$ is the set of at most $e$ fake users (i.e., $|\mathcal{V}|\leq e$), and $\mathcal{I}$ is the set of items.

\begin{figure*}[!t]
	 \centering
{\includegraphics[width=0.9\textwidth]{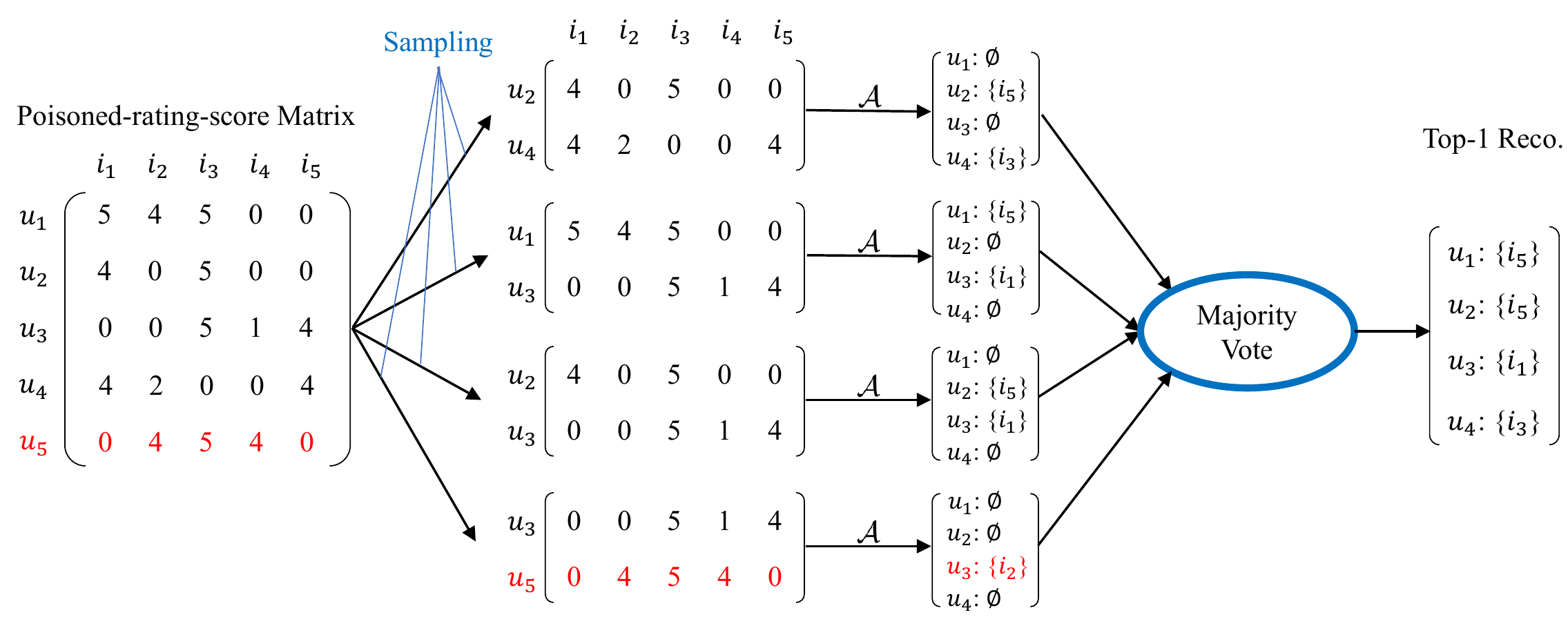}} 
\caption{Robustness of our ensemble recommender system against data poisoning attacks.}
\label{fig:overrview}
\end{figure*}

\myparatight{Provably robust recommender system algorithm}
We say a recommender system algorithm is provably robust against data poisoning attacks if a certain number of its recommended top-$N$ items for a user are provably unaffected by any data poisoning attacks. 
Specifically, given a set of items $\mathcal{I}_u$, we say a recommender system algorithm $\mathcal{A}$ is provably robust for a user $u$ if it satisfies the following property: 
\begin{align}
\label{intersectionsizeproperty}
     \min_{\mathbf{M}' \in \mathcal{L}(\mathbf{M},e)} |\mathcal{I}_{u} \cap \mathcal{A}(\mathbf{M}',u)|\geq r,
\end{align}
where $\mathcal{L}(\mathbf{M},e)$ is the set of all possible poisoned-rating-score matrices (i.e., all possible data poisoning attacks with at most $e$ fake users),  $|\mathcal{I}_{u} \cap \mathcal{A}(\mathbf{M}',u)|$ is the size of  the intersection between $\mathcal{I}_{u}$ and the top-$N$ items recommended to $u$ by $\mathcal{A}$ under attacks, and $r$ is called \emph{certified intersection size}. Note that $r$ may depend on the user $u$ and the number of fake users $e$, but we omit its explicit dependency on $u$ and $e$ for simplicity.  

When $\mathcal{I}_u$ is the set of top-$N$ items recommended to user $u$ by $\mathcal{A}$ under no attacks, i.e., $\mathcal{I}_u=\mathcal{A}(\mathbf{M},u)$, our provable robustness means that at least $r$ of the $N$ items in $\mathcal{A}(\mathbf{M},u)$ are still recommended to $u$ under any attacks with at most $e$ fake users. When $\mathcal{I}_u$ is the set of ground truth test items for $u$ (i.e., the set of items that $u$ is indeed interested in), our provable robustness means that at least $r$ of the ground truth test items are recommended to $u$ under attacks. As we will discuss more details in experiments,   $r$ in the latter case can be used to derive a lower bound of recommendation performance such as Precision@$N$,  Recall@$N$, and F1-Score@$N$ under any data poisoning attacks. We formally define a provably robust recommender system algorithm as follows: 
\begin{definition}[$(e,r)$-Provably Robust Recommender System]
Suppose the number of fake users is at most $e$ and a data poisoning attack can arbitrarily craft the rating scores for the fake users. We say a recommender system algorithm $\mathcal{A}$ is $(e,r)$-provably robust for a user $u$ if its certified intersection size for $u$ is at least $r$, i.e., if Equation~(\ref{intersectionsizeproperty}) is satisfied.
\end{definition}

\section{Our PORE}
\label{method}
\CR{We first give an overview of our PORE, then define our ensemble recommender system and show it is $(e,r)$-provably robust, and finally describe our algorithm to compute the certified intersection size $r$ for each user. }

\subsection{\CR{Overview}}
Our key intuition is that, when the number of fake users is bounded, a random subset of a small number of users is likely to only include genuine users. Therefore, a recommender system built using the rating scores of such a random subset of users is not affected by fake users. 
Based on the intuition,  PORE builds multiple recommender systems using random subsets of users and takes majority vote among them to recommend items for users.

Specifically, we create multiple submatrices from a  rating-score matrix, where each submatrix contains the rating scores of $s$ different randomly selected users, i.e., $s$ rows randomly selected from the   rating-score matrix. Then, we use an arbitrary base algorithm  to build a recommender system (called \emph{base recommender system}) on each submatrix and use it to recommend items for users. Finally, we build an \emph{ensemble recommender system}, which takes a majority vote among the base recommender systems as the final recommended items for each user. Figure~\ref{fig:overrview} shows our ensemble recommender system and  its robustness against data poisoning attacks.

Next, we first formally define our ensemble recommender system in PORE. Then, we show   PORE is provably robust against data poisoning attacks. In particular, given an arbitrary set of items $\mathcal{I}_u$ for a user $u$, we prove that at least $r$ of the top-$N$ items recommended to $u$ by PORE are guaranteed to be among $\mathcal{I}_u$ under any data poisoning attacks.  Moreover, we derive the certified intersection size $r$. Finally, we design an algorithm to compute the certified intersection sizes $r$ for all users simultaneously.

\subsection{Our Ensemble Recommender System}
\vspace{-2mm}
\myparatight{Item probability $p_i$} Given a  rating-score matrix $\mathbf{M}$, 
we randomly sample a submatrix with $s$ rows of $\mathbf{M}$, i.e., the submatrix consists of the rating scores of $s$ different randomly selected users. 
For convenience, we denote by $\mathbf{X}$ the sampled submatrix. 
Then, we use an arbitrary base algorithm $\mathcal{A}$ to build a base recommender system on the sampled submatrix $\mathbf{X}$. 
We use the base recommender system to recommend top-$N'$ items (denoted as $\mathcal{A}(\mathbf{X}, u)$) to each user $u$ in the sampled submatrix. Since the submatrix is randomly sampled, the recommended top-$N'$ items are also random. To consider such randomness, we denote by $p_{i}$ the probability that item $i$ is recommended to a user $u$. Formally, we define $p_{i}$ as follows: $p_{i} = \text{Pr}(i \in \mathcal{A}(\mathbf{X},u))$. We call $p_i$ \emph{item probability}. 

Note that we consider $\mathcal{A}(\mathbf{X},u)$ is an empty set when $u$ is not in the sampled submatrix $\mathbf{X}$, as many recommender systems make recommendations for users in the rating-score matrix that was used to build the recommender systems.  
Since a base recommender system is built using $s$ rows of the rating-score matrix $\mathbf{M}$, we can build ${n \choose s}$ base recommender systems in total, where $n$ is the total number of users/rows in $\mathbf{M}$. Essentially, our item probability $p_i$ is the fraction of the ${n \choose s}$ base recommender systems that recommend item $i$ to the user $u$.

\myparatight{Poisoned item probability $p_{i}'$} Under data poisoning attacks, the rating-score matrix $\mathbf{M}$ becomes a poisoned version $\mathbf{M}^{\prime}$. 
We denote by $\mathbf{Y}$ a random submatrix with $s$ rows sampled from $\mathbf{M}^{\prime}$. Moreover, we define \emph{poisoned item probability} $p_{i}' =  \text{Pr}(i \in \mathcal{A}(\mathbf{Y},u))$, i.e., $p_{i}'$ is the probability that the item $i$ is in the top-$N'$ items recommended to $u$ when the base recommender system is built based on  $\mathbf{Y}$.

\myparatight{Our ensemble recommender system $\mathcal{T}$} Our ensemble recommender system (denoted as $\mathcal{T}$) recommends the top-$N$ items with the largest item probabilities $p_{i}$'s for a user $u$. Essentially, our ensemble recommender system takes a majority vote among the ${n \choose s}$ base recommender systems. In particular, our ensemble recommender system essentially recommends the top-$N$ items that are the most frequently recommended by the ${n \choose s}$ base recommender systems to a user. 
For simplicity, we use $\mathcal{T}(\mathbf{M},u)$ to denote the set of top-$N$ items recommended to the user $u$ by our ensemble recommender system $\mathcal{T}$ when the rating-score matrix is $\mathbf{M}$. Note that $N'$ and $N$ are different parameters, i.e., $N'$ is the number of items recommended to a user by a base recommender system while $N$ is the number of items recommended to a user by our ensemble recommender system. We will explore their impact on  our ensemble recommender system in experiments. 

 Under data poisoning attacks, our ensemble recommender system $\mathcal{T}$ uses the poisoned item probabilities to make recommendations. Specifically, $\mathcal{T} (\mathbf{M}', u)$ is the set of top-$N$ items with the largest poisoned item probabilities $p_{i}'$'s that are recommended to $u$ by  $\mathcal{T}$ under attacks. 

\subsection{Deriving the Certified Intersection Size}
\label{sec:derive_certified_performance}

We show that our ensemble recommender system algorithm $\mathcal{T}$ is $(e,r)$-provably robust. In particular, for any given number of fake users $e$,  we can derive the certified intersection size $r$ of  $\mathcal{T}$ for any user $u$. Specifically, given an arbitrary set of items $\mathcal{I}_u$, 
   we show that at least $r$ of the recommended top-$N$ items $\mathcal{T} (\mathbf{M}', u)$ are in $\mathcal{I}_u$ when there are  at most $e$ fake users, no matter what rating scores they use. \neil{Later, we can replace $\mathcal{I}_u$ as our desired sets of items.} Formally, we aim to show the following: $\min_{\mathbf{M}' \in \mathcal{L}(\mathbf{M},e)} |\mathcal{I}_{u} \cap \mathcal{T} (\mathbf{M}',u)| \geq r$, 
where $\mathcal{L}(\mathbf{M},e)$ is the set of all possible poisoned-rating-score matrices  and denotes all data poisoning attacks with at most $e$ fake users. 
Next, we first overview our main idea  and then show our theorem.

\myparatight{Overview of our derivation} Our proof is based on the \emph{law of contraposition}. Suppose we have a statement: $U \longrightarrow V$, whose contraposition is $\neg V \longrightarrow \neg U$. The law of contraposition means that a statement is true if and only if its contraposition is true. We define a predicate $V$ as $V: |\mathcal{I}_{u} \cap \mathcal{T} (\mathbf{M}',u)| \geq r$. The predicate $V$ is true if at least $r$ of the top-$N$ items recommended by $\mathcal{T}$ for $u$ are   in $\mathcal{I}_{u}$ when the poisoned-rating-score matrix is a given $\mathbf{M}'$. Then, we derive a necessary condition (denoted as $\neg U$)  for $\neg V$ to be true, i.e., we have $\neg V \longrightarrow \neg U$.   By the law of contraposition, we know that $V$ is true if  $U$ is true. 
Roughly speaking, \neil{$U$} means that the $r$th largest poisoned item probability for items in $\mathcal{I}_u$ is \neil{larger} than the $(N - r +1)$th largest poisoned item probability for items in $\mathcal{I}\setminus\mathcal{I}_u$, when the poisoned-rating-score matrix is $\mathbf{M}'$. 

The challenge in deriving the condition \neil{for $U$ to be true} is that it is hard to compute the poisoned item probabilities $p_{i}^{\prime}$'s due to the complexity of recommender system. To address the challenge, we resort to derive a lower bound of  $p_{i}^{\prime}$ for each $i\in \mathcal{I}_u$ and an upper bound of $p_j^{\prime}$ for each $j\in \mathcal{I} \setminus \mathcal{I}_u$. In particular, we derive the lower/upper bounds of poisoned item probabilities using  lower/upper bounds of item probabilities. We consider lower/upper bounds of item probabilities instead of their exact values, because it is challenging to compute them exactly. Suppose we have a lower bound $\underline{p_i}$ of $p_i$ for each $i \in \mathcal{I}_u$ and an upper bound $\overline{p}_j$ of $p_j$ for each $j \in \mathcal{I} \setminus \mathcal{I}_u$, i.e., we have the following: 
\begin{align}
\label{probability_upper_low_bound_theorem_1}
 p_i \geq  \underline{p_i}  \text{ and } p_j \leq \overline{p}_j, 
\end{align}
where $i \in \mathcal{I}_u$ and $j \in \mathcal{I} \setminus \mathcal{I}_u$. 
In the next section, we design an algorithm to estimate such lower/upper bounds of item probabilities. 
Given the lower/upper bounds $\underline{p_i}$ and $\overline{p}_j$, we derive a lower bound of $p_i^{\prime}$ for each $i\in \mathcal{I}_u$ and an upper bound of $p_j^{\prime}$ for each $j\in \mathcal{I} \setminus \mathcal{I}_u$ via  a variant Neyman-Pearson Lemma~\cite{neyman1933ix} that we develop. Our variant  is applicable to multiple functions, while the standard Neyman-Pearson Lemma is only applicable to one function.

Next, we show our intuition to derive the upper and lower bounds (please refer to the proof of Theorem~\ref{theorem_certified_size} for formal analysis) of the poisoned item probabilities. 
We denote by $\Phi$ the union of the domain spaces of $\mathbf{X}$ and $\mathbf{Y}$, i.e., each element in $\Phi$ is a submatrix with $s$ rows sampled from $\mathbf{M}$ or $\mathbf{M}^{\prime}$. Our idea is to find subsets in  $\Phi$ such that we can apply our variant of the Neyman-Pearson Lemma to derive the upper/lower bounds of the poisoned item probabilities. Moreover, the upper/lower bounds are related to the probabilities that the random submatrices $\mathbf{X}$ and $\mathbf{Y}$ are in the subsets, which can be easily computed. 
We denote by $P$ (or $A$)  the set of submatrices sampled from $\mathbf{M}$ (or $\mathbf{M}^{\prime}$) that include the user $u$. 

\myparatight{Deriving a lower bound of $p_i^{\prime}$, $ i \in \mathcal{I}_u$} We can find a subset $C_i \subseteq P$ such that we have $\text{Pr}(\mathbf{X} \in C_i) = p_i^* \triangleq \frac{\lfloor \underline{p_i} \cdot {n \choose s}\rfloor}{{n \choose s}}$. 
Note that we can find such a subset because $p_i^*$ is an integer multiple of $1/{n \choose s}$. Then, via our variant of the Neyman-Pearson Lemma, we can derive a lower bound of $p_i^{\prime}$ using the probability that the random submatrix $\mathbf{Y}$ is in the subset $C_i$, i.e., we have:  $p_i^{\prime} \geq \text{Pr}(\mathbf{Y} \in C_i), \ \forall i\in \mathcal{I}_u$. 

\myparatight{Deriving an upper bound of $p_j^{\prime}$, $j \in \mathcal{I} \setminus \mathcal{I}_u$} We first find a subset $C_j^{\prime} \subseteq P$ such that we have the following: $\text{Pr}(\mathbf{X} \in C_j^{\prime}) = \overline{p}^{*}_j = \frac{\lceil \overline{p}_j \cdot {n \choose s}\rceil}{{n \choose s}}$. 
Given the subset $C_j^{\prime}$, we further define a subset $C_j = C_j^{\prime} \cup (A \setminus P)$. Then, based on our variant of the Neyman-Pearson Lemma,  we  derive an upper bound of $p_j^{\prime}$ using the probability that the random submatrix $\mathbf{Y}$ is in the subset $C_j$, i.e., we have:
\begin{align}
\label{upper_bound_single_mainbody}
    p_j^{\prime} \leq \text{Pr}(\mathbf{Y} \in C_j).
\end{align} 
In our derivation, we further improve the upper bound via jointly considering multiple items in $\mathcal{I} \setminus \mathcal{I}_u$. Suppose $\mathcal{H}_{c} \subseteq \mathcal{I}\setminus\mathcal{I}_u$ is a set of $c$ items.  We denote
$\overline{p}_{\mathcal{H}_{c}}=\sum_{j \in \mathcal{H}_{c}}\overline{p}_{j}$.
Then, we can find a subset $C_{\mathcal{H}_{c}}^{\prime}$ such that we have the following:  $\text{Pr}(\mathbf{X}\in C^{\prime}_{\mathcal{H}_{c}}) = \overline{p}^{*}_{\mathcal{H}_{c}} \triangleq \frac{\lceil (\overline{p}_{\mathcal{H}_{c}}/N')\cdot {n \choose s} \rceil}{ {n \choose s}  }$. 
Given the subset $C_{\mathcal{H}_{c}}^{\prime}$, we further define a subset $C_{\mathcal{H}_{c}} = C_{\mathcal{H}_{c}}^{\prime} \cup (A \setminus P)$.
Then,  we have the following upper bound for the smallest poisoned item probability in the set $\{p_j^{\prime} | j \in \mathcal{H}_{c}\}$:
\begin{align}
\label{upper_bound_combined_mainbody}
    \min_{j \in \mathcal{H}_{c}} p_{j}^{\prime} \leq \frac{N' \cdot \text{Pr}(\mathbf{Y}\in C_{\mathcal{H}_{c}})}{c}.
\end{align}
Finally, we can combine the upper bounds in Equation~(\ref{upper_bound_single_mainbody}) and~(\ref{upper_bound_combined_mainbody}) to derive an upper bound of the $(N-r+1)$th largest poisoned item probability in $\mathcal{I} \setminus \mathcal{I}_u$. Note that we don't jointly consider multiple items in $\mathcal{I}_u$ when deriving the lower bounds for poisoned item probabilities in $\mathcal{I}_u$ because it does not improve the lower bounds.

Formally, we have the following theorem: 
\begin{theorem}
\label{theorem_certified_size}
Suppose we have a rating-score matrix $\mathbf{M}$, a user $u$, 
 and an arbitrary set of $k$ items $\mathcal{I}_u=\{\mu_1,\mu_2,\cdots,\mu_k\}$.  
Furthermore, we have a lower bound  $\underline{p_{i}}$ for each $i \in \mathcal{I}_u$ and an upper bound $\overline{p}_{j}$ for each $j \in \mathcal{I}\setminus \mathcal{I}_u$ that satisfy  Equation~(\ref{probability_upper_low_bound_theorem_1}). 
Without loss of generality, we assume $\underline{p_{\mu_1}}\geq \underline{p_{\mu_2}} \geq \cdots \geq \underline{p_{\mu_k}}$.
Under any data poisoning attacks with at most  $e$ fake users, we have the following guarantee: $\min_{\mathbf{M}' \in \mathcal{L}(\mathbf{M},e)} |\mathcal{I}_{u} \cap \mathcal{T} (\mathbf{M}', u)| \geq r$,  
where $r$ is the solution to the following optimization problem or $0$ if it does not have a solution:
\begin{align}
\label{optimization_problem_theorem_1}
     r & =\argmax_{r'\in\{1,2,\cdots,\min(k, N)\}} r' \nonumber \\
     \text{s.t. } & 
           \underline{p^{*}_{\mu_{r'}}} >   \min ( \min_{c=1}^{N-r' +1}\frac{ N' \cdot(\overline{p}^{*}_{\mathcal{H}_{c}} + \sigma)}{c}, 
     \overline{p}^{*}_{v_1}+ \sigma),  
\end{align}
where $n' = n + e$, $\sigma = \frac{s}{n'} \cdot \frac{{n' \choose s}}{{n \choose s}} -\frac{s}{n}$ ,  $ \underline{p^{*}_{\mu_{r'}}} =\frac{\lfloor \underline{p_{\mu_{r'}}} \cdot {n \choose s}\rfloor}{{n \choose s}}$, $\mathcal{H}_{c}=\{v_1, v_2, \cdots, v_{c}\}$ is the set of $c$ items that have the smallest item-probability upper bounds among the $N-r'+1$ items with the largest item-probability upper bounds in $\mathcal{I}\setminus \mathcal{I}_u $, $v_1$ is the item in $\mathcal{H}_1$, $\overline{p}_{\mathcal{H}_{c}}=\sum_{j \in \mathcal{H}_{c}}\overline{p}_{j}$, $ \overline{p}^{*}_{\mathcal{H}_{c}} = \frac{\lceil (\overline{p}_{\mathcal{H}_{c}}/N')\cdot {n \choose s} \rceil}{ {n \choose s}  }$, and $\overline{p}^{*}_{v_1} = \frac{\lceil \overline{p}_{v_1} \cdot {n \choose s}\rceil}{{n \choose s}} $. 
\end{theorem}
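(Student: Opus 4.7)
The plan is to establish $|\mathcal{I}_u \cap \mathcal{T}(\mathbf{M}', u)| \geq r$ by the law of contraposition, as the overview suggests: I identify a condition on the clean item-probability bounds that, whenever it holds for some $r'$, forces at least $r'$ of the top-$N$ items recommended by $\mathcal{T}$ to lie in $\mathcal{I}_u$ under every $\mathbf{M}' \in \mathcal{L}(\mathbf{M}, e)$, and then take the largest such $r'$ as the certified intersection size.

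A direct counting argument on the top-$N$ set reduces the event to a probability inequality. Let $L$ denote the $r'$-th largest $p_i'$ over $i \in \mathcal{I}_u$ and $q$ the $(N-r'+1)$-th largest $p_j'$ over $j \in \mathcal{I} \setminus \mathcal{I}_u$; if $L > q$ then at least $r'$ items of $\mathcal{I}_u$ must be in top-$N$, because otherwise $\geq N-r'+1$ items of $\mathcal{I} \setminus \mathcal{I}_u$ would be in top-$N$ and force $q \geq L$. The sorted hypothesis $\underline{p_{\mu_1}} \geq \cdots \geq \underline{p_{\mu_k}}$ combined with the per-item translation $p_i' \geq \alpha \underline{p_i^{*}}$ (derived below, with $\alpha = {n \choose s}/{n' \choose s}$ the change-of-measure factor between the clean and poisoned sampling distributions) yields $L \geq \alpha \underline{p^{*}_{\mu_{r'}}}$, so it suffices to match this against an upper bound on $q$.

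Translating clean bounds to poisoned bounds, I work on the union space $\Phi$ of size-$s$ subsets of $\{1,\ldots,n\}$ and of $\{1,\ldots,n'\}$, and set $P \subseteq A$ to be the submatrices containing $u$ built from $\mathbf{M}$ and from $\mathbf{M}'$ respectively. The adversary's freedom is confined to $A \setminus P$, so $p_j' = [|E_j \cap P| + |E_j \cap (A \setminus P)|]/{n' \choose s}$; minimizing the $A\setminus P$ contribution for $i \in \mathcal{I}_u$ gives $p_i' \geq \alpha \underline{p_i^{*}}$, and maximizing it for $j \in \mathcal{I} \setminus \mathcal{I}_u$ gives $p_j' \leq \alpha(\overline{p}_j + \sigma)$, where $\alpha \sigma = |A\setminus P|/{n' \choose s} = s/n' - (s/n)\alpha$ simplifies to exactly the stated $\sigma$. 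The Neyman-Pearson variant promised in the overview justifies these extremal constructions by exhibiting subsets $C_i$ and $C_j = C_j' \cup (A \setminus P)$ of $\Phi$ for which $\Pr(\mathbf{Y} \in \cdot)$ attains the bound; the floor/ceiling rounding in $\underline{p_i^{*}}$ and $\overline{p}_j^{*}$ is precisely what makes these sets realizable as unions of atoms in the discrete measure on $\Phi$. To sharpen the upper bound on $q$ I exploit that each submatrix in $A \setminus P$ recommends at most $N'$ items, so it can belong to at most $N'$ of the events $\{j \in \mathcal{A}(\cdot, u)\}$ for $j \in \mathcal{I}\setminus \mathcal{I}_u$. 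Summing the per-item argument over any size-$c$ subset $\mathcal{H}_c \subseteq \mathcal{I}\setminus \mathcal{I}_u$ yields $\sum_{j \in \mathcal{H}_c} p_j' \leq \alpha \overline{p}_{\mathcal{H}_c} + N' \alpha \sigma$, and the pigeonhole $\min \leq $ average gives $\min_{j \in \mathcal{H}_c} p_j' \leq \alpha N'(\overline{p}^{*}_{\mathcal{H}_c} + \sigma)/c$. Choosing $\mathcal{H}_c$ to be the $c$ smallest-$\overline{p}$ items among the top $N-r'+1$ of $\mathcal{I}\setminus \mathcal{I}_u$ aligns this with a final counting argument: if $L$ exceeds both $\alpha(\overline{p}^{*}_{v_1} + \sigma)$ and $\alpha N'(\overline{p}^{*}_{\mathcal{H}_c} + \sigma)/c$ for some $c$, then items outside the top $N-r'+1$ by $\overline{p}$ contribute nothing to $\{j : p_j' \geq L\}$, items in $\mathcal{H}_c$ contribute at most $c-1$, items in the remainder of the top $N-r'+1$ contribute at most $N-r'+1-c$, totalling $\leq N-r'$ and contradicting $q \geq L$. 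Dividing through by $\alpha$ and infimizing over $c$ recovers exactly the constraint in Equation~(\ref{optimization_problem_theorem_1}).

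I expect the main obstacle to be the Neyman-Pearson variant itself: the classical lemma treats one event across two measures, whereas here I must simultaneously extremize the probabilities of many events (one per item) across two sampling distributions while honoring both the combinatorial constraint that only the $|A\setminus P|$ adversarial atoms are controllable and the $N'$-per-submatrix budget that underlies the improvement from the single-item to the joint bound. Once the variant and its sum form are in place, identifying the tightest feasible $r'$ reduces to the algebraic manipulation encoded in Equation~(\ref{optimization_problem_theorem_1}), with $r = 0$ serving as the trivial return when no $r' \geq 1$ satisfies the strict inequality.
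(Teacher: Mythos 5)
Your overall architecture is the paper's: law of contraposition, the counting reduction to "$r'$-th largest poisoned probability in $\mathcal{I}_u$ versus $(N-r'+1)$-th largest in $\mathcal{I}\setminus\mathcal{I}_u$," the decomposition $P\subseteq A$ with the adversary confined to $A\setminus P$, the identification $\tau\sigma=\text{Pr}(\mathbf{Y}\in A\setminus P)$, the multi-function Neyman--Pearson variant (which in this uniform discrete measure is indeed just atom counting, as you note), and the $N'$-per-submatrix budget that yields $\sum_{j\in\mathcal{H}_c}p_j'\leq \tau N'(\overline{p}^{*}_{\mathcal{H}_c}+\sigma)$. All of those steps check out and match Appendix~\ref{proof_of_certified_theorem}.

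The gap is in your final assembly. Your three-way partition requires $L$ to exceed \emph{both} $\tau(\overline{p}^{*}_{v_1}+\sigma)$ (to kill the items outside the top $N-r'+1$ by $\overline{p}$) \emph{and} $\tau N'(\overline{p}^{*}_{\mathcal{H}_c}+\sigma)/c$ for some $c$ (to lose one item inside $\mathcal{H}_c$). That establishes the conclusion only under $\underline{p^{*}_{\mu_{r'}}}>\max\bigl(\overline{p}^{*}_{v_1}+\sigma,\ \min_{c}N'(\overline{p}^{*}_{\mathcal{H}_c}+\sigma)/c\bigr)$, whereas the theorem's constraint is $>\min(\cdot,\cdot)$ --- a genuinely weaker hypothesis (e.g., with $N'=1$ and all $\overline{p}_{v_i}$ equal to $p$, the joint bound for $c>1$ is roughly $p+\sigma/c<p+\sigma$, so the theorem certifies cases your argument does not). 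The fix is to treat the two families as \emph{independent} upper bounds on $q=\max_{\mathcal{V}_{r'}}\min_{j\in\mathcal{V}_{r'}}p_j'$, each sufficient on its own: the single-item bound alone already gives $q<L$ (every $(N-r'+1)$-subset contains some $j$ with $\overline{p}_j\leq\overline{p}_{v_1}$, hence $p_j'<L$), and the joint bound alone also gives $q<L$ (for any $(N-r'+1)$-subset $\mathcal{V}$, its $c$ smallest-$\overline{p}$ elements $\mathcal{H}$ satisfy $\overline{p}_{\mathcal{H}}\leq\overline{p}_{\mathcal{H}_c}$ by an exchange argument, so $\min_{j\in\mathcal{V}}p_j'\leq\frac{1}{c}\sum_{j\in\mathcal{H}}p_j'<L$). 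Since $q$ is bounded by each quantity separately, it is bounded by their minimum, which is exactly the constraint in Equation~(\ref{optimization_problem_theorem_1}); your claim that the conjunctive counting "recovers exactly" that constraint is not correct as written.
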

\begin{proof}
See Appendix~\ref{proof_of_certified_theorem}. 
\end{proof}

\subsection{Computing the Certified Intersection Size}
Given a base algorithm $\mathcal{A}$, a  rating-score matrix $\mathbf{M}$, a set of genuine users $\mathcal{U} = \{u_1, u_2, \cdots, u_n\}$, a set of items $\mathcal{I}_{u}$ for each genuine user $u$, the maximum number of fake users $e$, and a sampling size $s$, we aim to compute the certified intersection size of our ensemble recommender system for each user in $\mathcal{U}$. The key to compute the certified intersection size is to solve $r$ in the optimization problem in Equation~(\ref{optimization_problem_theorem_1}). Specifically, given a user $u$, the key challenge to solve the optimization problem in Equation~(\ref{optimization_problem_theorem_1}) is how to estimate the 
item-probability lower bounds $\underline{p_i}$ for $\forall i \in \mathcal{I}_u$ and upper bounds $\overline{p}_j$ for $\forall j \in \mathcal{I}\setminus \mathcal{I}_{u}$. One naive way is to build ${n \choose s}$ base recommender systems and compute the exact item probabilities. However, such approach is computationally infeasible as ${n \choose s}$ is huge. To address the challenge, we design an algorithm to estimate lower/upper bounds of item probabilities via building $T << {n \choose s}$ base recommender systems. Next, we introduce  estimating the lower/upper bounds of the item probabilities,  solving $r$ using the estimated item-probability bounds, and our complete algorithm.

\myparatight{Estimating the item-probability lower/upper bounds} We randomly sample $T$ submatrices from $\mathbf{M}$, where each submatrix contains  $s$ rows of $\mathbf{M}$. For simplicity, we denote them as $\Gamma_1, \Gamma_2, \cdots, \Gamma_T$. Then, we build a base recommender system for each submatrix $\Gamma_{t}$ using the base  algorithm $\mathcal{A}$, where $t=1, 2, \cdots, T$. Given a user $u$, we use each base recommender system to recommend top-$N'$ items for the user. We denote by $\mathcal{A}(\Gamma_t, u)$  the set of top-$N'$ items recommended to the user $u$ by the base recommender system built on the submatrix $\Gamma_t$. Note that  $\mathcal{A}(\Gamma_t, u)$ is empty if the user $u$ is not in the submatrix $\Gamma_t$. We denote by $T_i$ the frequency of an item $i$  among the recommended top-$N'$ items of the $T$ base recommender systems, i.e., $T_i$ is the number of base recommender systems whose top-$N'$ recommended items for $u$ include $i$. Based on the definition of  item probability $p_i$, the frequency $T_i$ follows a binomial distribution with parameters $T$ and $p_i$,  i.e., we have the following: $\text{Pr}(T_i = t) = {T \choose t} \cdot p_i^t \cdot (1 - p_i)^{T - t}$, 
where $t=0, 1,  \cdots, T$. 
 Our goal is to estimate a lower or upper bound of $p_i$ given $T_i$ and $T$. This is essentially a \emph{binomial proportion confidence interval} estimation problem. Therefore, we can leverage the standard Clopper-Pearson method~\cite{clopper1934use} to estimate a lower or upper bound of $p_i$ from a given $T_i$ and $T$. Formally, we have the following:
\begin{align}
\label{cp_lower_bound}
    \underline{p_i} & = \text{Beta}(\frac{\alpha_u}{m}; T_i, T-T_i+1), i\in \mathcal{I}_u, \\
    \label{cp_upper_bound}
    \overline{p}_j  &= \text{Beta}(1-\frac{\alpha_u}{m}; T_j, T-T_j+1), j \in \mathcal{I}\setminus \mathcal{I}_{u},  
\end{align}
where  $1 - \alpha_u/m$ is the confidence level for estimating the lower/upper bound of one item probability, $m$ is the total number of items, and $\text{Beta}(\beta;\varsigma, \vartheta)$ is the $\beta$th quantile of the Beta distribution with shape parameters $\varsigma$ and $\vartheta$. Based on  \emph{Bonferroni correction} in statistics, the \emph{simultaneous confidence level} of estimating the lower/upper bounds of the $m$ item probabilities is at least $1 - \alpha_u$. Given an item set $\mathcal{H}_{c}$ defined in Theorem~\ref{theorem_certified_size}, we can estimate $\overline{p}_{\mathcal{H}_{c}}$ as  $\overline{p}_{\mathcal{H}_{c}} = \min(\sum_{j \in \mathcal{H}_{c}} \overline{p}_{j}, N' - \sum_{i \in \mathcal{I}_u} \underline{p_i} )$,  where both $\sum_{j \in \mathcal{H}_{c}} \overline{p}_{j}$ and $N' - \sum_{i \in \mathcal{I}_u} \underline{p_i}$ are upper bounds of ${p}_{\mathcal{H}_{c}}$, and we use the smaller one.

\begin{algorithm}[tb]
   \caption{\textsc{BinarySearch}}
   \label{alg:binary_search}
\begin{algorithmic}
   \STATE {\bfseries Input:} $e$, $s$, $N'$, $N$, $ \mathcal{I}_{u},\{\underline{p_i}|i\in \mathcal{I}_{u}\}$, and $ \{\overline{p}_j|j\in \mathcal{I}\setminus\mathcal{I}_{u}\}$
   \STATE {\bfseries Output:}  $r_u$ \\
   $low, high \gets 1, \min(|\mathcal{I}_u|,N)$ \\
   \WHILE{$low < high $}
   \STATE $r' =\lceil (low+high)/2 \rceil$ \\
   \IF{\textsc{VerifyConstraint($r',e, s, N', N,\mathcal{I}_{u},\{\underline{p_i}|i\in \mathcal{I}_{u}\},\{\overline{p}_j|j\in \mathcal{I}\setminus\mathcal{I}_{u}\}$)} $== 1$}
   \STATE $low \gets r' $ \\
   \ELSE 
   \STATE $high \gets r' - 1$ \\
   \ENDIF
   \ENDWHILE
    \IF{\textsc{VerifyConstraint($r',e, s, N', N,\mathcal{I}_{u},\{\underline{p_i}|i\in \mathcal{I}_{u}\},\{\overline{p}_j|j\in \mathcal{I}\setminus\mathcal{I}_{u}\}$)} $== 1$}
   \STATE return $ r' $ \\
   \ELSE 
   \STATE return $ 0$ \\
   \ENDIF
\end{algorithmic}
\end{algorithm}

\myparatight{Solving the optimization problem}  We note that  Equation~(\ref{optimization_problem_theorem_1}) has the following property: its left-hand side and right-hand side respectively decreases and increases as $r'$ increases. Thus, given the estimated item-probability bounds, we can efficiently solve the optimization problem in Equation~(\ref{optimization_problem_theorem_1}) via binary search to obtain $r_u$ for each user $u$. Algorithm~\ref{alg:binary_search} shows our \textsc{BinarySearch} algorithm. The function \textsc{VerifyConstraint}  verifies whether the constraint in Equation~(\ref{optimization_problem_theorem_1}) is satisfied for a given $r'$ and returns 1 if so.

\begin{algorithm}[tb]
   \caption{\textsc{Compute $r$}}
   \label{alg:certify}
\begin{algorithmic}
   \STATE {\bfseries Input:} $\mathbf{M}$, $s$, $T$, $\mathcal{A}$, $N'$, $\alpha$, $e$, $N$, $\mathcal{U}$, and $\{\mathcal{I}_u| u\in \mathcal{U}\}$
   \STATE {\bfseries Output:}  $r_u$ for each user $u\in \mathcal{U}$ \\
   $\Gamma_1, \Gamma_2,\cdots,\Gamma_T \gets  \textsc{RandomSample}(\mathbf{M},s)$ \\
   \FOR{$u$ {\bfseries in} $\mathcal{U}$}
   \STATE counts$[i] \gets \sum_{t=1}^{T}\mathbb{I}(i \in \mathcal{A}(\Gamma_t, u)), i\in \{1,2,\cdots,m\} $ \\
   \STATE $\underline{p_{i}}, \overline{p}_{j} \gets \textsc{BoundEst}(\text{counts},\frac{\alpha}{n}), i \in \mathcal{I}_{u}, j \in \mathcal{I}\setminus \mathcal{I}_{u}$ \\
   \STATE $r_{u} = \textsc{BinarySearch}(e, s, N', N,\mathcal{I}_{u},\{\underline{p_i}|i\in \mathcal{I}_{u}\},\{\overline{p}_j|j\in \mathcal{I}\setminus\mathcal{I}_{u}\})$
   \ENDFOR
   \STATE \textbf{return}
 $\{r_u|u\in \mathcal{U}\}$
\end{algorithmic}
\end{algorithm}

\myparatight{Complete algorithm} Algorithm~\ref{alg:certify} shows our complete algorithm to compute the certified intersection size $r_u$ for each user $u\in \mathcal{U}=\{u_1, u_2, \cdots, u_n\}$.
The function \textsc{RandomSample} randomly samples $T$ submatrices,   each of which contains $s$ rows sampled from $\mathbf{M}$ uniformly at random. 
\textsc{BoundEst} estimates the item-probability bounds with a confidence level $1-\frac{\alpha}{n}$, i.e., $\alpha_u = \frac{\alpha}{n}$, for each user $u \in \mathcal{U}$,  based on Equation~(\ref{cp_lower_bound})~-~(\ref{cp_upper_bound}). 
 \textsc{BinarySearch} solves the optimization problem in Equation~(\ref{optimization_problem_theorem_1}) via binary search to obtain $r_u$ for $u$ based on the estimated item-probability bounds. \neil{Note that Algorithm~\ref{alg:certify} requires a clean rating-score matrix $\mathbf{M}$, which may be sampled from the clean data distribution.}

Due to randomness, the estimated item-probability bounds may be incorrect, e.g., an estimated item-probability lower bound is larger than the true item probability for some item and some user or an estimated item-probability upper bound is smaller than the true item probability for some item and some user. When such estimation error happens for a user $u$, the solved certified intersection size $r_u$ is incorrect for $u$. Since the simultaneous confidence level of estimating the item-probability bounds in our algorithm is at least $1-\frac{\alpha}{n}$ for any user $u \in \mathcal{U}$, the probability of having at least one incorrectly estimated item-probability bound  for any user $u \in \mathcal{U}$ is at most $\frac{\alpha}{n}$.  
Moreover, our following theorem shows that the probability of having an incorrect certified intersection size $r_u$ for at least one user in $\mathcal{U}$ is bounded by $\alpha$: 
\begin{theorem}
\label{proposition_1}
The probability that our Algorithm~\ref{alg:certify} returns an incorrect $r_u$
for at least one user in $\mathcal{U}$ is at most $\alpha$.
\end{theorem}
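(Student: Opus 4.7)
The plan is to reduce the event ``some $r_u$ is incorrect'' to the event ``some estimated item-probability bound is incorrect,'' and then bound the latter by two applications of a union bound (Bonferroni) -- one across items for a fixed user, and one across users.

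First, I fix a user $u \in \mathcal{U}$ and analyze the per-user error. In \textsc{BoundEst}, the Clopper--Pearson procedure from Equations~(\ref{cp_lower_bound})--(\ref{cp_upper_bound}) is applied with confidence parameter $\alpha_u / m = (\alpha/n)/m$ for each of the $m$ items, so for any single item $i$, the probability that $\underline{p_i}$ exceeds the true $p_i$ (or that $\overline{p}_j$ falls below the true $p_j$) is at most $(\alpha/n)/m$. By Bonferroni, the probability that at least one of the $m$ estimated bounds is incorrect for user $u$ is at most $m \cdot (\alpha/n)/m = \alpha/n$. Equivalently, with probability at least $1 - \alpha/n$, every bound $\underline{p_i}$ and $\overline{p}_j$ used for user $u$ satisfies the hypothesis of Theorem~\ref{theorem_certified_size}.

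Next, I connect correct bounds to a correct $r_u$. Whenever all item-probability bounds for user $u$ are valid, Theorem~\ref{theorem_certified_size} guarantees that the $r'$ returned by \textsc{BinarySearch} -- which is exactly the largest $r' \in \{1, \ldots, \min(k, N)\}$ satisfying the constraint in Equation~(\ref{optimization_problem_theorem_1}), or $0$ if none exists -- is a valid certified intersection size. Thus the event ``$r_u$ is incorrect'' is contained in the event ``some bound used for $u$ is incorrect,'' whose probability is at most $\alpha/n$.

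Finally, I apply a union bound over the $n$ users in $\mathcal{U}$: the probability that $r_u$ is incorrect for at least one $u \in \mathcal{U}$ is at most $\sum_{u \in \mathcal{U}} \alpha/n = \alpha$. The main subtlety, rather than any obstacle, is simply to verify that the Clopper--Pearson bounds used by \textsc{BoundEst} are indeed the correct one-sided intervals at level $1 - \alpha_u/m$ and that the $T$ sampled submatrices give the frequencies $T_i$ the exact binomial distribution $\mathrm{Bin}(T, p_i)$ required for Clopper--Pearson to apply; both follow directly from the i.i.d.\ sampling in \textsc{RandomSample} and the definition of $p_i$.
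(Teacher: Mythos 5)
Your proposal is correct and follows essentially the same route as the paper's proof: a union bound (Bonferroni/Boole) over the $m$ items to get a per-user failure probability of $\alpha_u = \alpha/n$, the observation that the optimization step is deterministic given valid bounds, and a second union bound over the $n$ users. The only difference is that you spell out the per-item step that the paper compresses into a single invocation of Boole's inequality.
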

\begin{proof}
See Appendix~\ref{proof_of_proposition}. 
\end{proof}

\section{Evaluation}
\label{evaluation}
\subsection{Experimental Setup}
\vspace{-2mm}
\myparatight{Datasets} We mainly evaluate  PORE on MovieLens-100k and MovieLens-1M benchmark datasets~\cite{movielens,harper2015movielens}, which consist of around $100,000$ and $1,000,000$ rating scores, respectively.  Specifically, MovieLens-100k  contains $943$ users and $1,682$ items, where each user rated $106$ items on average. MovieLens-1M  contains $6,040$ users and $3,952$ items, where each user on average rated $166$ items. Following~\cite{argyriou2020microsoft}, for each user, we sample $75\%$ of its rating scores as training data and treat its remaining rated items as test items. The users' training data form the rating-score matrix and are used to build recommender systems, while their test items are used to evaluate the performance of the recommended top-$N$ items.  

 \myparatight{Base algorithms} PORE is applicable to any base algorithm. To show such generality, we evaluate  two  base  algorithms, i.e.,  Item-based  Recommendation  (IR) ~\cite{argyriou2020microsoft} and Bayesian Personalized Ranking (BPR)~\cite{rendle2012bpr}. We adopt their public implementations~\cite{argyriou2020microsoft}. 
{We adopt IR because it  has been widely deployed in industry~\cite{linden2003amazon}. We adopt BPR because it achieves state-of-the-art  performance according to recent benchmarks released by Microsoft~\cite{Recommenders_url}.}

\begin{figure*}[!t]
	 \centering
	 \vspace{-2mm}
{\includegraphics[width=0.16\textwidth]{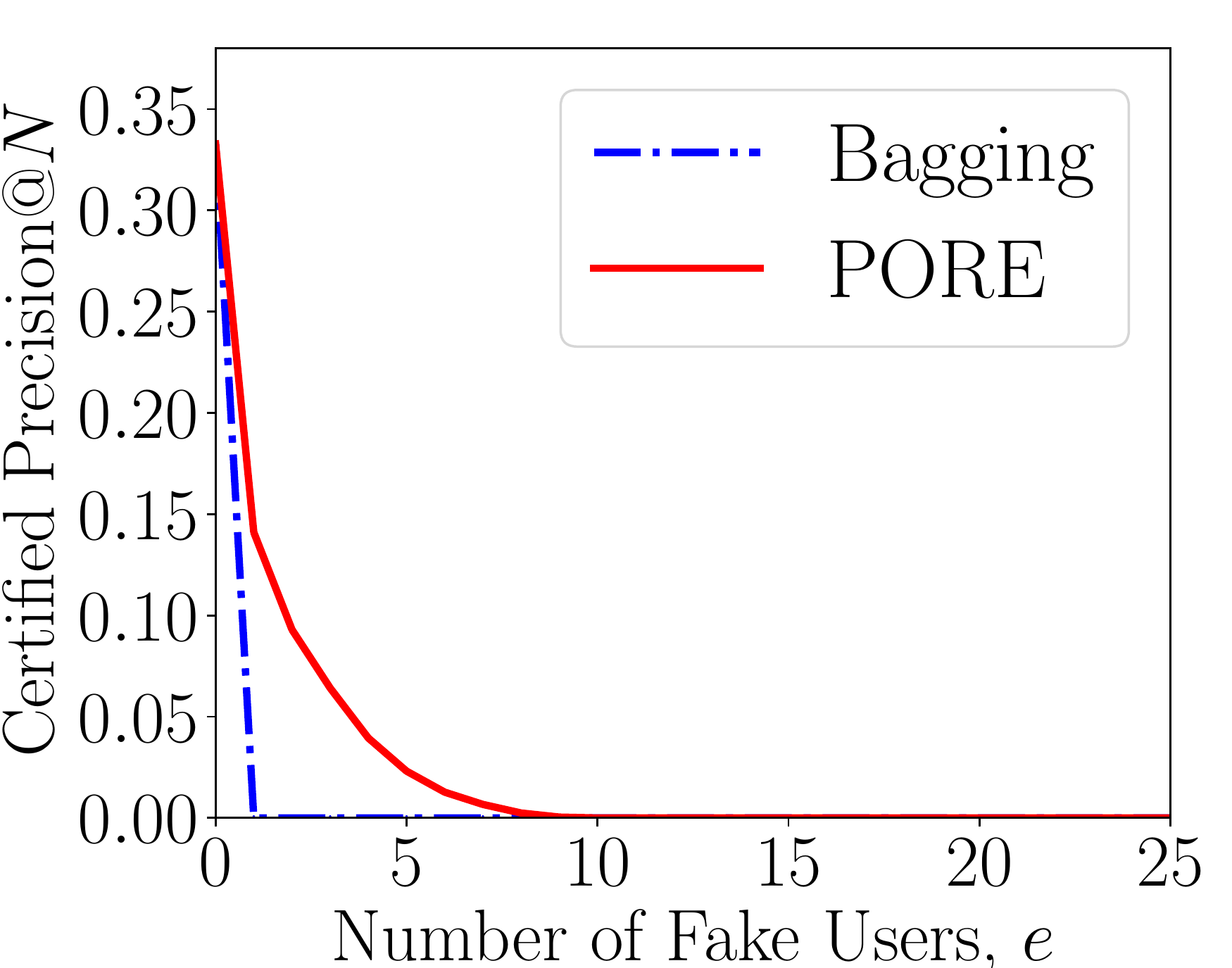}}
{\includegraphics[width=0.16\textwidth]{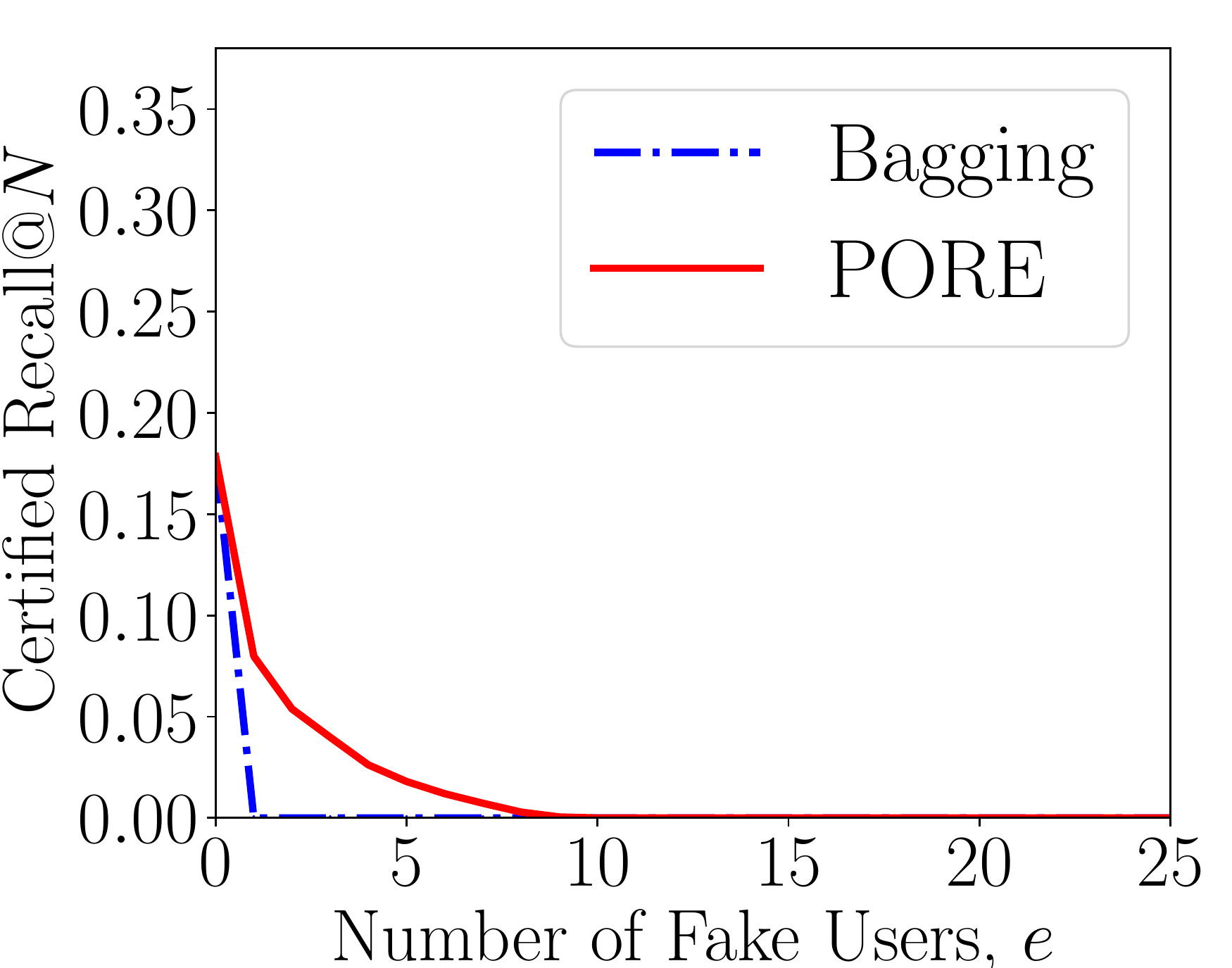}}
{\includegraphics[width=0.16\textwidth]{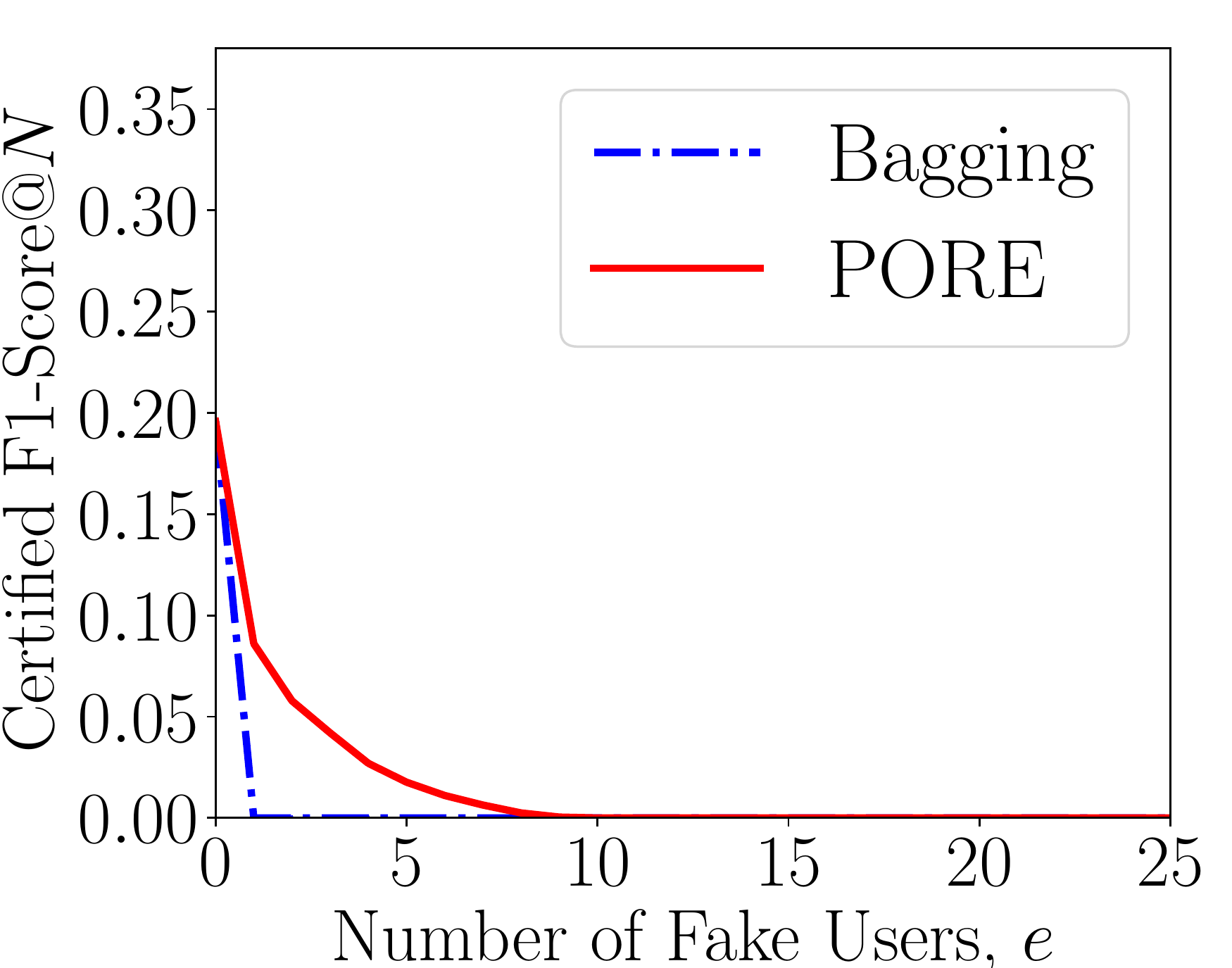}}
{\includegraphics[width=0.16\textwidth]{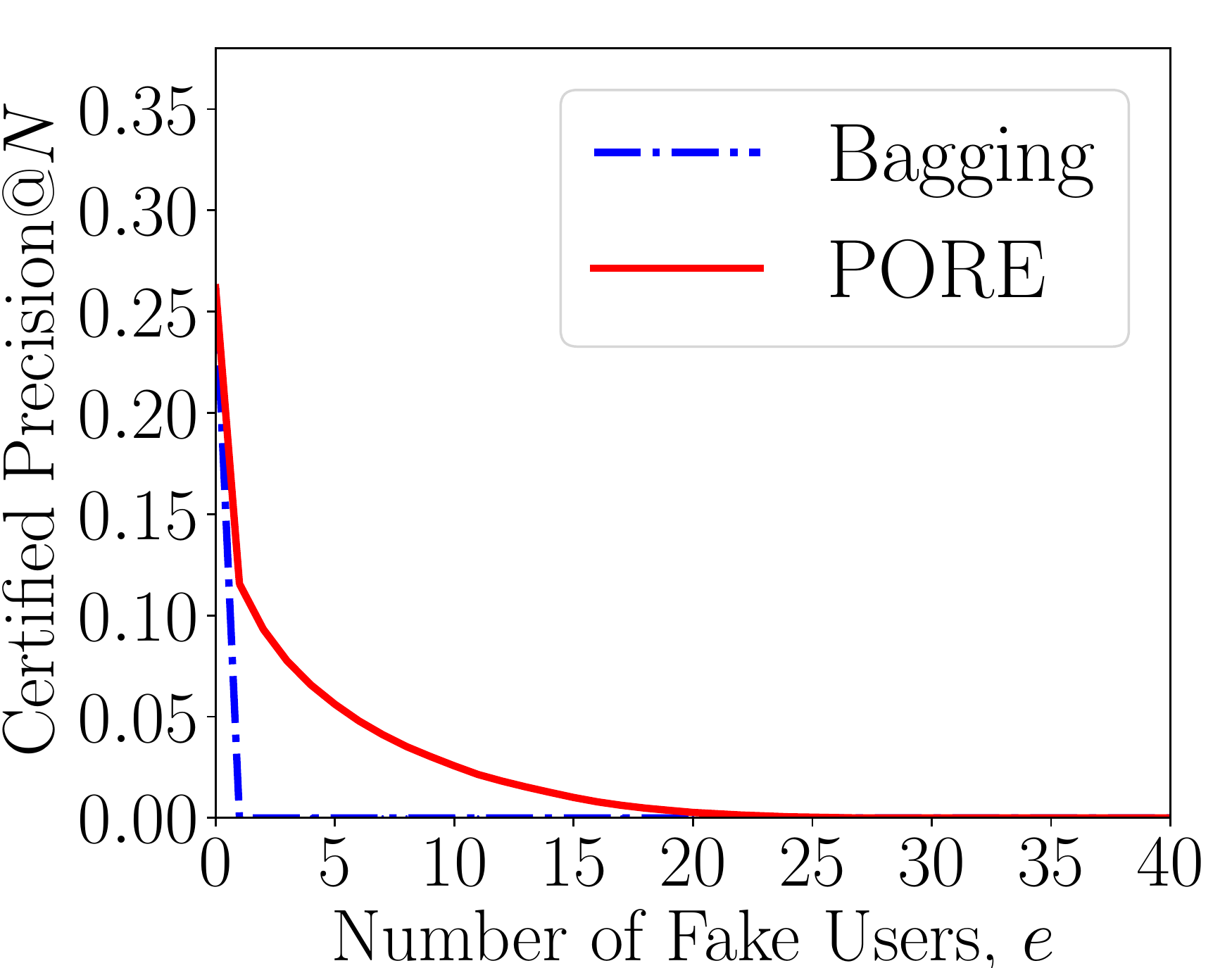}}
{\includegraphics[width=0.16\textwidth]{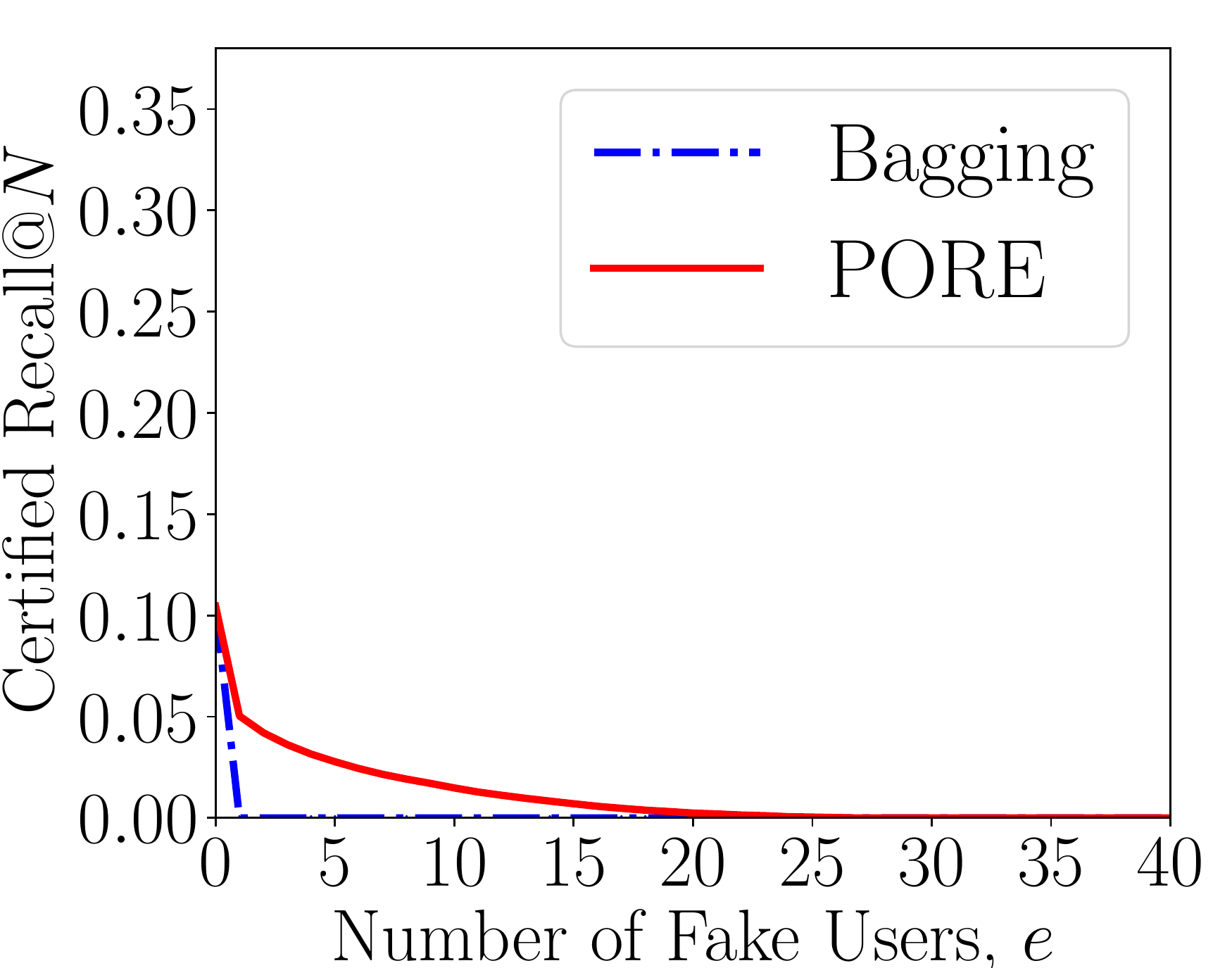}}
{\includegraphics[width=0.16\textwidth]{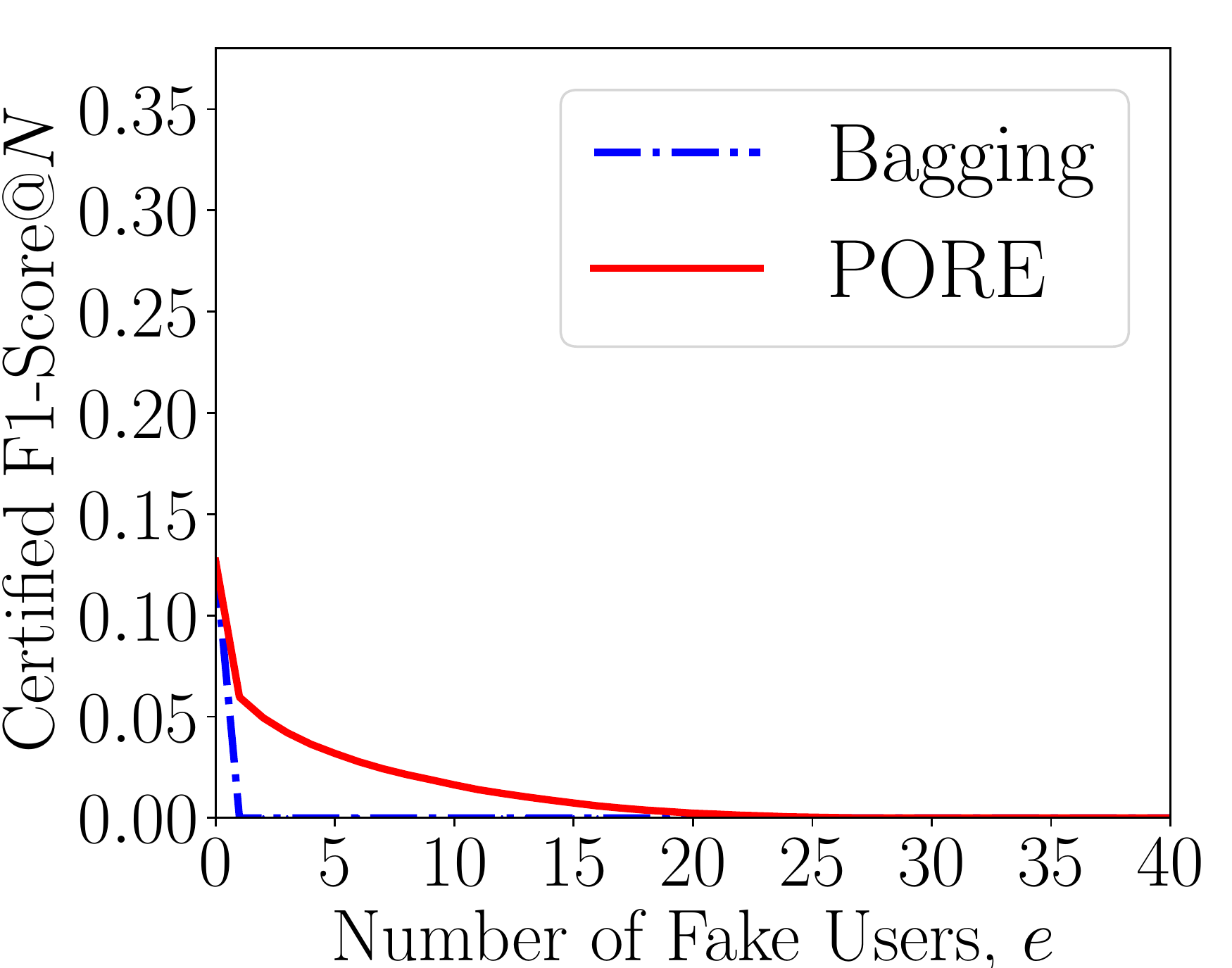}}
\vspace{-3mm}
\caption{Our PORE outperforms bagging when extended from classifiers to recommender systems  on MovieLens-100k (left three) and MovieLens-1M (right three), where $N=10$ and base algorithm is IR.}
\label{compare_with_existing_defense}
\end{figure*}

\begin{figure*}[!t]
	 \centering
	 \vspace{-2mm}
{\includegraphics[width=0.16\textwidth]{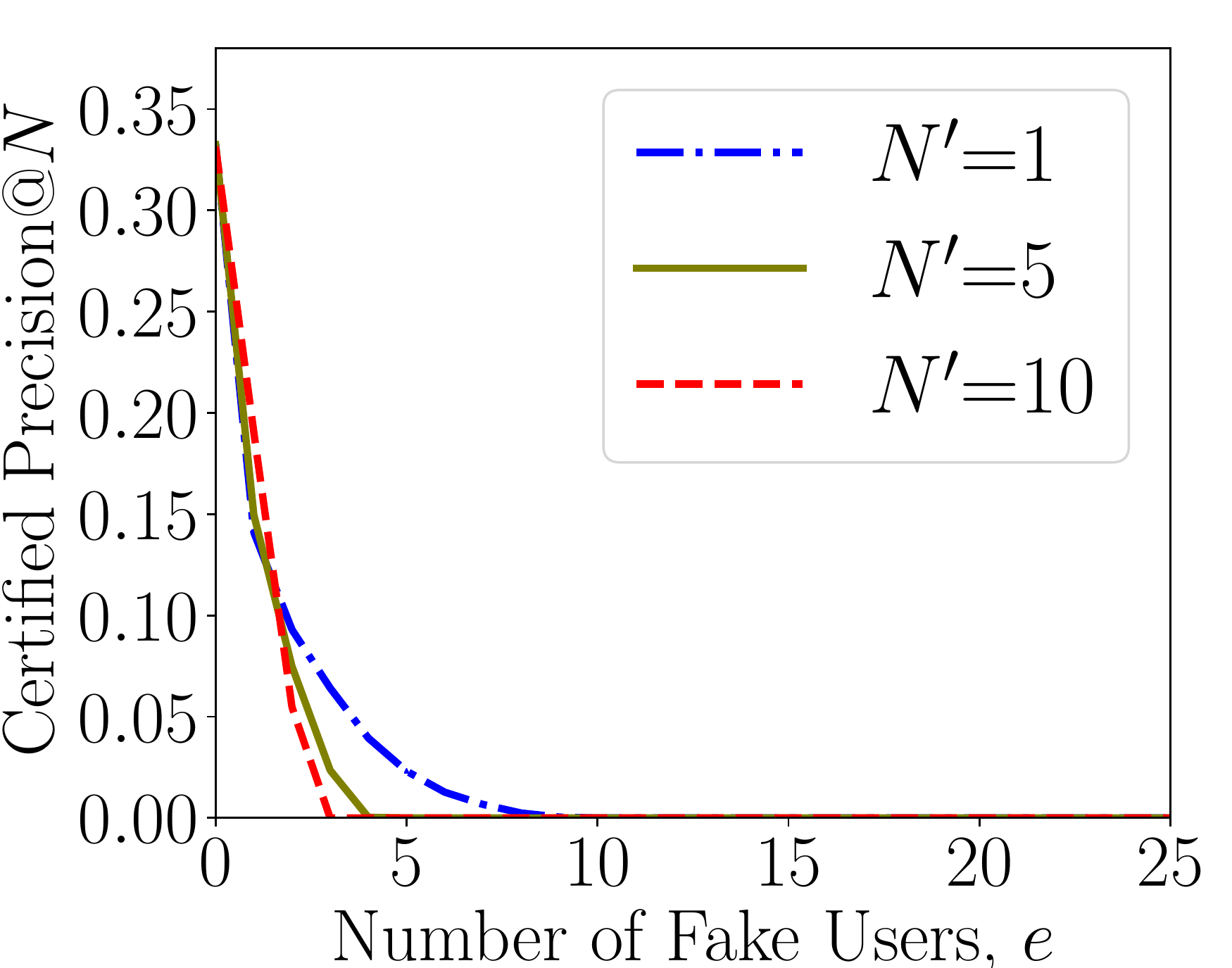}}
{\includegraphics[width=0.16\textwidth]{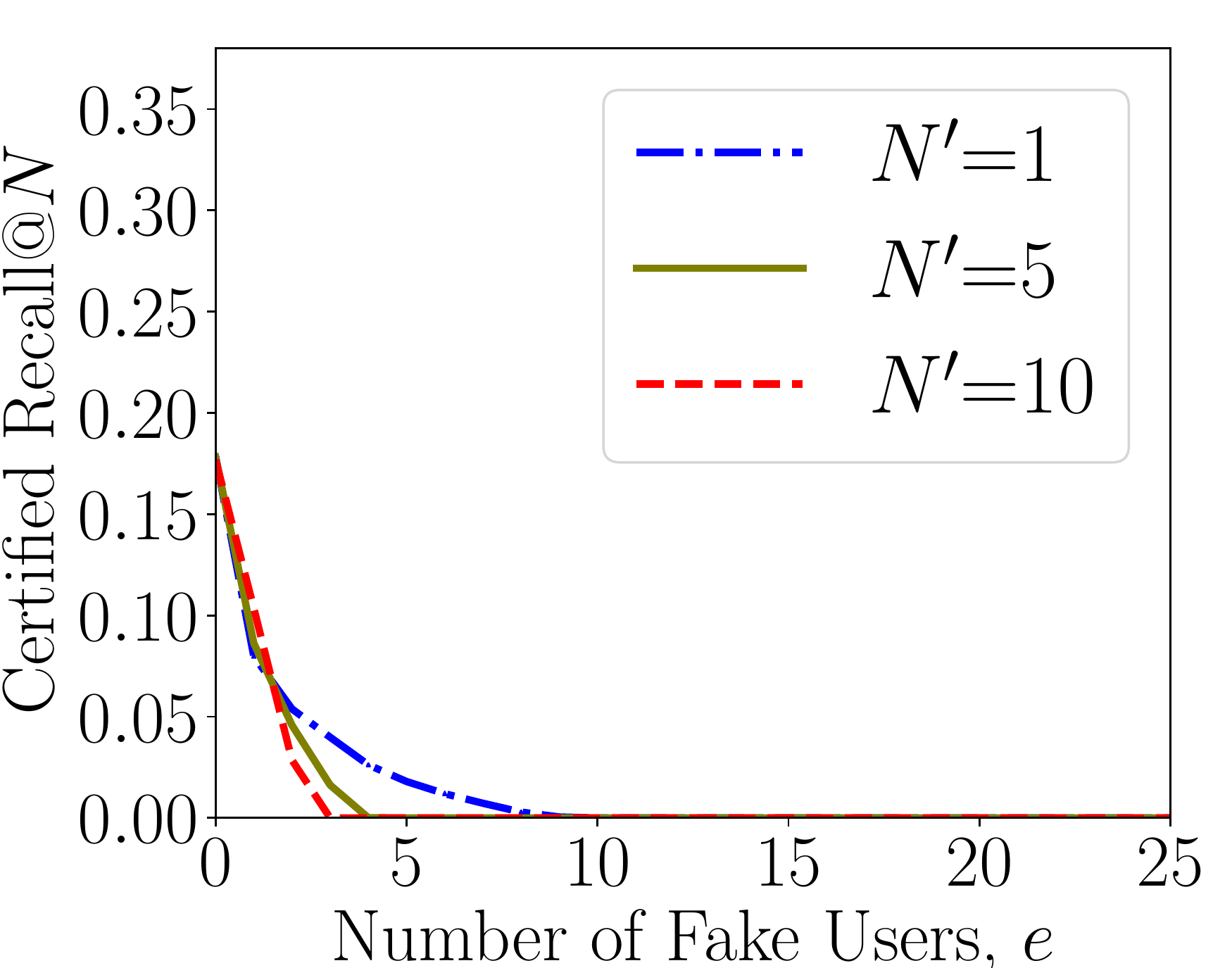}}
{\includegraphics[width=0.16\textwidth]{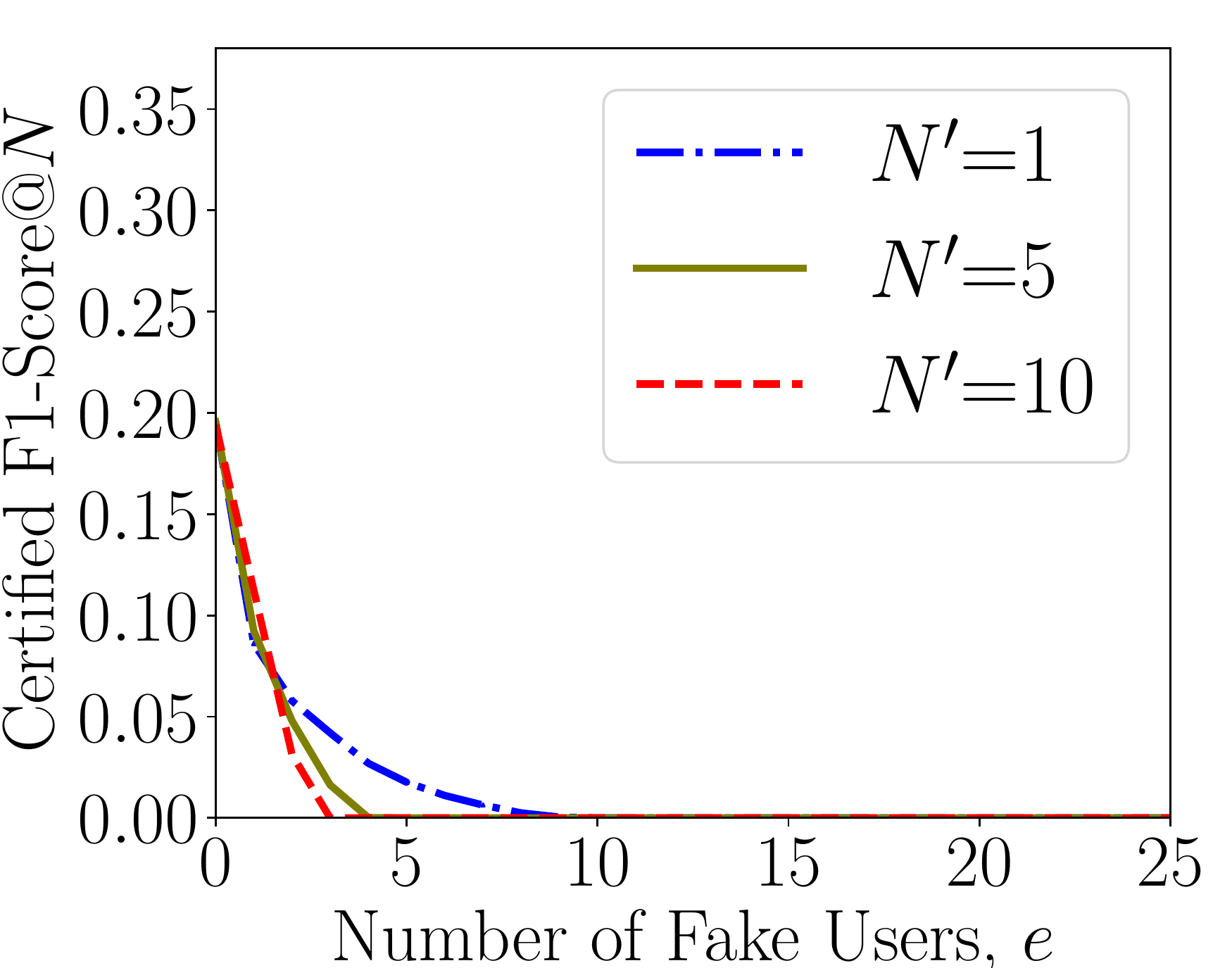}}
{\includegraphics[width=0.16\textwidth]{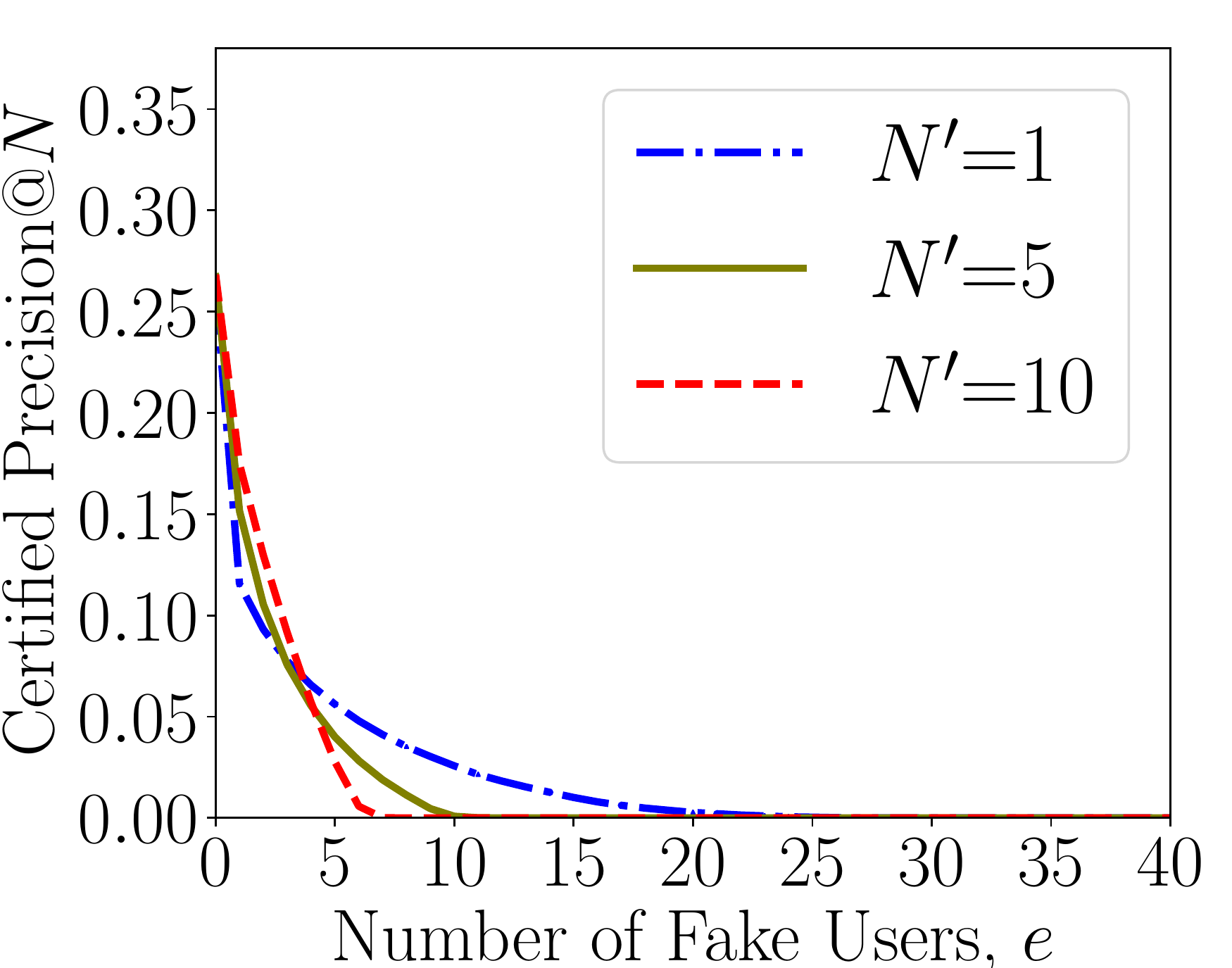}}
{\includegraphics[width=0.16\textwidth]{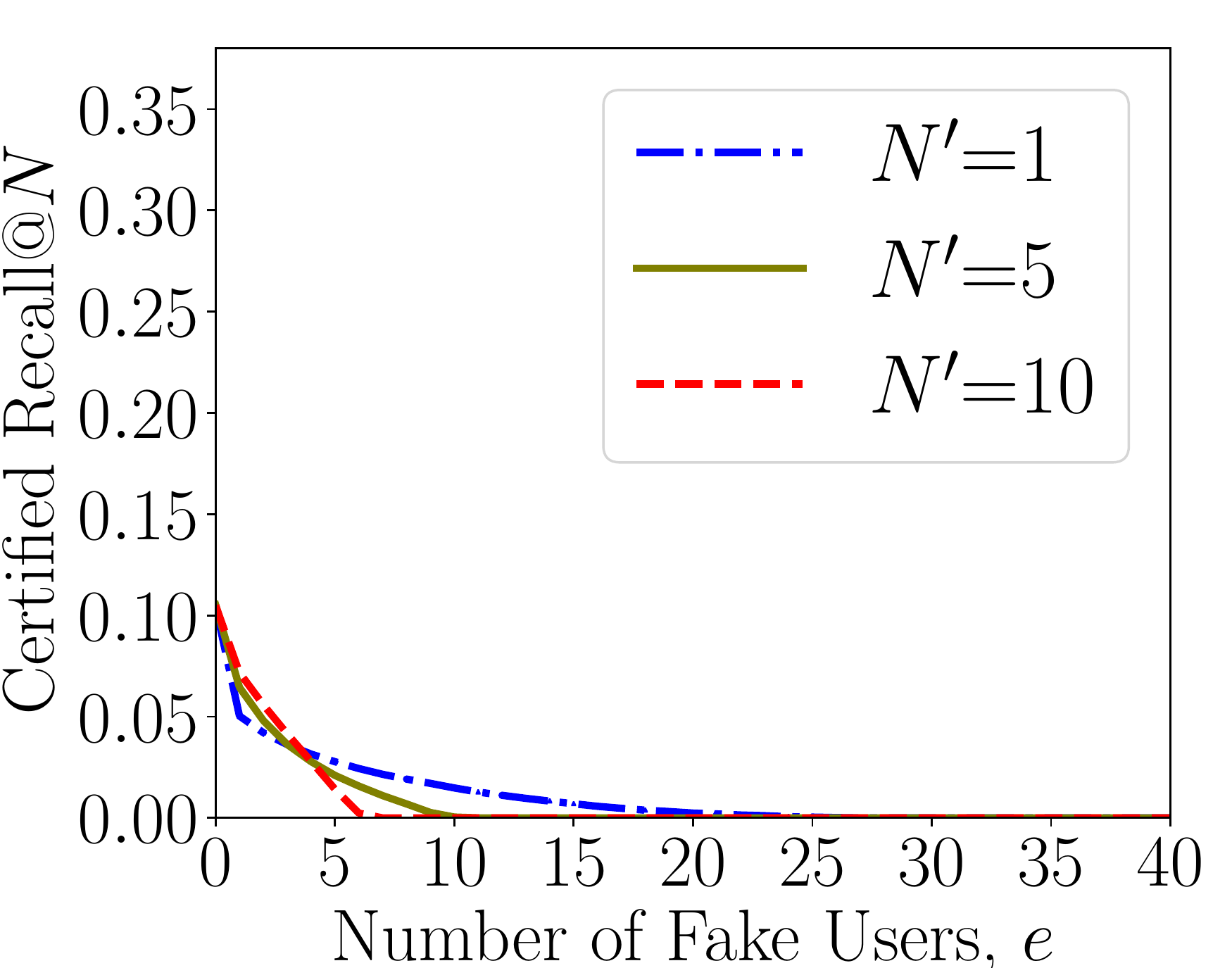}}
{\includegraphics[width=0.16\textwidth]{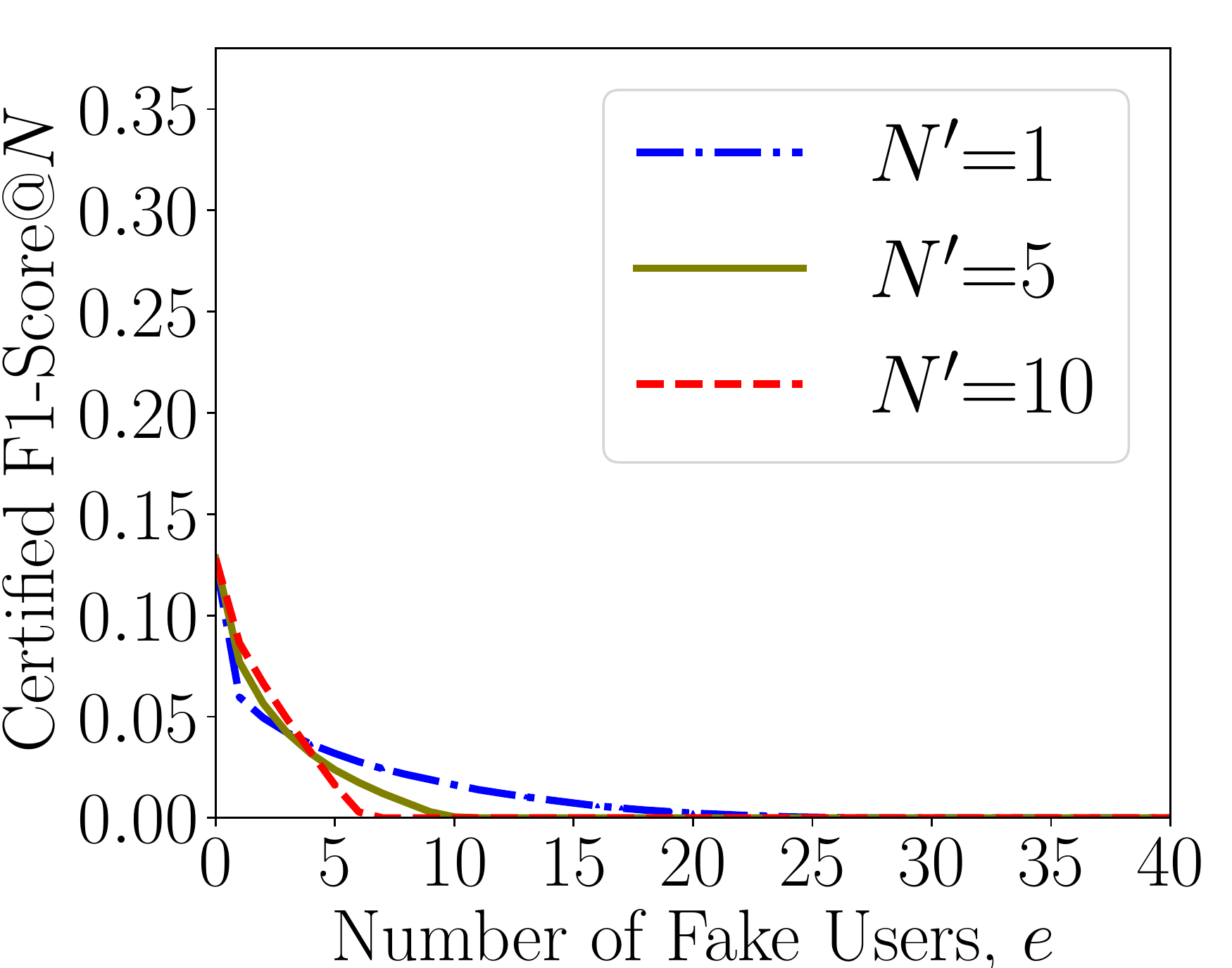}}
\vspace{-3mm}
\caption{Impact of $N'$ on the certified Precision@$N$, certified Recall@$N$, and certified F1-Score@$N$ of ensemble IR for MovieLens-100k (left three) and MovieLens-1M (right three), where $N=10$.}
\label{impact_of_N_prime}
\end{figure*}

\myparatight{Evaluation metrics} When there are no data poisoning attacks,  \emph{Precision@$N$},  \emph{Recall@$N$}, and  \emph{F1-Score@$N$} are standard metrics to evaluate the performance of a recommender system. We denote by $\mathcal{E}_u$ the  set of test items for a user $u$. Precision@$N$ for a user $u$ is the fraction of the top-$N$ items recommended for $u$ that are  in $\mathcal{E}_u$,   Recall@$N$ for $u$ is the fraction of the items in $\mathcal{E}_u$ that are  in the top-$N$  items recommended for $u$, while F1-Score@$N$ for a user $u$ is the harmonic mean of the user's Precision@$N$ and Recall@$N$. The Precision@$N$ (or Recall@$N$ or F1-Score@$N$) of a recommender system algorithm is the users' average  Precision@$N$ (or Recall@$N$ or F1-Score@$N$). 

Under data poisoning attacks,  Precision@$N$, Recall@$N$, and F1-Score@$N$ are insufficient to evaluate a recommender system algorithm. This is because they may be different under different data poisoning attacks, and it is infeasible to enumerate all possible attacks.  To address the challenge, 
we propose to evaluate a recommender system algorithm using \emph{certified Precision@$N$},  \emph{certified Recall@$N$}, and  \emph{certified F1-Score@$N$} under attacks. Like the standard Precision@$N$ (or Recall@$N$ or F1-Score@$N$), calculating our {certified Precision@$N$},  {certified Recall@$N$}, and  {certified F1-Score@$N$} also only requires a clean rating-score matrix and thus does not depend on any specific data poisoning attack.  Certified Precision@$N$ (or certified Recall@$N$ or certified F1-Score@$N$) is a \emph{lower bound} of Precision@$N$ (or Recall@$N$ or F1-Score@$N$) under any data poisoning attacks with at most $e$ fake users. 
For instance, a certified Precision@$N$ of 0.3 means that a recommender system achieves at least Precision@$N$ of 0.3 when the number of fake users is at most $e$, no matter what rating scores they use.  Specifically, our certified Precision@$N$ for a user $u$ is the least fraction of the top-$N$ recommended items  for $u$ that are guaranteed to be in $\mathcal{E}_u$ when there are at most $e$ fake users;  our certified Recall@$N$ for $u$ is the least fraction of $u$'s test items $\mathcal{E}_u$ that are guaranteed to be in the top-$N$ recommended items; while certified F1-Score@$N$ for a user is the harmonic mean of the user's certified Precision@$N$ and certified Recall@$N$. Formally, we have the following for a user $u$: 
\begin{align}
\label{definition_of_cpn}
 &   \text{Certified Precision@}N = \min_{\mathbf{M}' \in \mathcal{L}(\mathbf{M},e)} \frac{|\mathcal{E}_{u} \cap \mathcal{T} (\mathbf{M}', u)|}{N}, \\
 \label{definition_of_crn}
&    \text{Certified Recall@}N = \min_{\mathbf{M}' \in \mathcal{L}(\mathbf{M},e)} \frac{|\mathcal{E}_{u} \cap \mathcal{T} (\mathbf{M}', u)|}{|\mathcal{E}_{u}|}, \\
\label{definition_of_cfn}
&    \text{Certified F1-Score@}N = \min_{\mathbf{M}' \in \mathcal{L}(\mathbf{M},e)} \frac{2 \cdot |\mathcal{E}_{u} \cap \mathcal{T} (\mathbf{M}', u)|}{|\mathcal{E}_{u}|+N}, 
\end{align}
where $|\cdot|$ is the size of a set. We can compute the certified Precision@$N$, certified Recall@$N$, and certified F1-Score@$N$ for each user by Algorithm~\ref{alg:certify}. In particular, we can compute certified intersection size $r_u$ for each user $u$ using Algorithm~\ref{alg:certify} by letting $\mathcal{I}_u = \mathcal{E}_u$. Given $r_u$, the certified Precision@$N$, certified Recall@$N$, and certified F1-Score@$N$ for a user $u$ are at least $\frac{r_u}{N}$, $\frac{r_u}{|\mathcal{E}_{u}|}$, and $\frac{2r_u}{|\mathcal{E}_{u}|+N}$, respectively.
A recommender system algorithm's certified Precision@$N$ (or  Recall@$N$ or  F1-Score@$N$) is the average of the \emph{genuine users'} certified Precision@$N$ (or  Recall@$N$ or  F1-Score@$N$).

\myparatight{Compared methods} We note that PORE is the first provably robust recommender system algorithm against data poisoning attacks. Therefore, there are no prior recommender system algorithms we can compare with in terms of \emph{certified} Precision@$N$, Recall@$N$, and F1-Score@$N$. However, Jia et al.~\cite{jia2020intrinsic} showed that bagging can be used to build provably robust defense against data poisoning attacks for \emph{machine learning classifiers}, which we extend to recommender systems and compare with our PORE.  
Roughly speaking, given a training dataset, bagging trains multiple base classifiers,  each of which is trained on a random subset of training examples in the training dataset. Given a testing input, bagging uses each base classifier to predict its label and takes a majority vote among the predicted labels as the final predicted label for the testing input. Jia et al. showed that bagging can guarantee that the predicted label for an input is provably unaffected by a bounded number of fake training examples injected into the training dataset. 

We generalize their provable guarantee to derive certified intersection size for each genuine user in recommender systems, which can be then used to compute certified Precision@$N$, Recall@$N$, and F1-Score@$N$ for bagging. Specifically, we treat a user as a testing input, an item as a label, and a base recommender system as a base classifier in the terminology of bagging. Like our PORE, the generalized bagging builds $T$ base recommender systems and takes majority vote among them to recommend top-$N$ items to each user.  Note that since a base classifier predicts one label for a testing input, we set $N'=1$, i.e., a base recommender system (i.e., a base classifier) recommends top-1 item (i.e., predicts one label) for a user (i.e., a testing input). Finally, for each user, bagging recommends him/her the $N$ items with the largest (poisoned) item probabilities. Given a set of items $\mathcal{I}_u$ for a user $u$, we denote by $\underline{p_i}$  a lower bound of item probability $p_i$ of item $i \in \mathcal{I}_u$. We use $\overline{p}_l$ to denote the largest upper bound of item probabilities of items in $\mathcal{I} \setminus \mathcal{I}_u$, i.e., $\overline{p}_l = \max_{j \in \mathcal{I} \setminus \mathcal{I}_u } \overline{p}_j$. We can estimate these item-probability bounds using our method in Equation~\ref{cp_lower_bound} and~\ref{cp_upper_bound}.  Given  $\underline{p_i}$ and $\overline{p}_l$, we can compute an integer $Z_i$ based on Theorem 1 in bagging~\cite{jia2020intrinsic}. Roughly speaking, bagging can guarantee the poisoned item probability $p_i'$ is larger than $p_l'$ under any data poisoning attacks with at most $Z_i$ fake users. Therefore, given at most  $e$ fake users, the certified intersection size of bagging for a user $u$ can be computed as $\min\{\sum_{i \in \mathcal{I}_u }\mathbb{I}(Z_i \geq e), N\}$, where $\mathbb{I}$ is an indicator function. 

\neil{We note that when bagging is extended to recommender systems, both $N'$ and $N$ can only be 1 when using the techniques in~\cite{jia2020intrinsic} to derive its provable robustness guarantees.  PORE can be viewed as an extension of bagging to recommender systems, but $N'$ and $N$ can be arbitrary positive integers. Due to such differences, we propose new techniques to derive the robustness guarantee of PORE. Our major technical contribution is to derive a better guarantee for bagging applied to recommender systems.}

\begin{figure*}[!t]
	 \centering
{\includegraphics[width=0.16\textwidth]{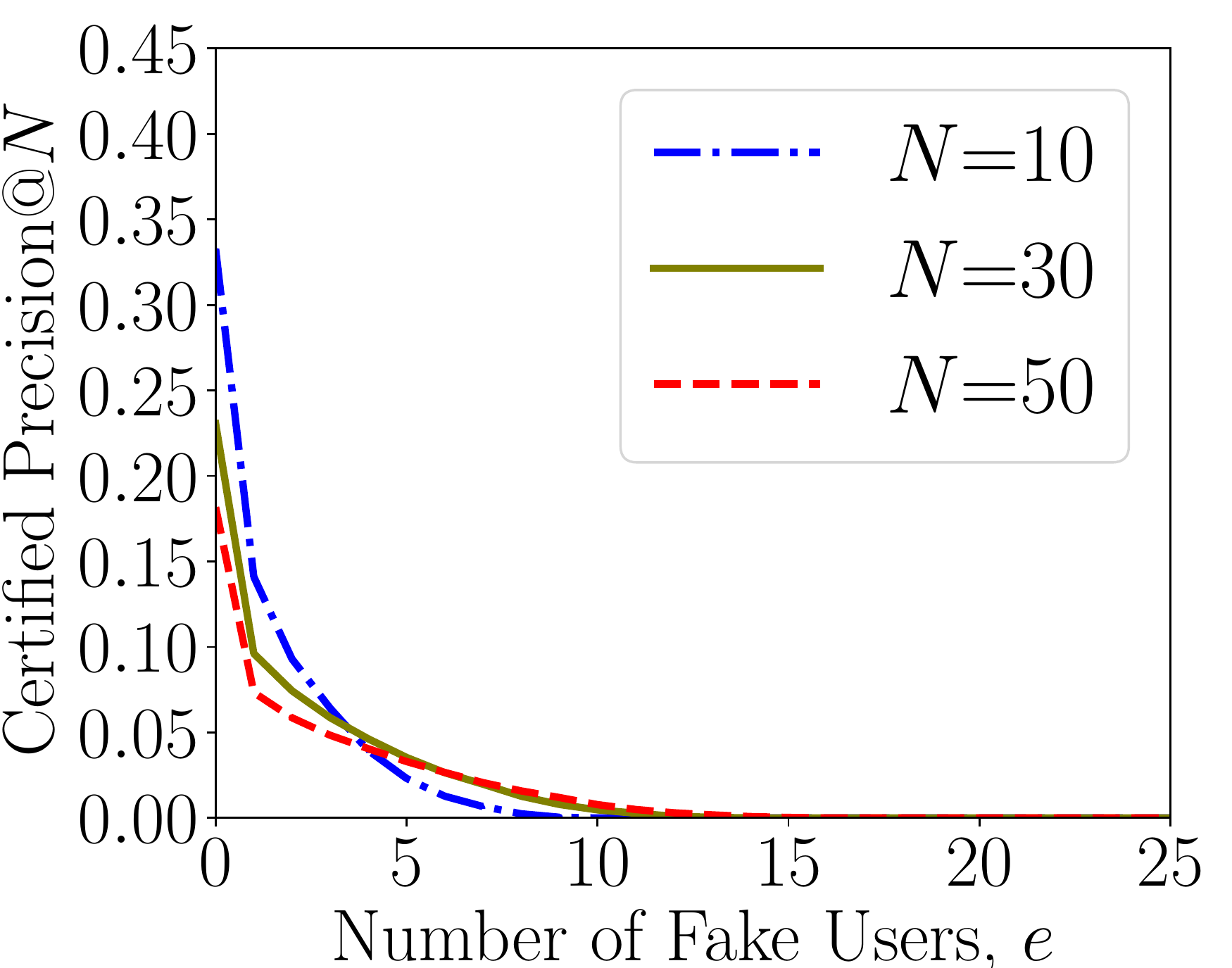}} 
{\includegraphics[width=0.16\textwidth]{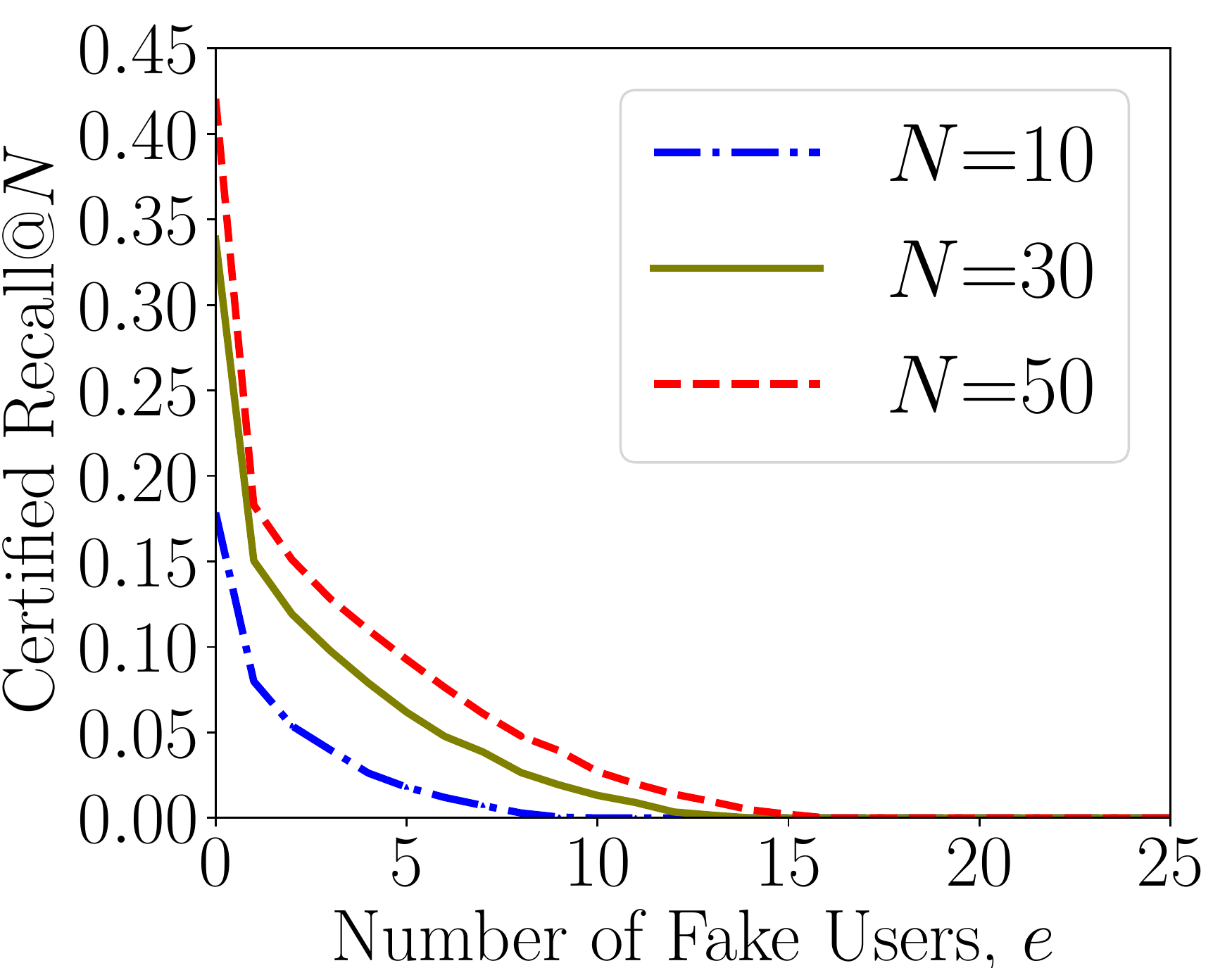}}
{\includegraphics[width=0.16\textwidth]{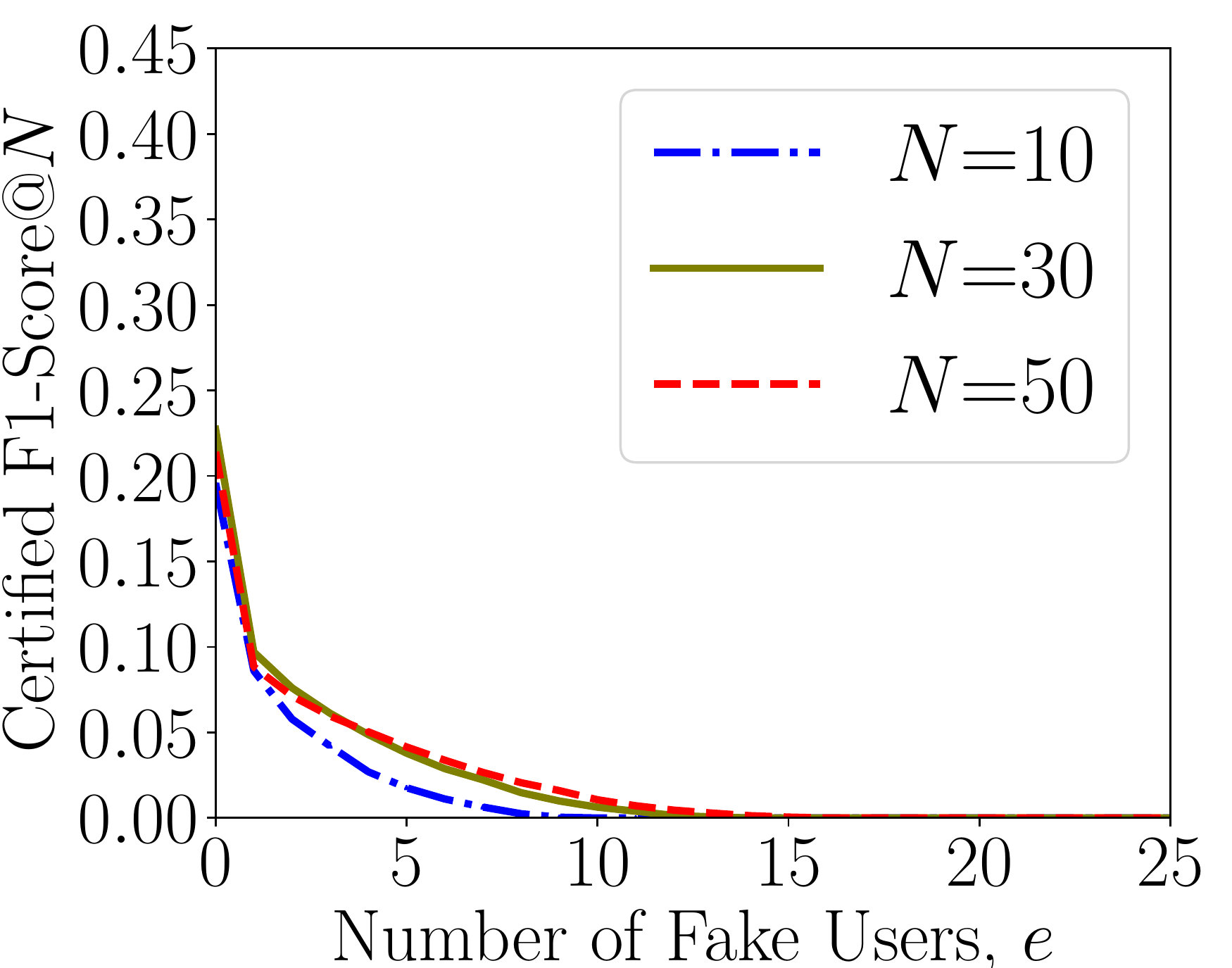}} 
{\includegraphics[width=0.16\textwidth]{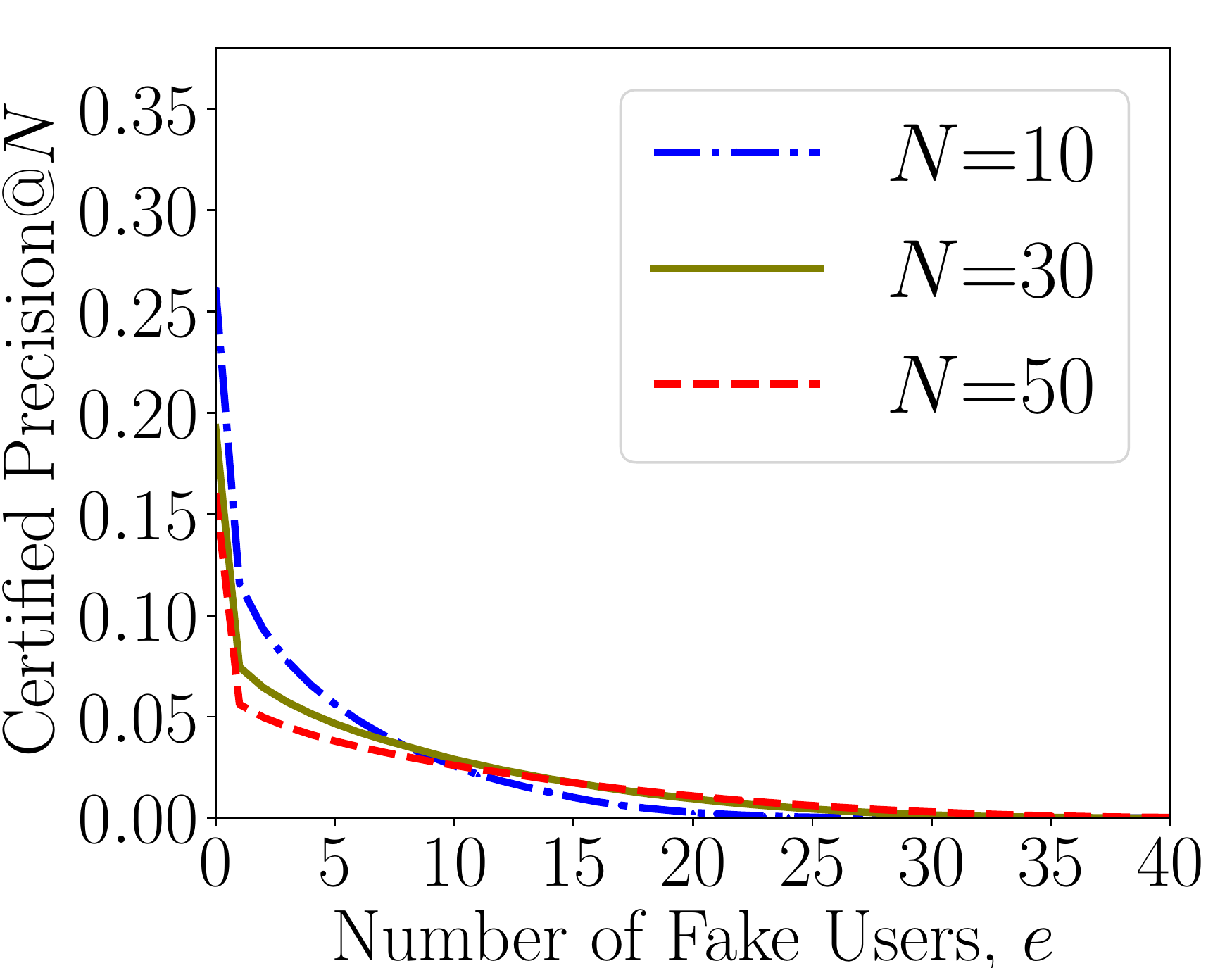}}
{\includegraphics[width=0.16\textwidth]{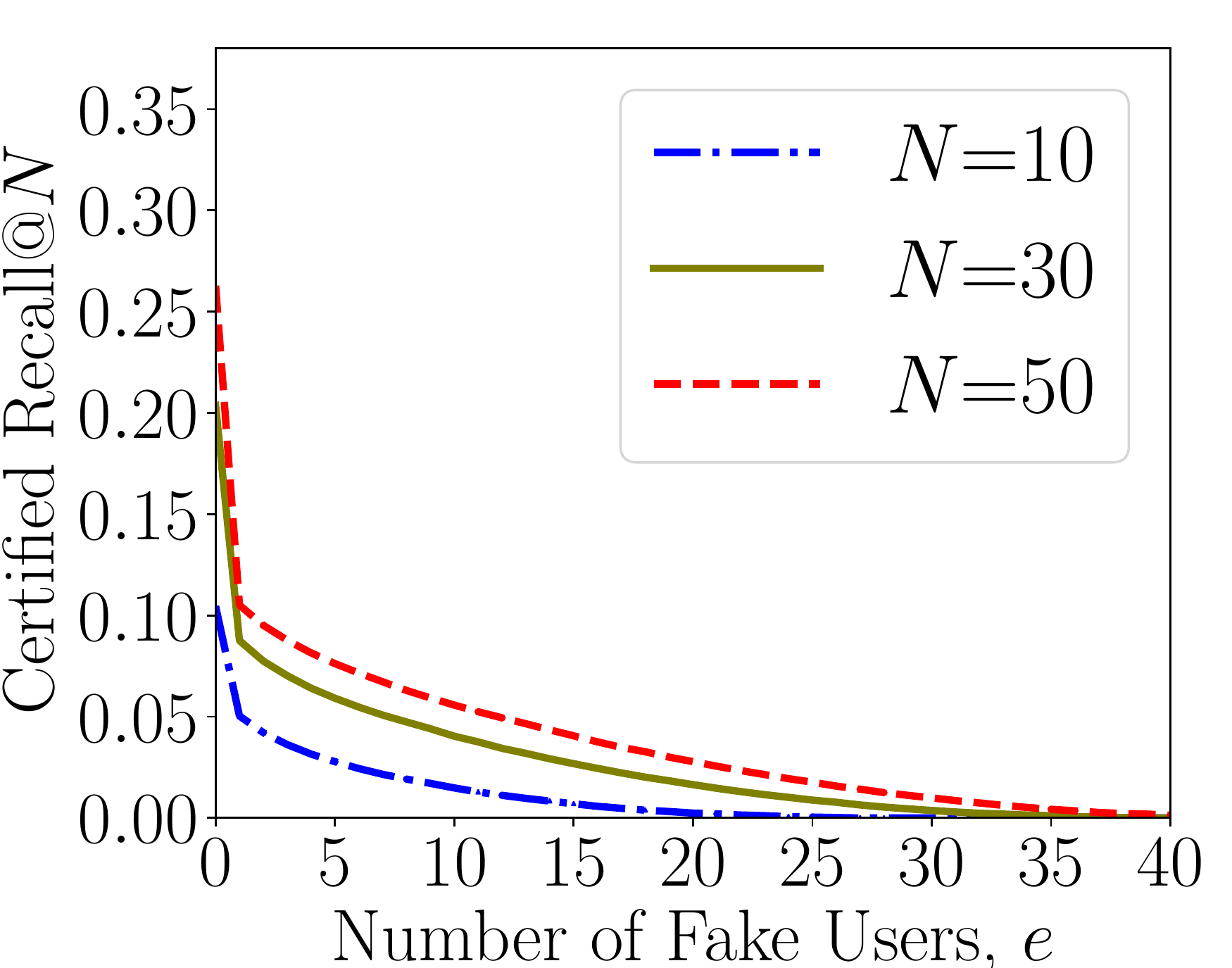}}
{\includegraphics[width=0.16\textwidth]{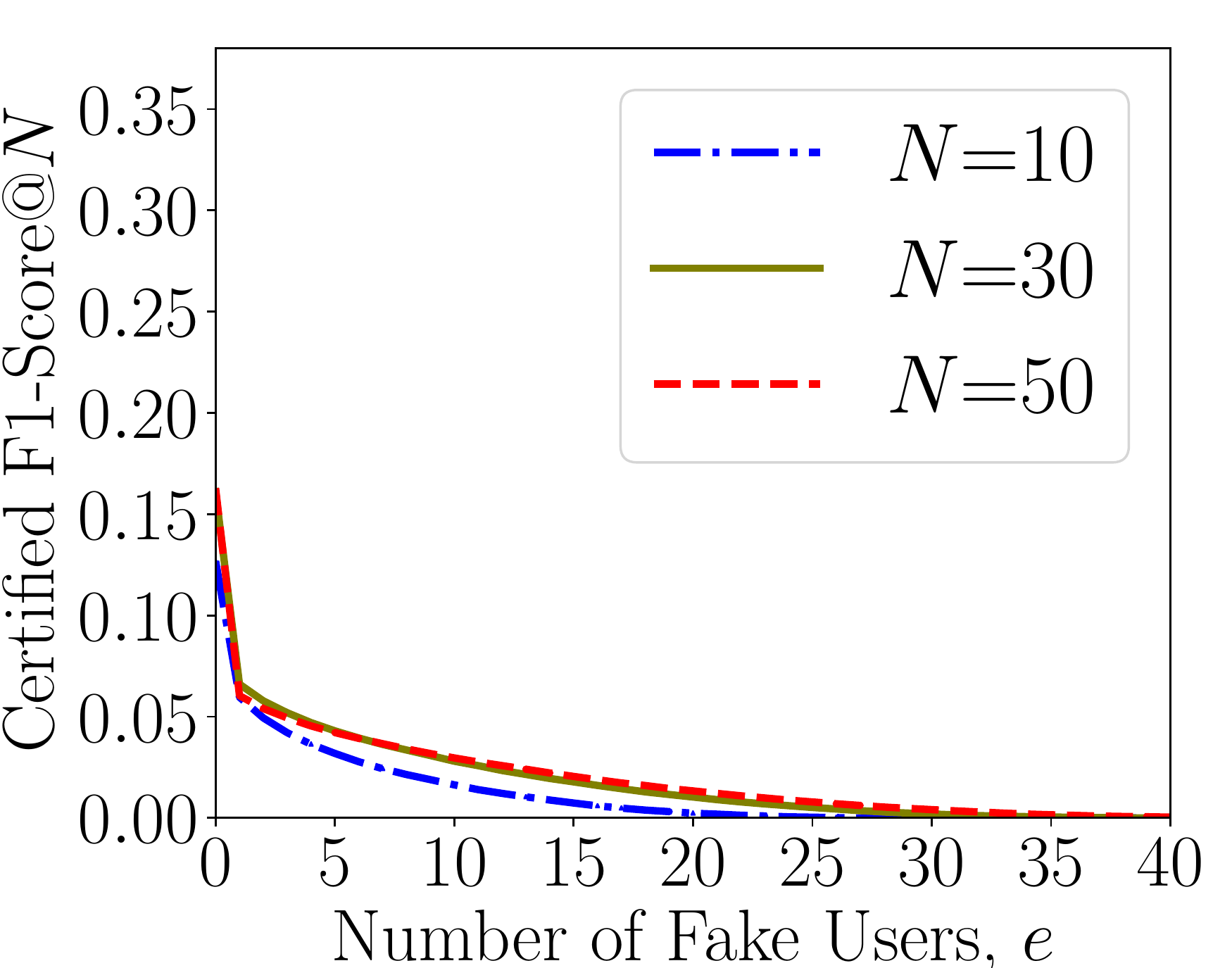}}
\vspace{-3mm}
\caption{Impact of $N$ on the certified Precision@$N$, certified Recall@$N$, and certified F1-Score@$N$ of ensemble IR for MovieLens-100k (left three) and MovieLens-1M (right three). }
\label{impact_of_N}
\end{figure*}

\begin{figure*}[!t]
	 \centering
{\includegraphics[width=0.16\textwidth]{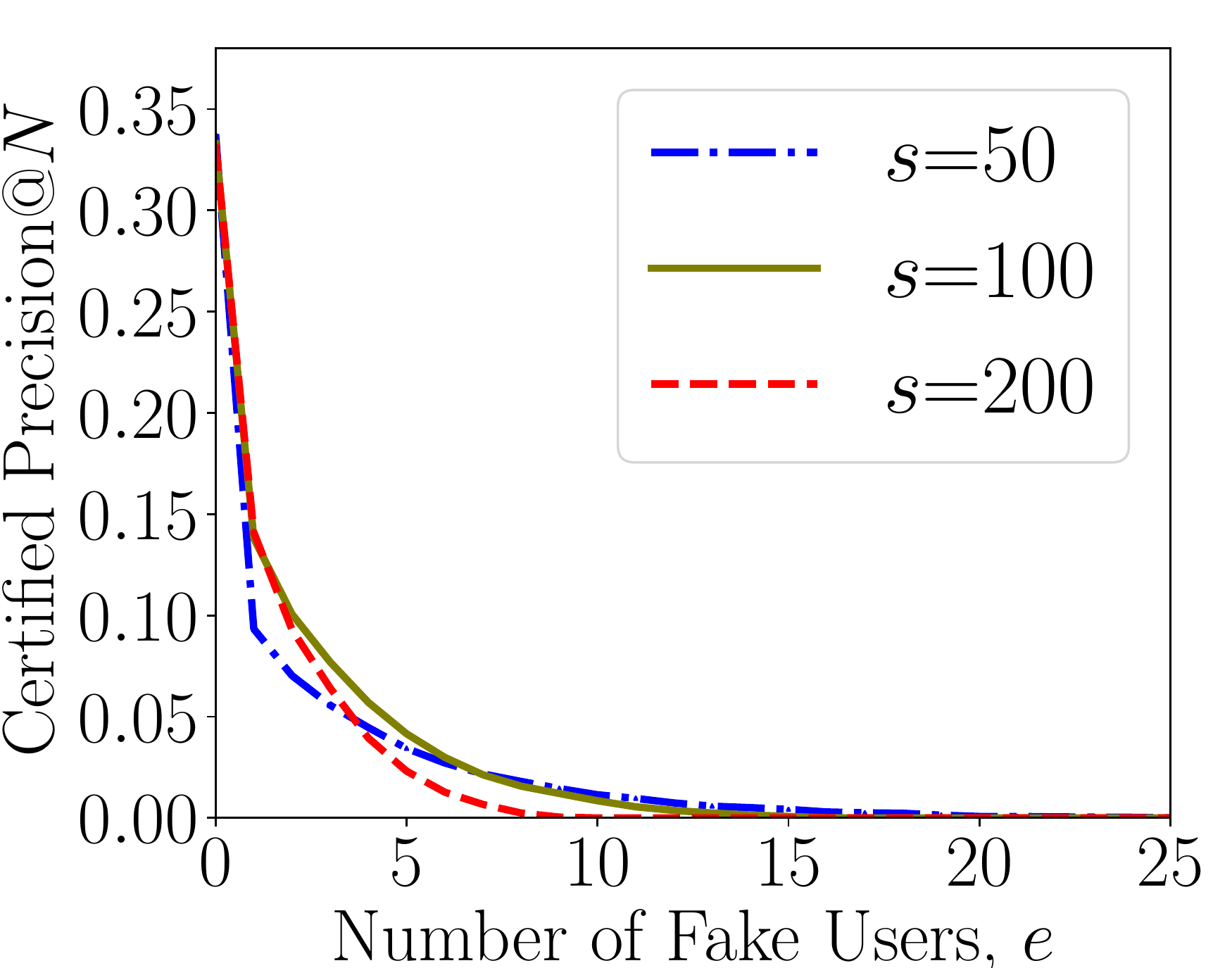}}
{\includegraphics[width=0.16\textwidth]{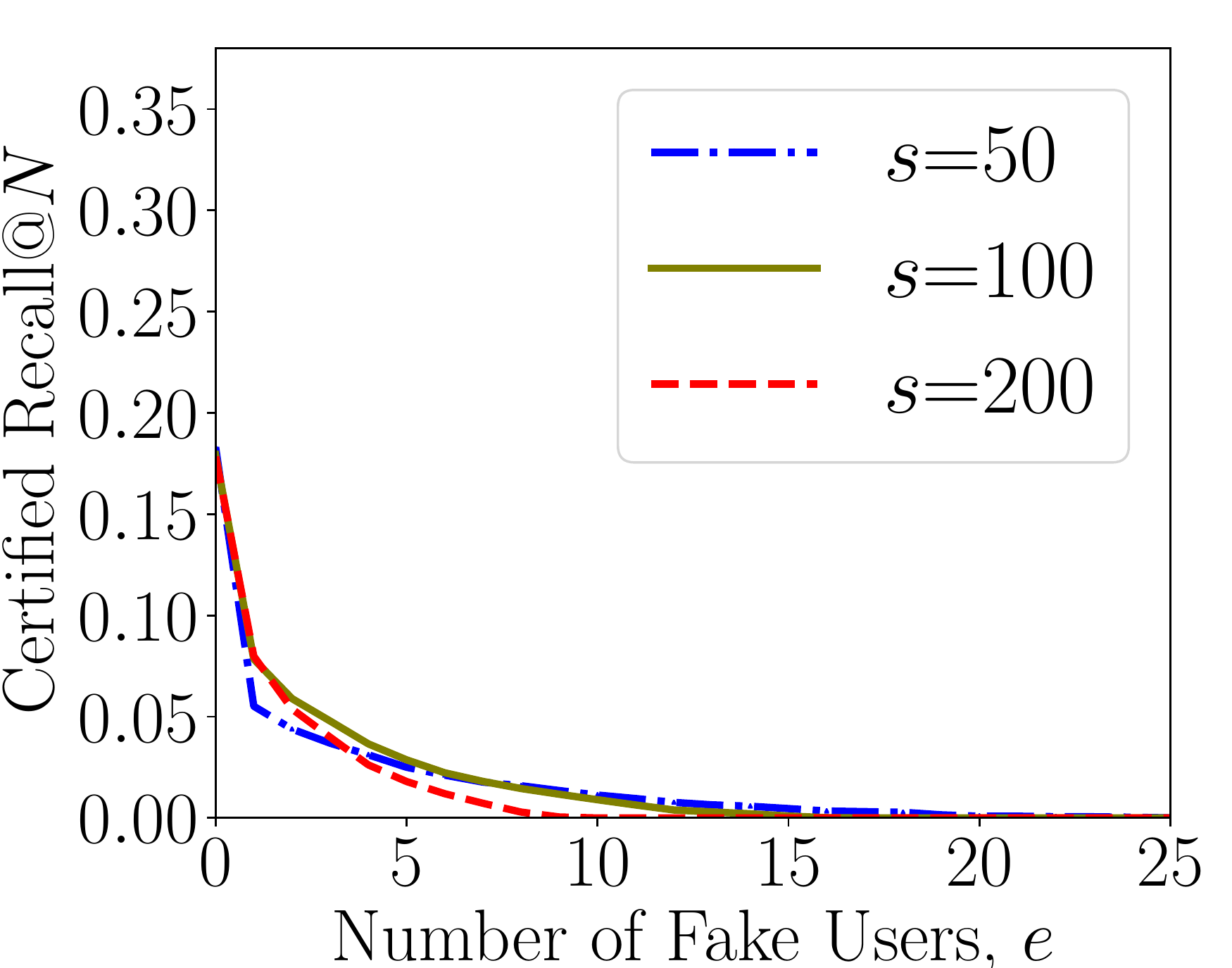}}
{\includegraphics[width=0.16\textwidth]{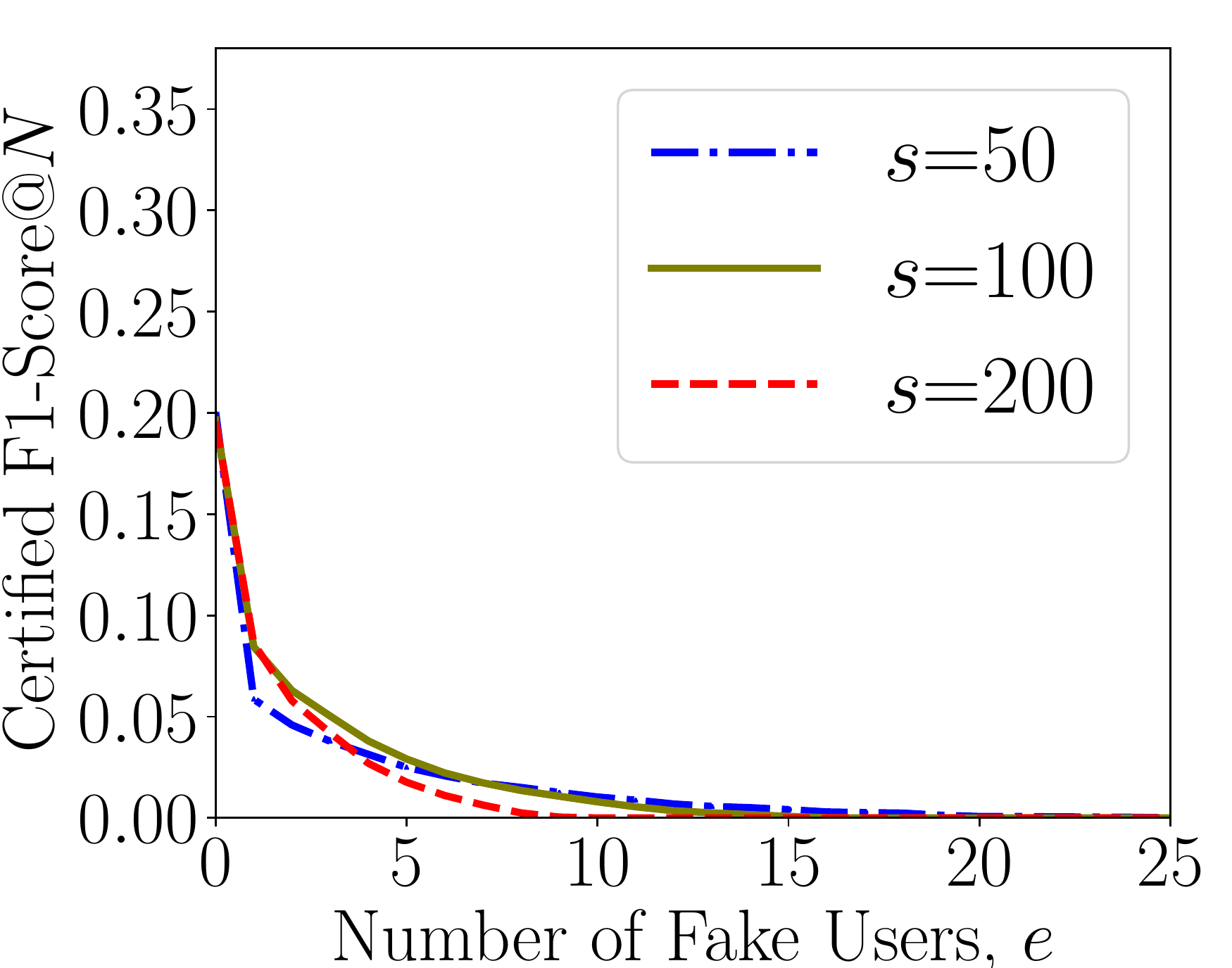} } 
{\includegraphics[width=0.16\textwidth]{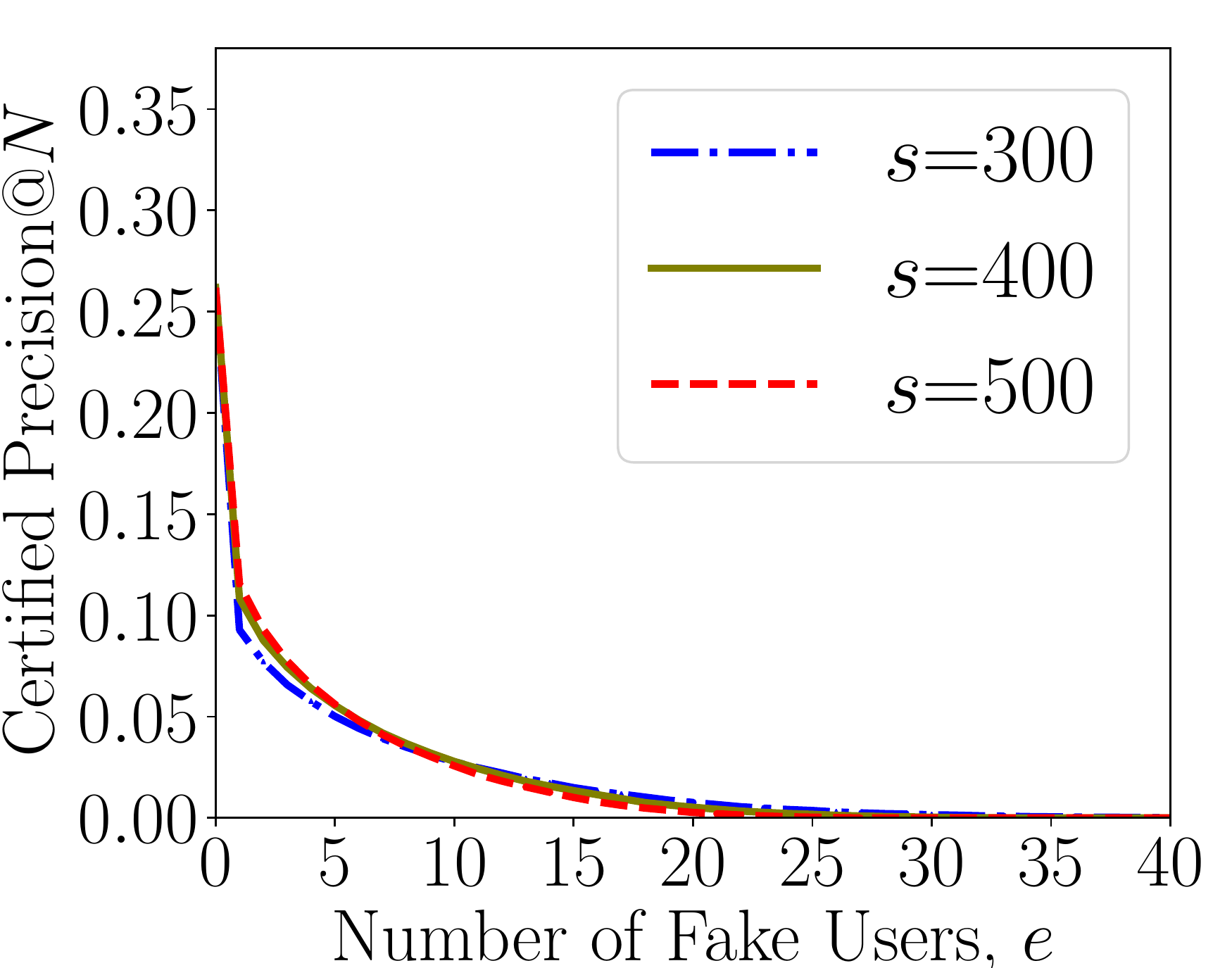}}
{\includegraphics[width=0.16\textwidth]{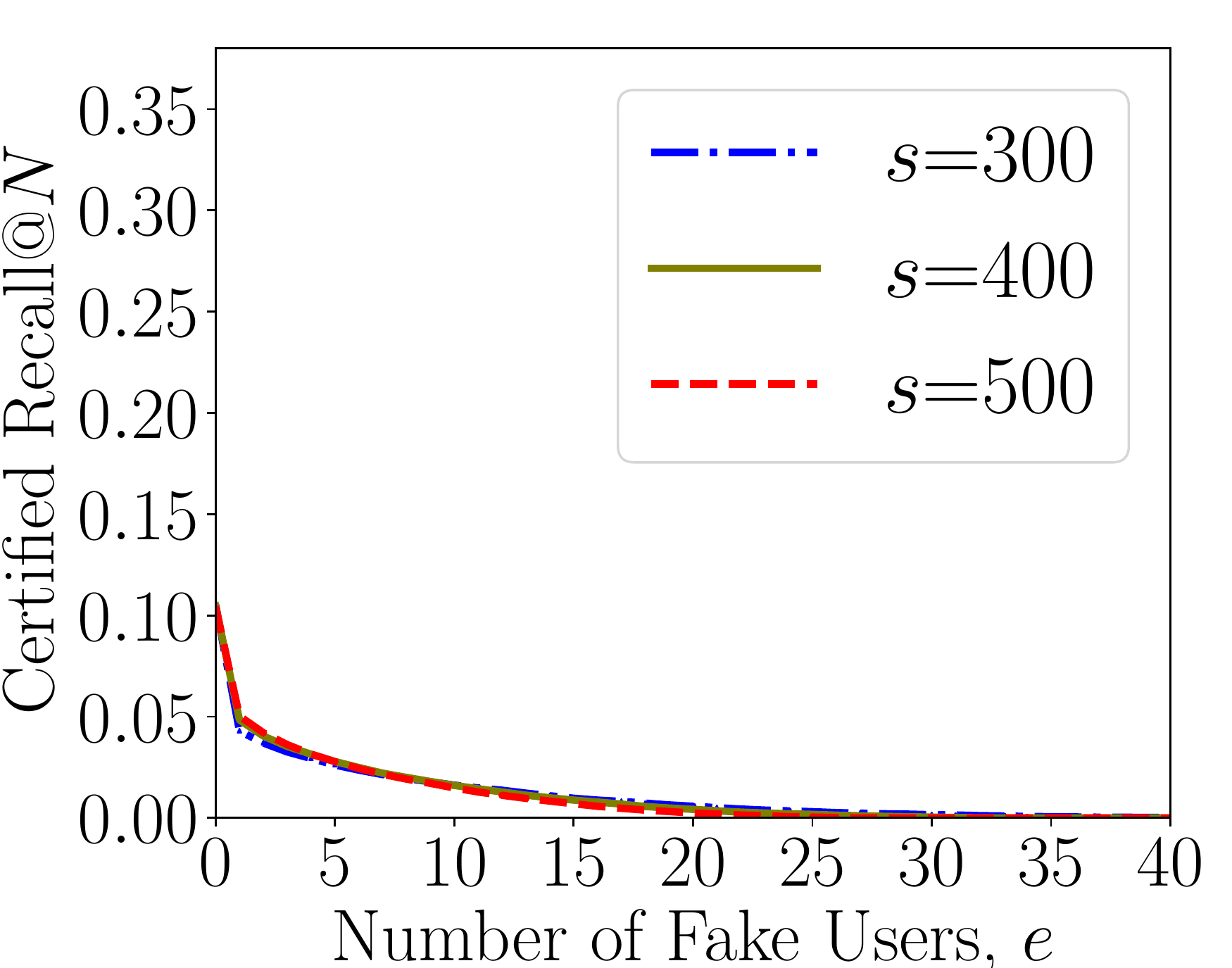}}
{\includegraphics[width=0.16\textwidth]{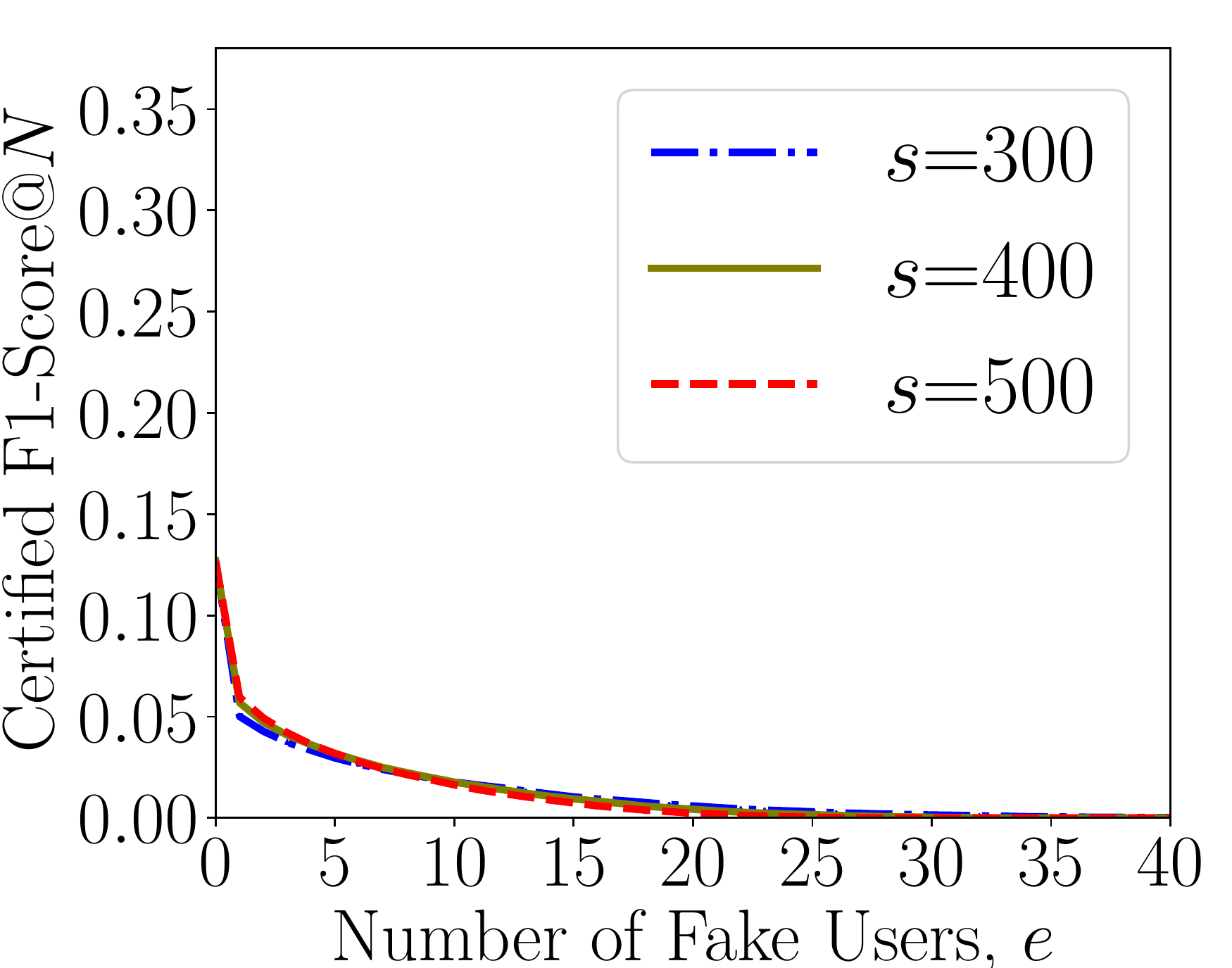}}
\vspace{-3mm}
\caption{Impact of $s$ on the certified Precision@$N$, certified Recall@$N$, and certified F1-Score@$N$ of ensemble IR for MovieLens-100k (left three) and MovieLens-1M (right three), where $N=10$.}
\label{impact_of_s}
\end{figure*}

\myparatight{Parameter setting} {PORE has the following parameters: $N'$ is the number of items recommended by a base recommender system for a user, $N$ is the number of items recommended by our ensemble recommender system for a user, $T$ is the number of base recommender systems, $1-\alpha$ is the confidence score, and $s$ is the number of rows sampled from the  rating-score matrix in each submatrix}. Unless otherwise mentioned, we adopt the following default parameter settings: $N'=1$, $N=10$, $T=100,000$, $\alpha =0.001$, $s=200$ for MovieLens-100k, and  $s=500$ for MovieLens-1M.  We will study the impact of each parameter on the certified Precision@$N$, certified Recall@$N$, and certified F1-Score@$N$ of PORE while fixing the remaining parameters to their default settings.  
We call our ensemble recommender system \emph{ensemble IR} (or \emph{ensemble BPR}) when the base  algorithm is IR (or BPR). By default, we use IR as the base  algorithm because of its scalability.

\subsection{Experimental Results} 

 We report Precision@$N$/Recall@$N$/F1-Score@$N$ under no attacks (i.e., $e=0$), while we report certified Precision@$N$ /Recall@$N$/F1-Score@$N$ under attacks (i.e., $e\geq 1$).

\myparatight{Our PORE outperforms bagging~\cite{jia2020intrinsic}} \CR{Figure~\ref{compare_with_existing_defense} compares our PORE with bagging on the two datasets. We find that our PORE substantially outperforms bagging when extended from classifiers to recommender systems. The reason is that PORE  jointly considers multiple items when deriving the certified intersection size. In contrast, bagging can only consider each item independently when extended to recommender systems, and thus achieve a suboptimal certified intersection size.} 

\myparatight{Impact of $N'$} Figure~\ref{impact_of_N_prime} shows the impact of $N'$. We have two observations. First, our method has similar Precision@$N$/Recall@$N$/F1-Score@$N$ for different $N'$ when there are no attacks (i.e., $e=0$).  Second, a smaller $N'$ achieves a lower certified Precision@$N$, certified Recall@$N$, or certified F1-Score@$N$ when $e$ is small (e.g., $e=1$), but the curve has a longer tail. In other words, a smaller $N'$ is more robust against data poisoning attacks as the number of fake users $e$ increases. The reason is that an attack has a smaller manipulation space when $N'$ is smaller. This observation is also consistent with our theoretical result in Equation~(\ref{optimization_problem_theorem_1}). Specifically, given the same item-probability lower/upper bounds, a smaller $N'$ may lead to a larger certified intersection size. 
Therefore, we set $N'$ to be $1$ by default in our experiments. 

\begin{table}[tp]\renewcommand{\arraystretch}{1.3}

  \centering
  \fontsize{7.5}{8}\selectfont
  \caption{Precision@$10$, Recall@$10$, and F1-Score@$10$ of IR, Ensemble IR, BPR, and Ensemble BPR under no attacks. } 
  \vspace{-3mm}
 \subfloat[MovieLens-100k]{ \begin{tabular}{|c|c|c|c|}
    \hline
    {Algorithm}   & Precision@10 & Recall@10 & F1-Score@10 \\
    \hline
    IR & 0.330753 & 0.176385 &0.193783  \\
    \hline
    Ensemble IR & 0.332556 & 0.178293 & 0.195624  \\
    \hline
    BPR & 0.349841 & 0.181807 & 0.199426  \\
    \hline
    Ensemble BPR & 0.352280 & 0.173296 & 0.193362 \\
    \hline
  \end{tabular}}
  
   \subfloat[MovieLens-1M]{ \begin{tabular}{|c|c|c|c|}
    \hline
    {Algorithm}    & Precision@10 & Recall@10 & F1-Score@10 \\
    \hline
    IR & 0.270116 & 0.104350 & 0.127704 \\
    \hline
    Ensemble IR & 0.262616 & 0.103638 & 0.126191 \\
    \hline
    BPR & 0.324449 & 0.118385 & 0.144765 \\
    \hline
    Ensemble BPR & 0.362945 & 0.119441 & 0.151509 \\
    \hline
  \end{tabular}}

  \label{asr-train-test}
  \end{table}

\begin{figure*}[!t]
	 \centering
	 \vspace{-2mm}
{\includegraphics[width=0.16\textwidth]{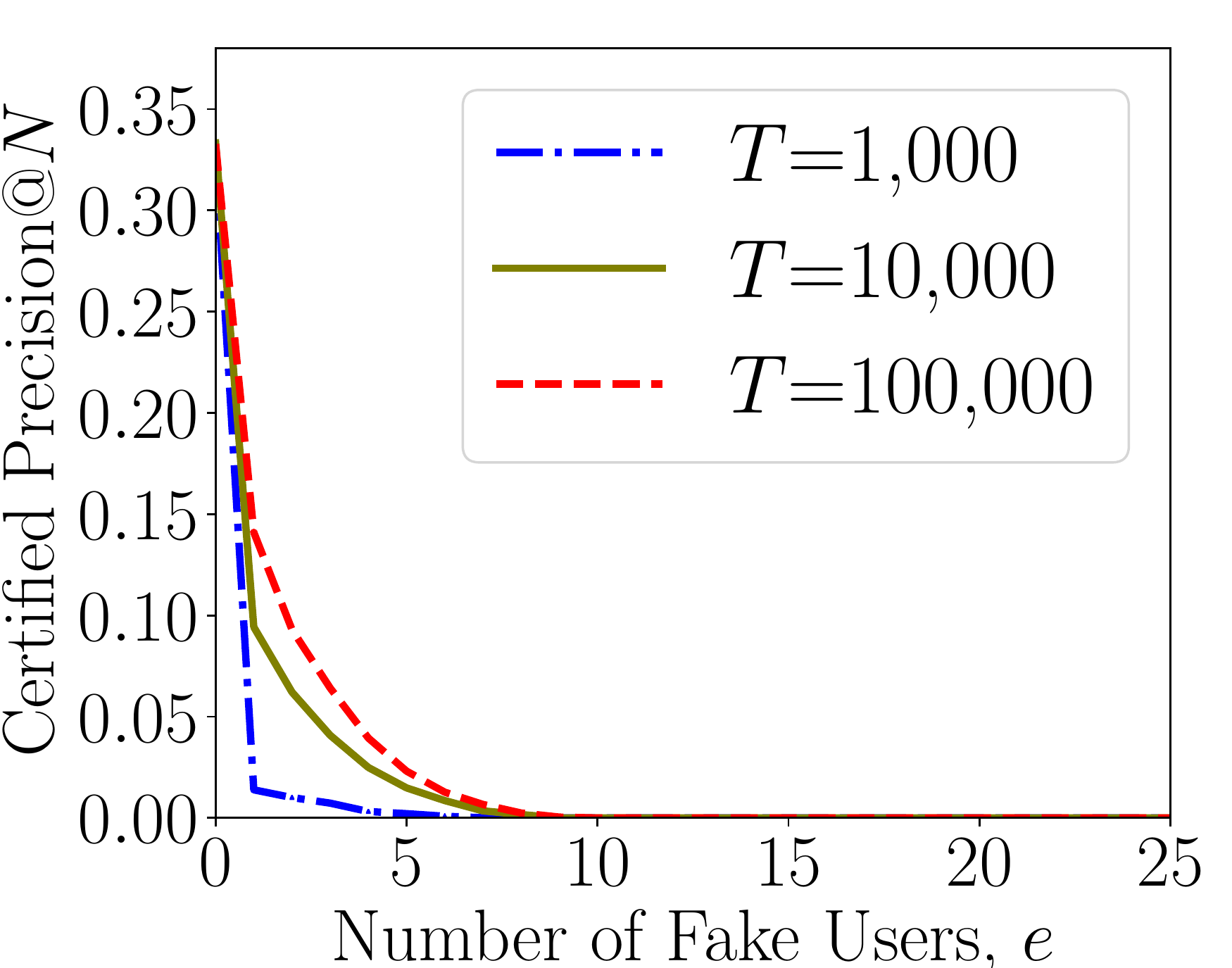}}
{\includegraphics[width=0.16\textwidth]{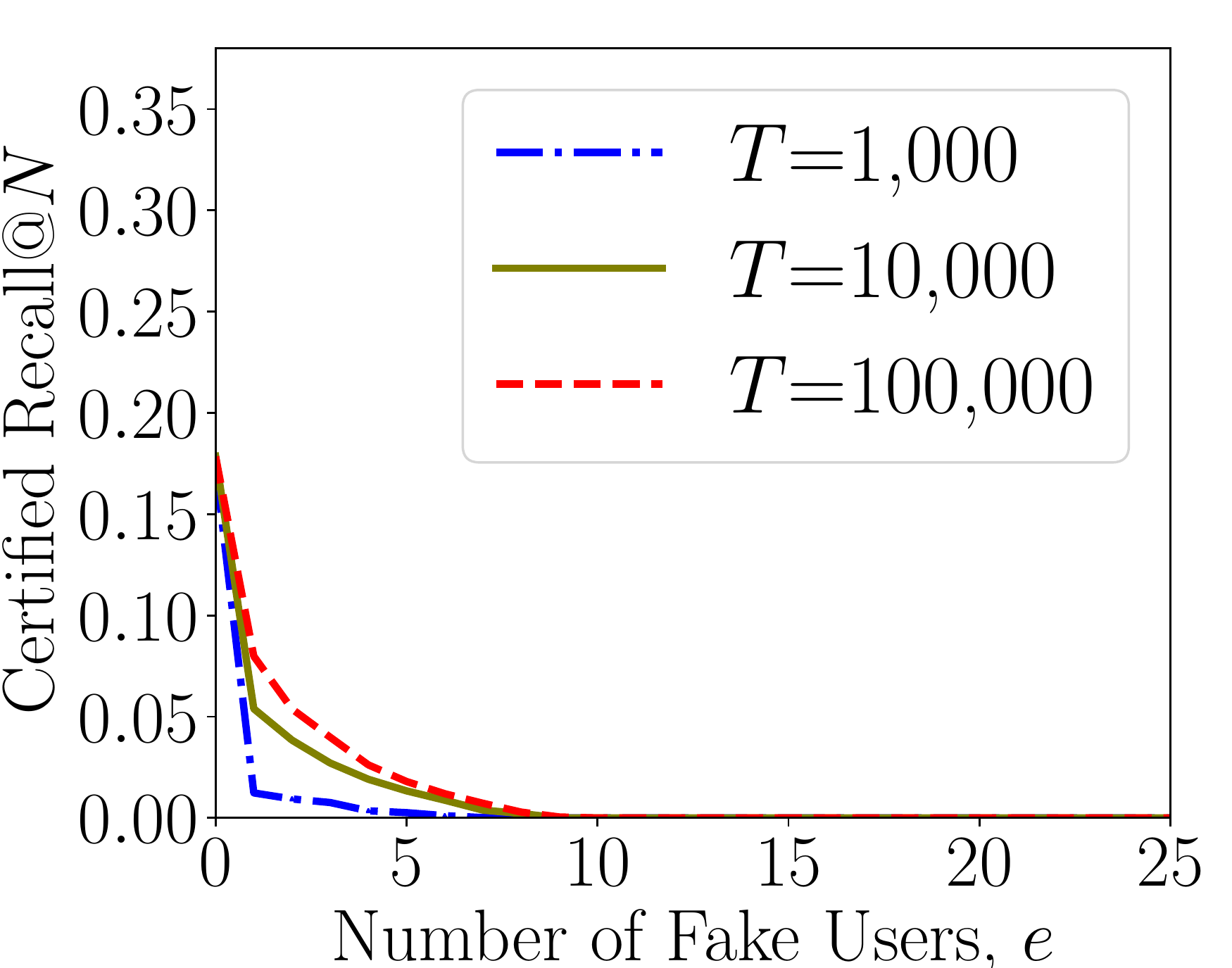}}
{\includegraphics[width=0.16\textwidth]{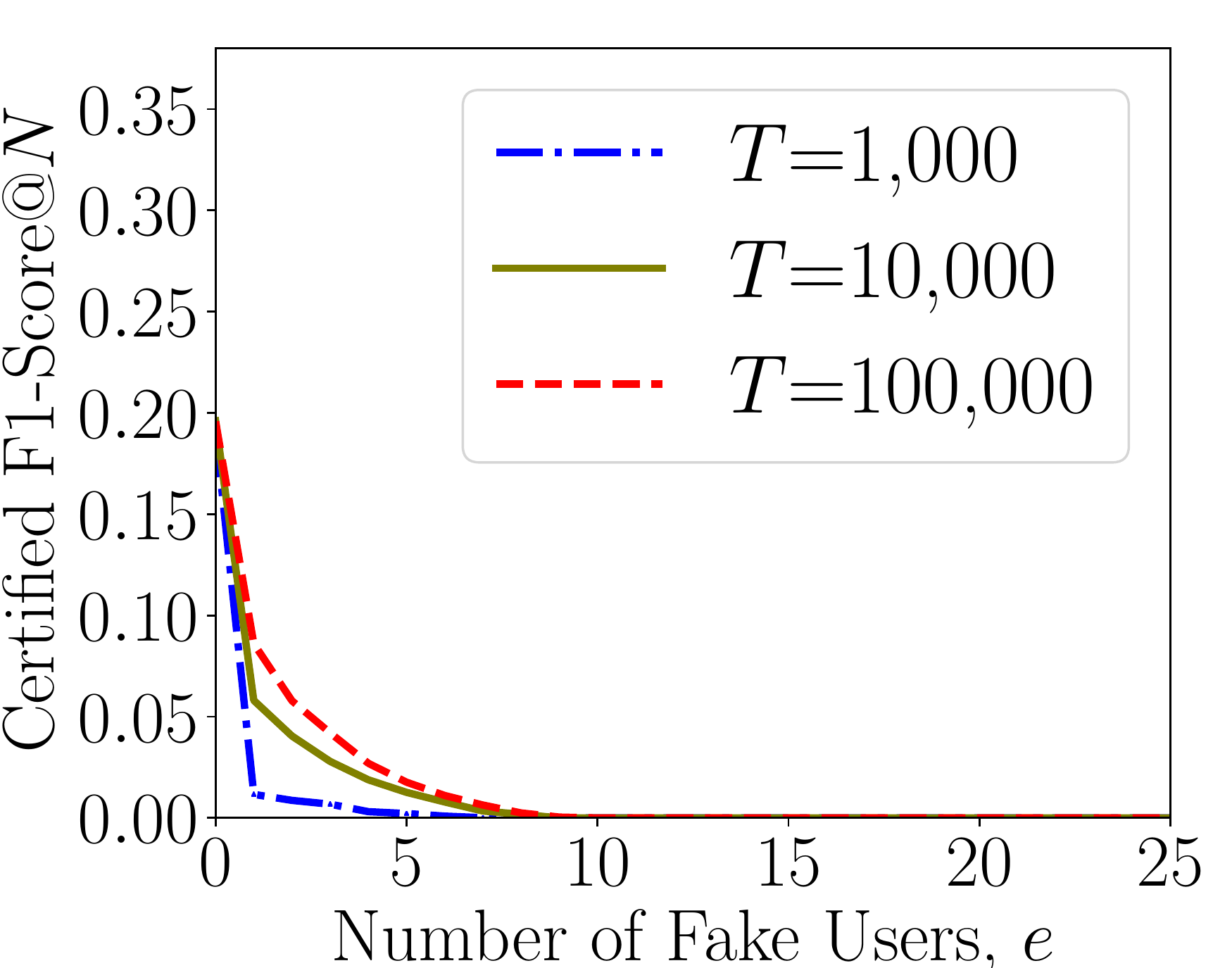}}
{\includegraphics[width=0.16\textwidth]{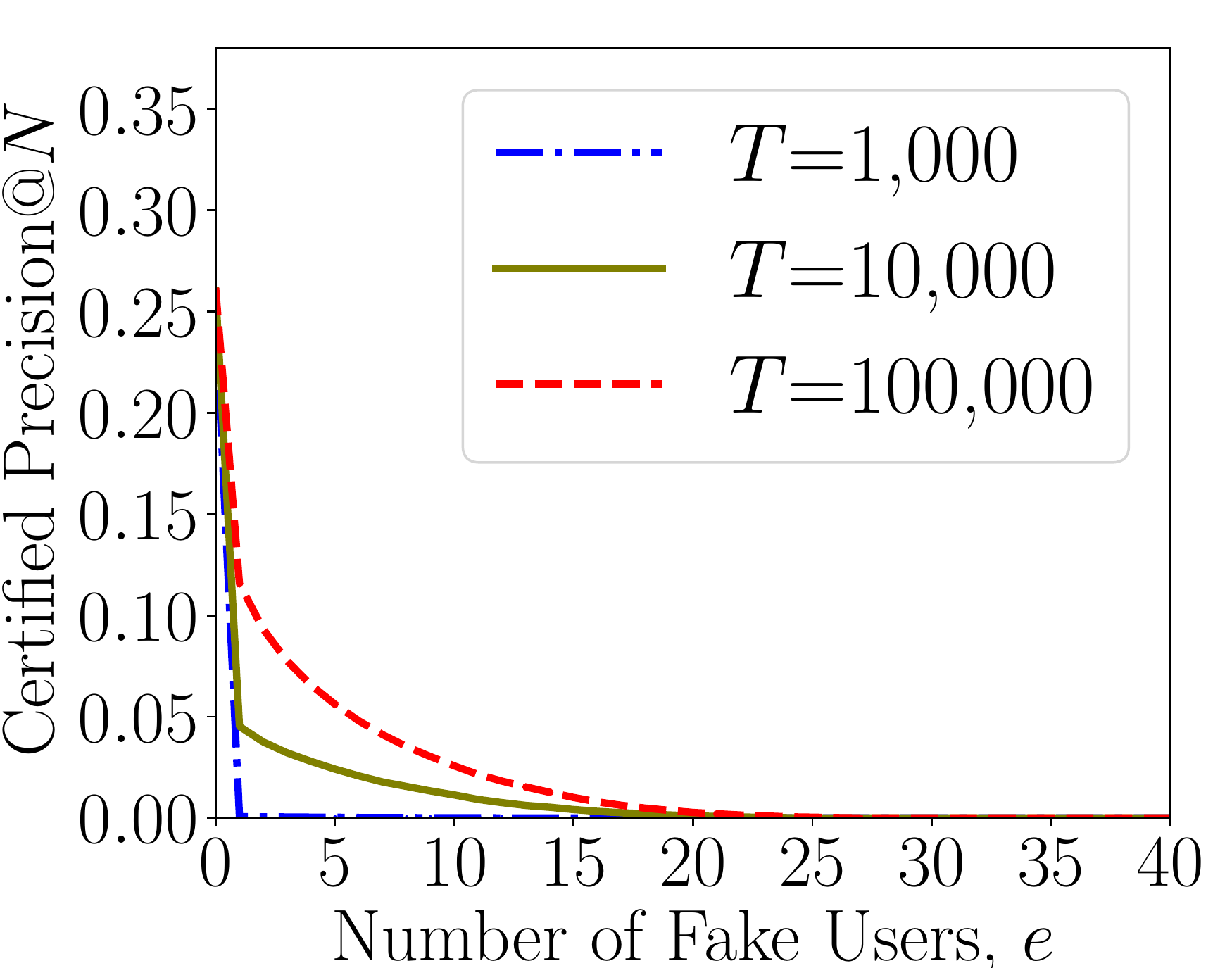}}
{\includegraphics[width=0.16\textwidth]{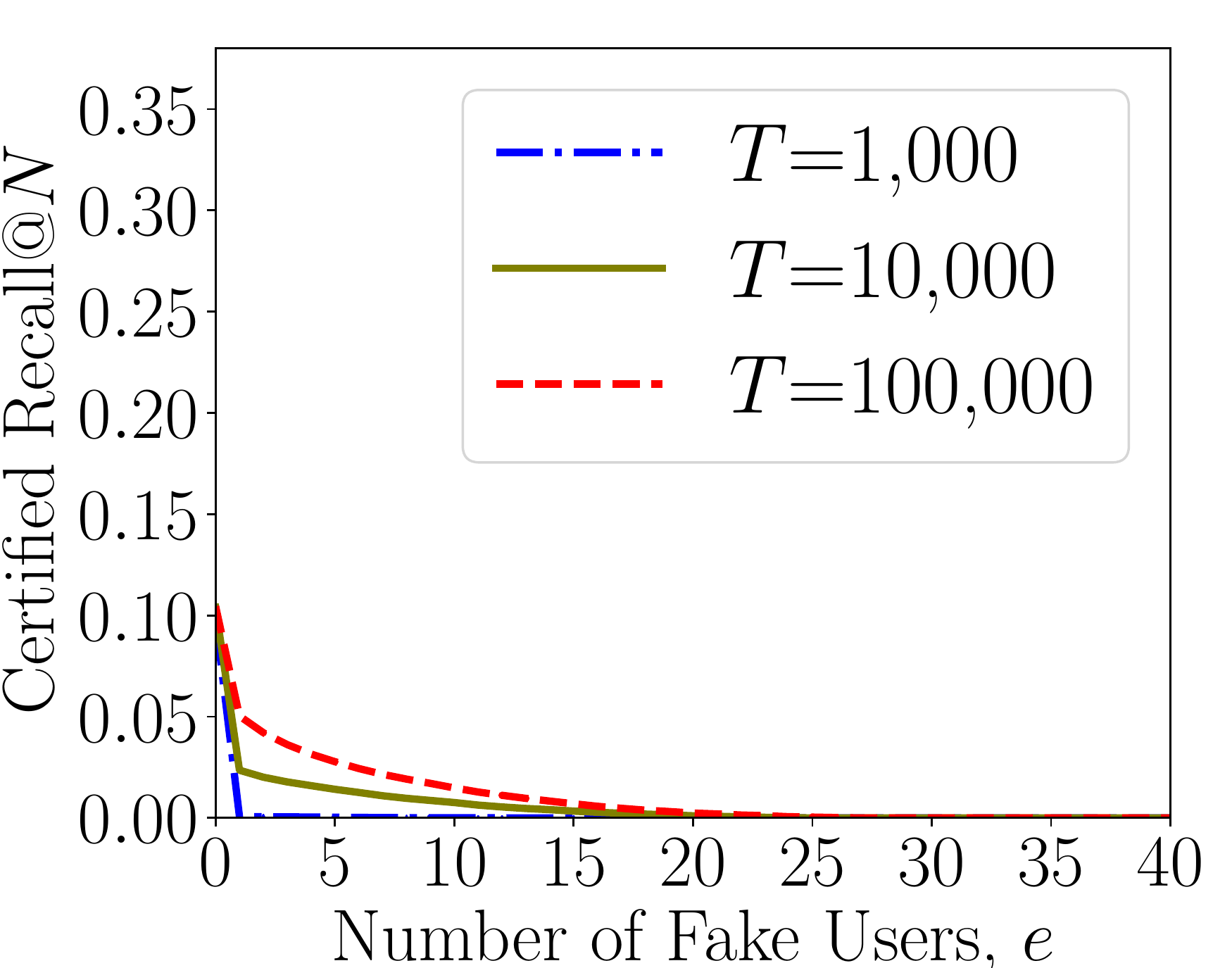}}
{\includegraphics[width=0.16\textwidth]{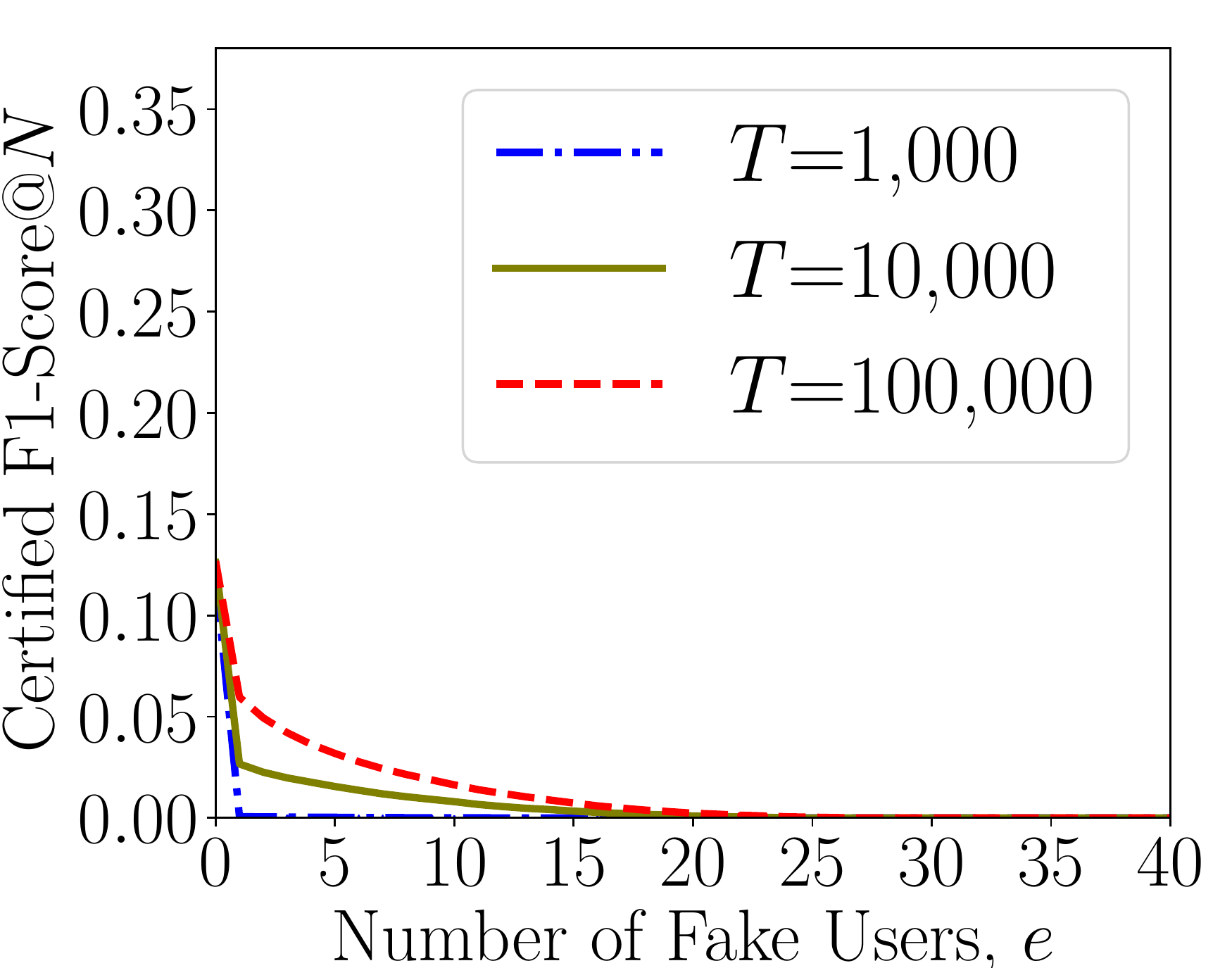}}
\vspace{-3mm}
\caption{Impact of $T$ on the certified Precision@$N$, certified Recall@$N$, and certified F1-Score@$N$ of ensemble IR for MovieLens-100k (left three) and MovieLens-1M (right three), where $N=10$.}
\label{impact_of_T}
\end{figure*}

\begin{figure*}[!t]
	 \centering
   \vspace{-3mm}
{\includegraphics[width=0.16\textwidth]{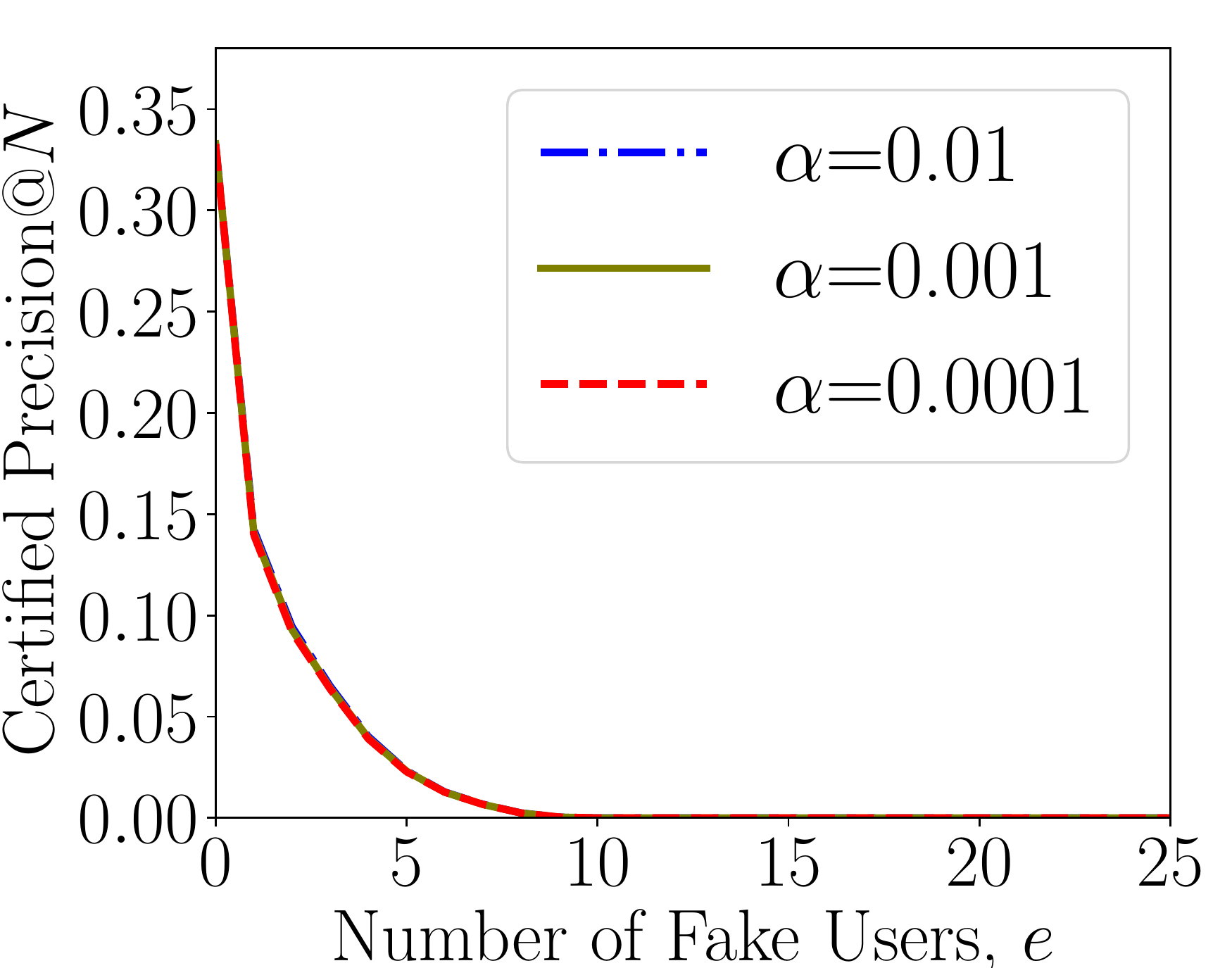}}
{\includegraphics[width=0.16\textwidth]{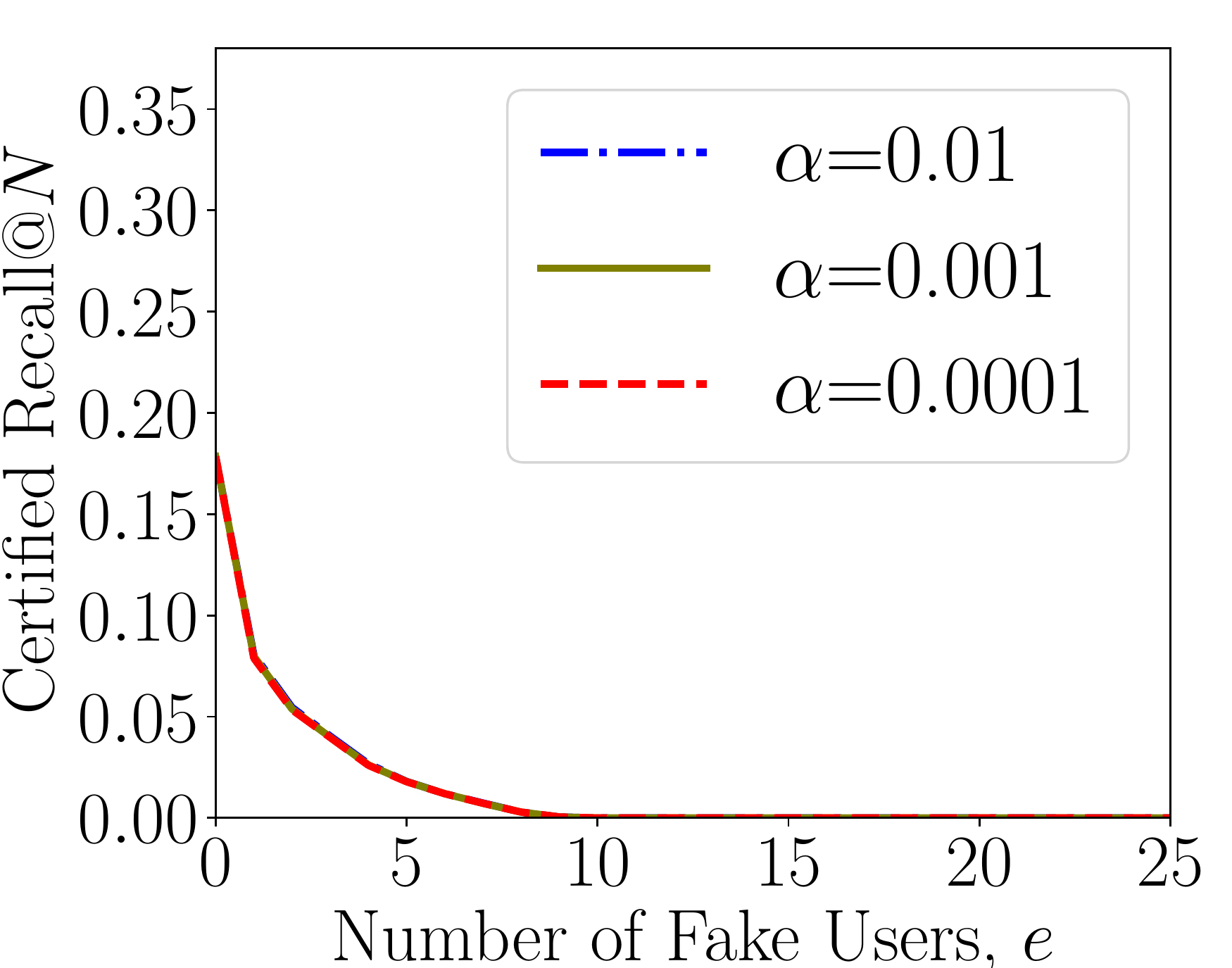}}
{\includegraphics[width=0.16\textwidth]{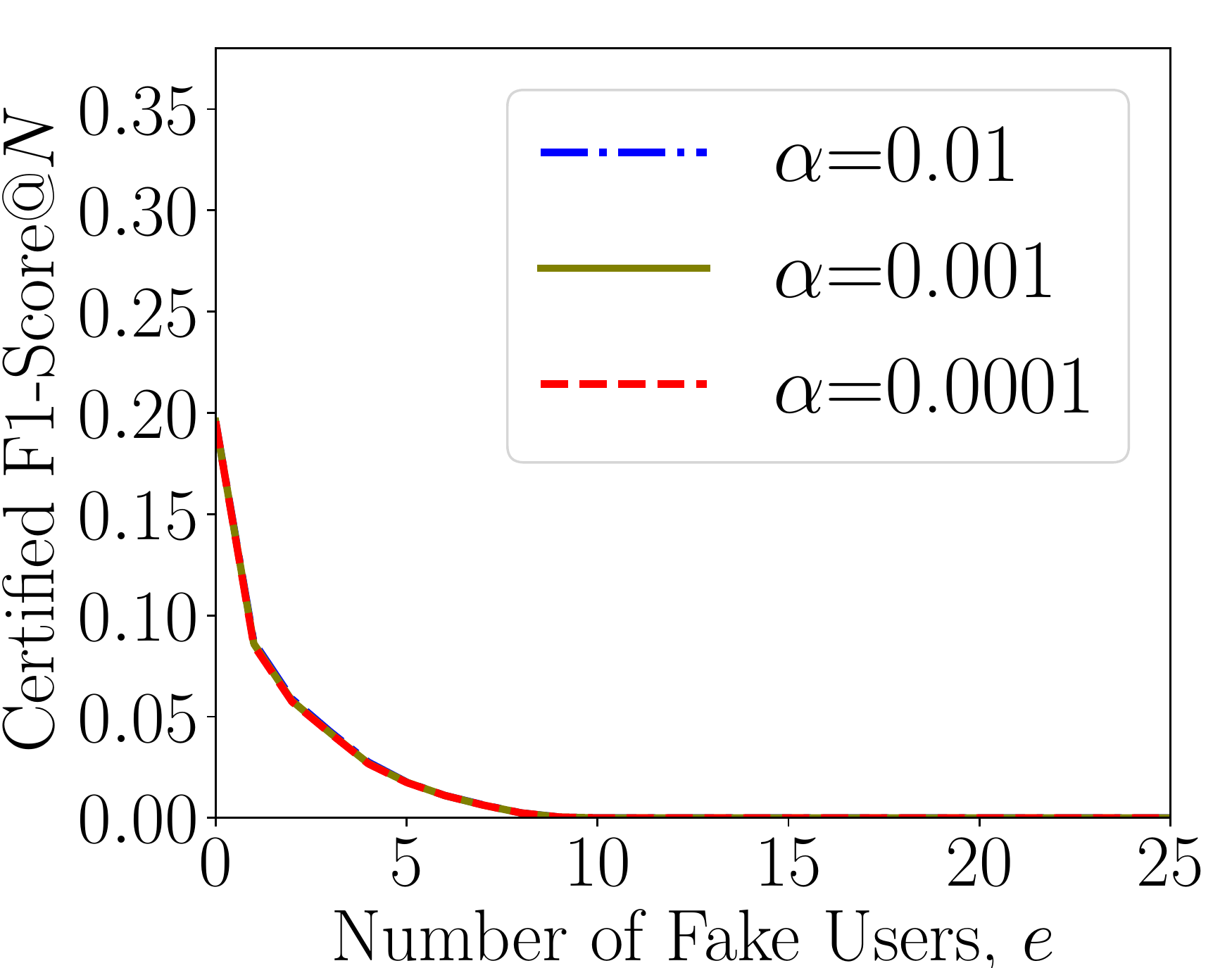}}
{\includegraphics[width=0.16\textwidth]{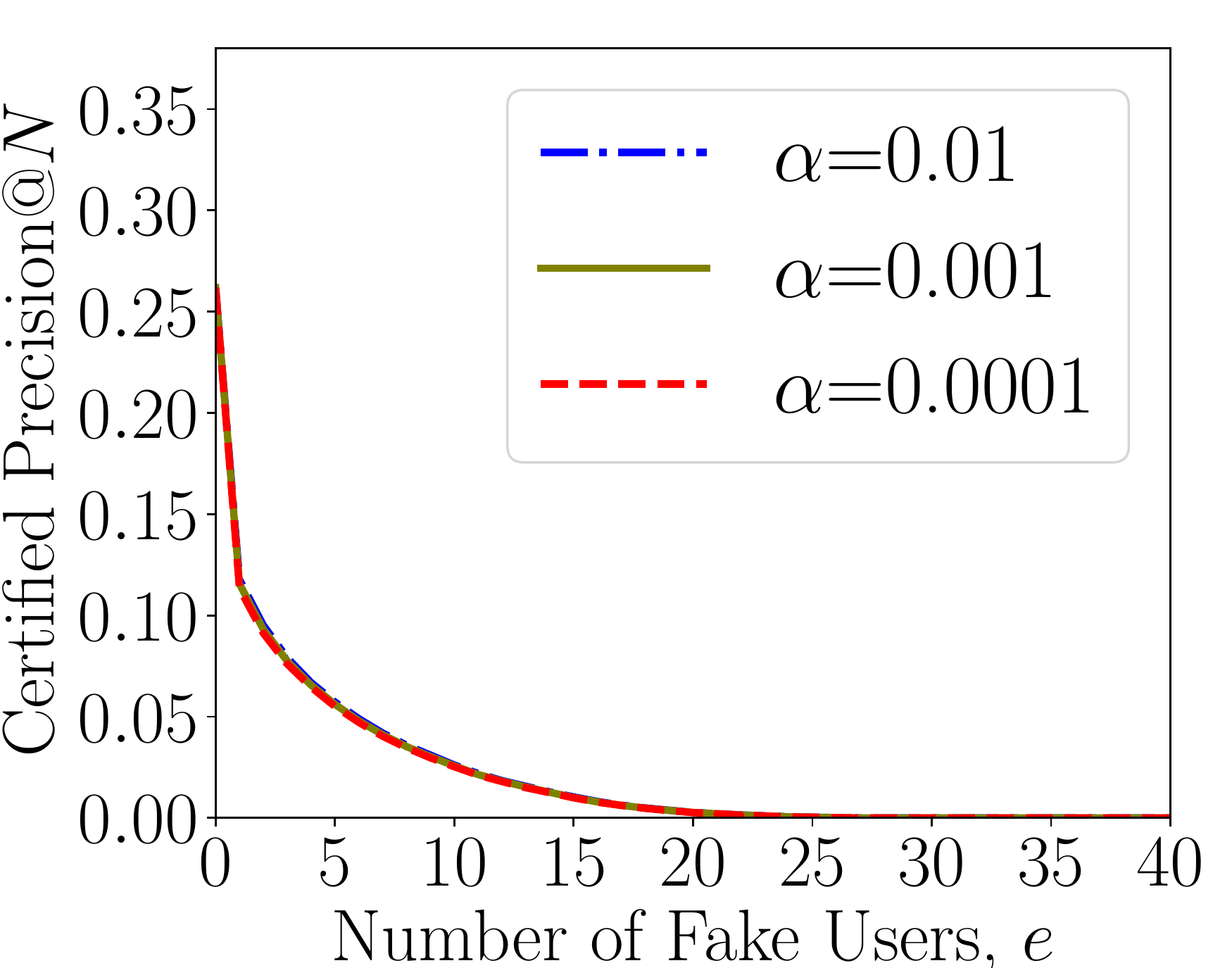}}
{\includegraphics[width=0.16\textwidth]{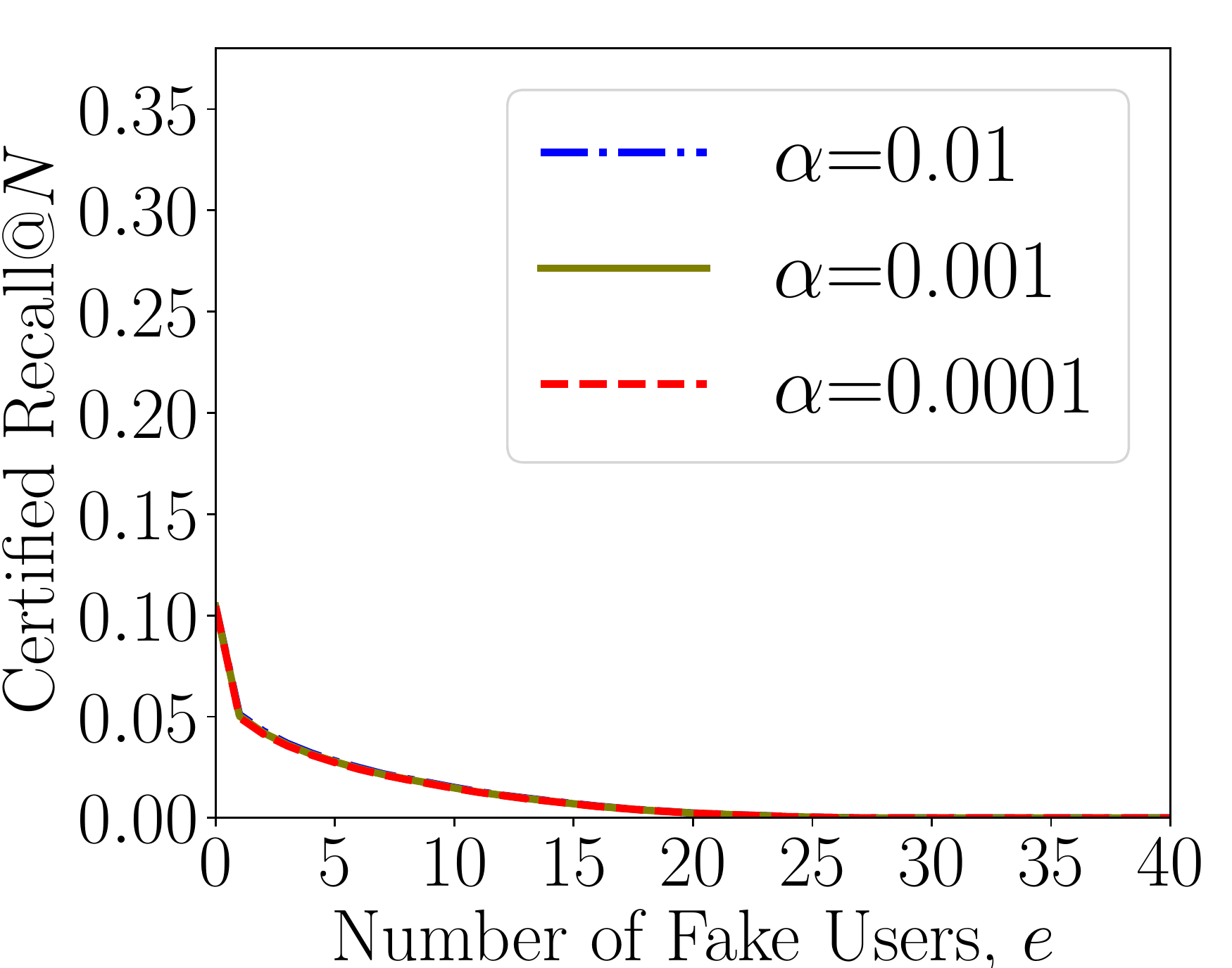}}
{\includegraphics[width=0.16\textwidth]{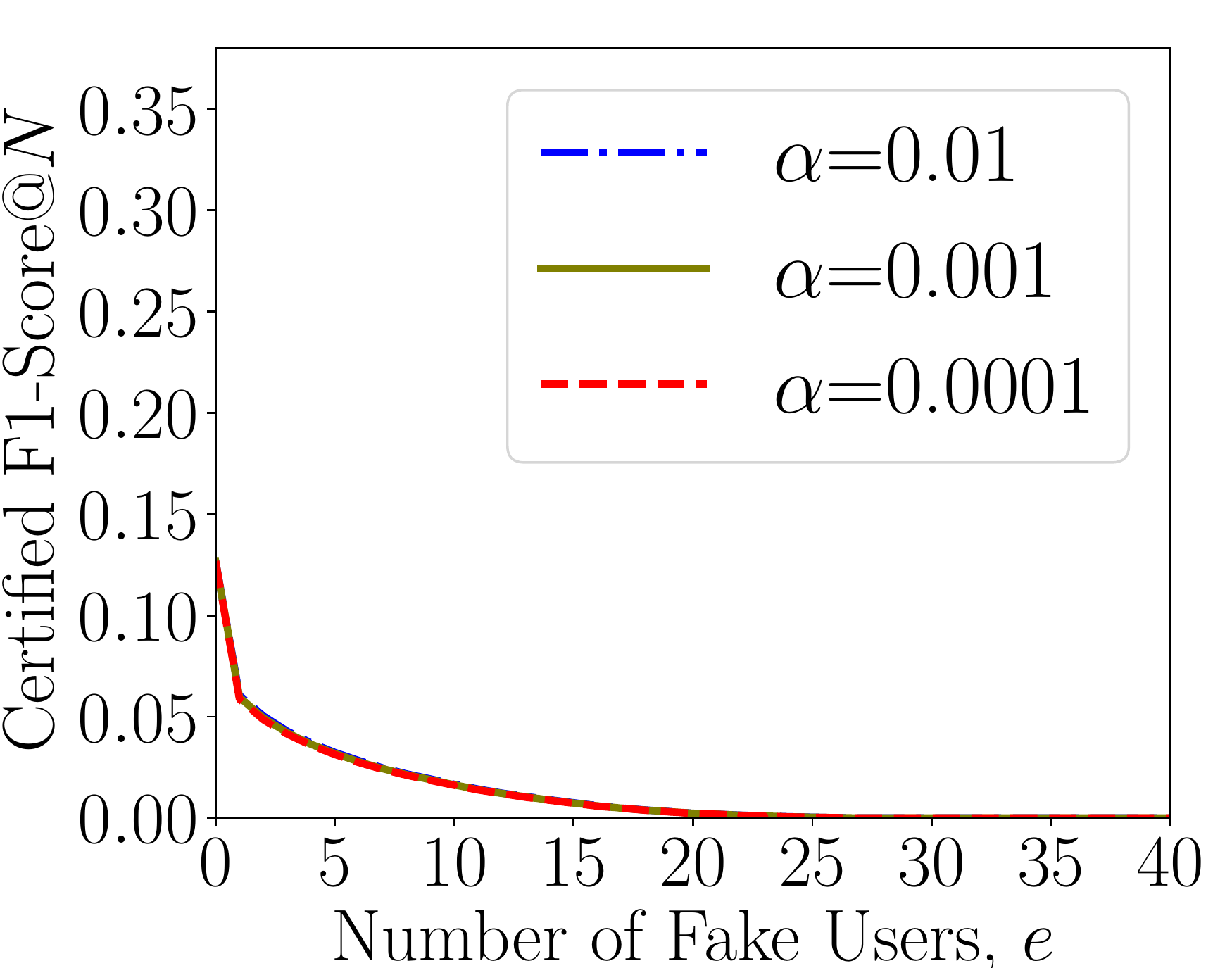}}
\vspace{-3mm}
\caption{Impact of $\alpha$ on the certified Precision@$N$, certified Recall@$N$, and certified F1-Score@$N$ of ensemble IR for MovieLens-100k (left three) and MovieLens-1M (right three), where $N=10$.}
\label{impact_of_alpha}
\end{figure*}

\begin{figure*}[!t]
	 \centering
  \vspace{-3mm}
{\includegraphics[width=0.16\textwidth]{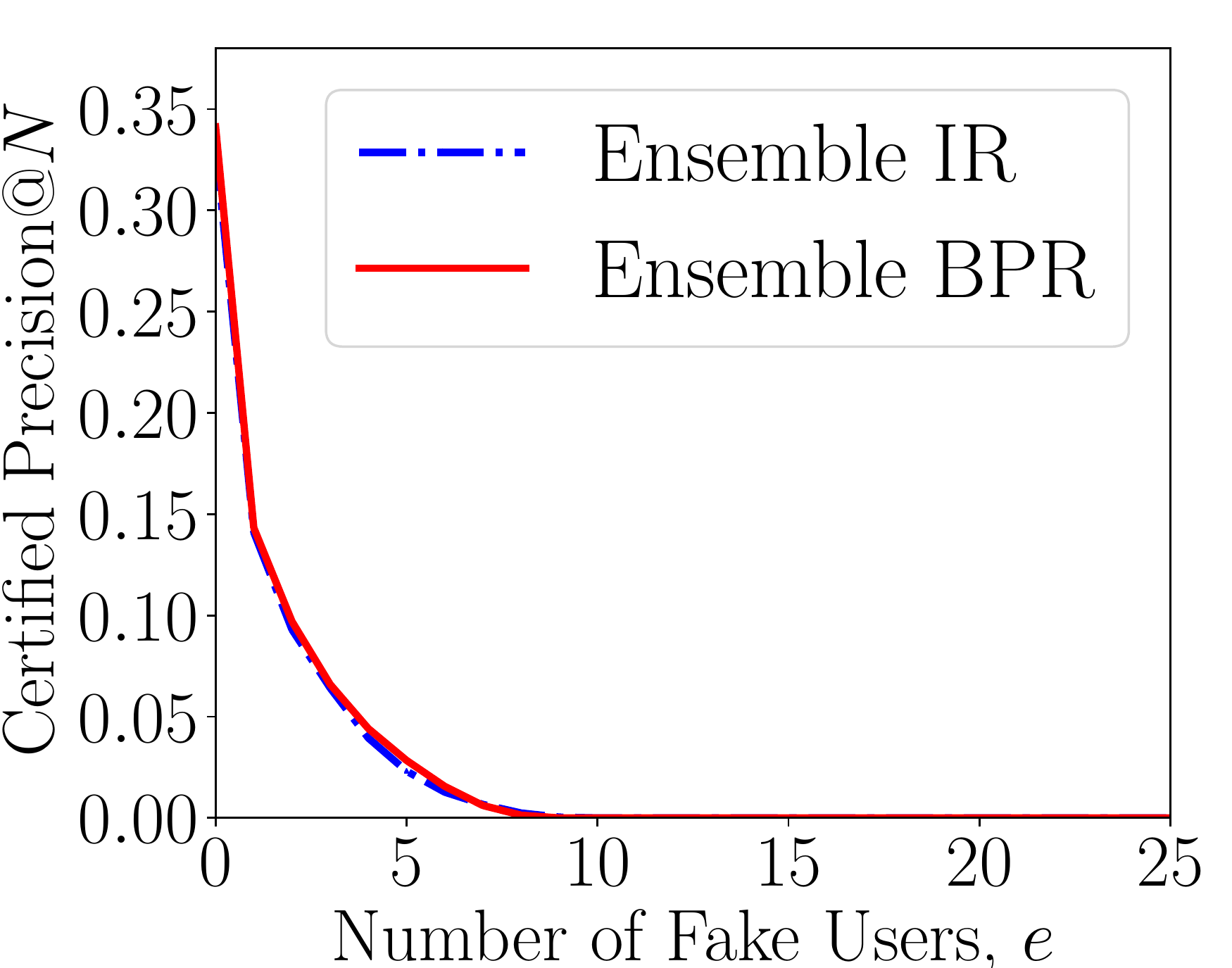}}
{\includegraphics[width=0.16\textwidth]{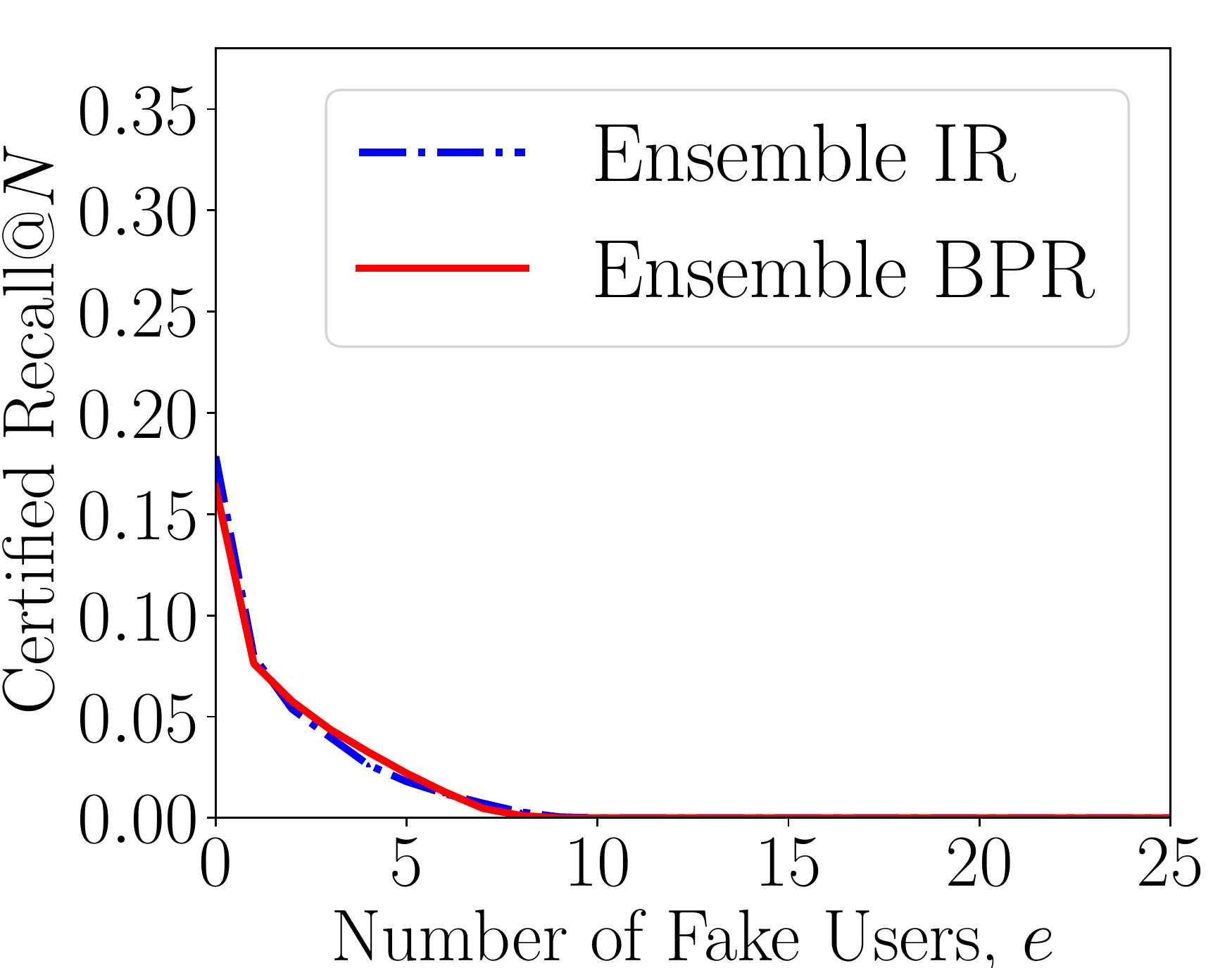}}
{\includegraphics[width=0.16\textwidth]{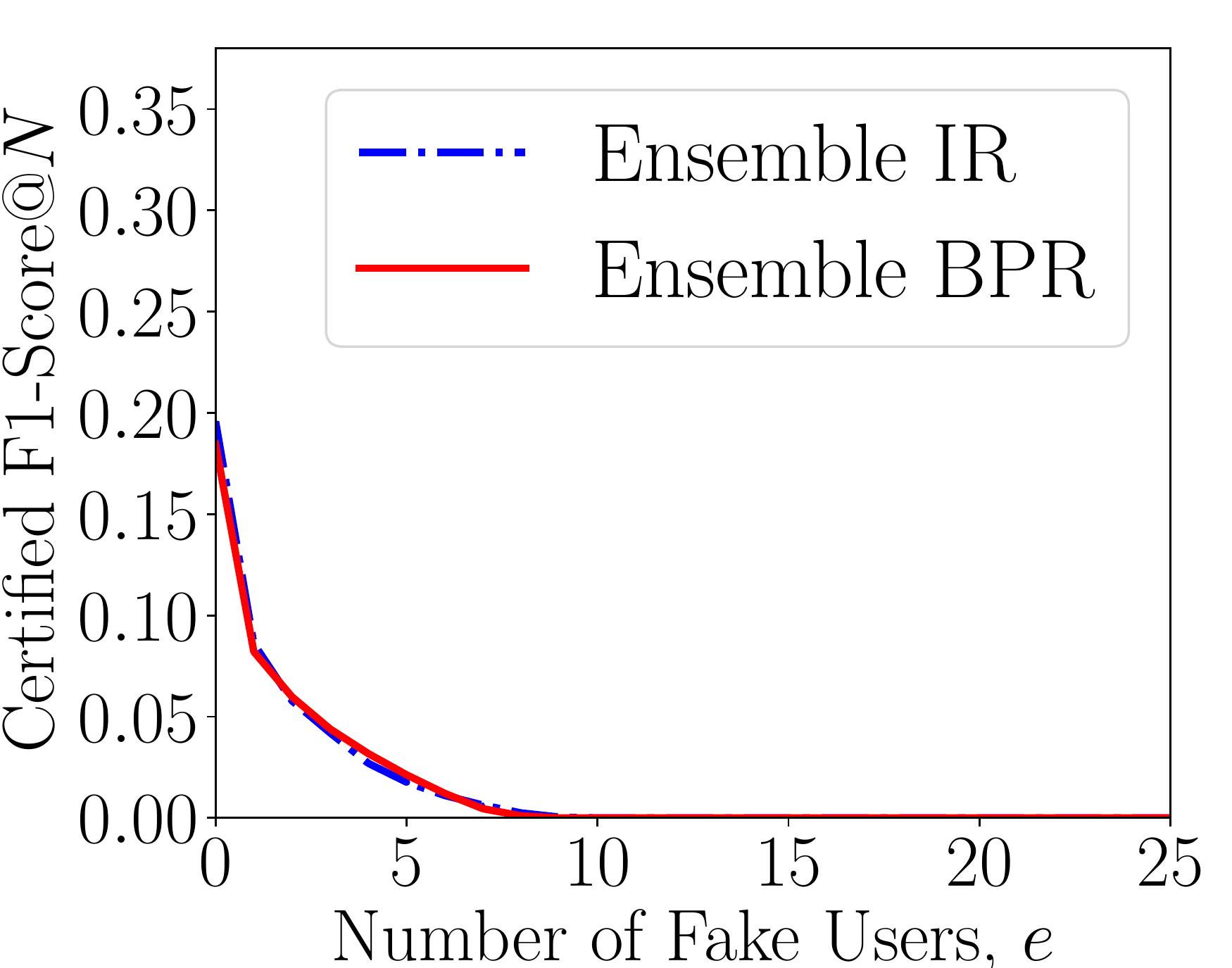}}
{\includegraphics[width=0.16\textwidth]{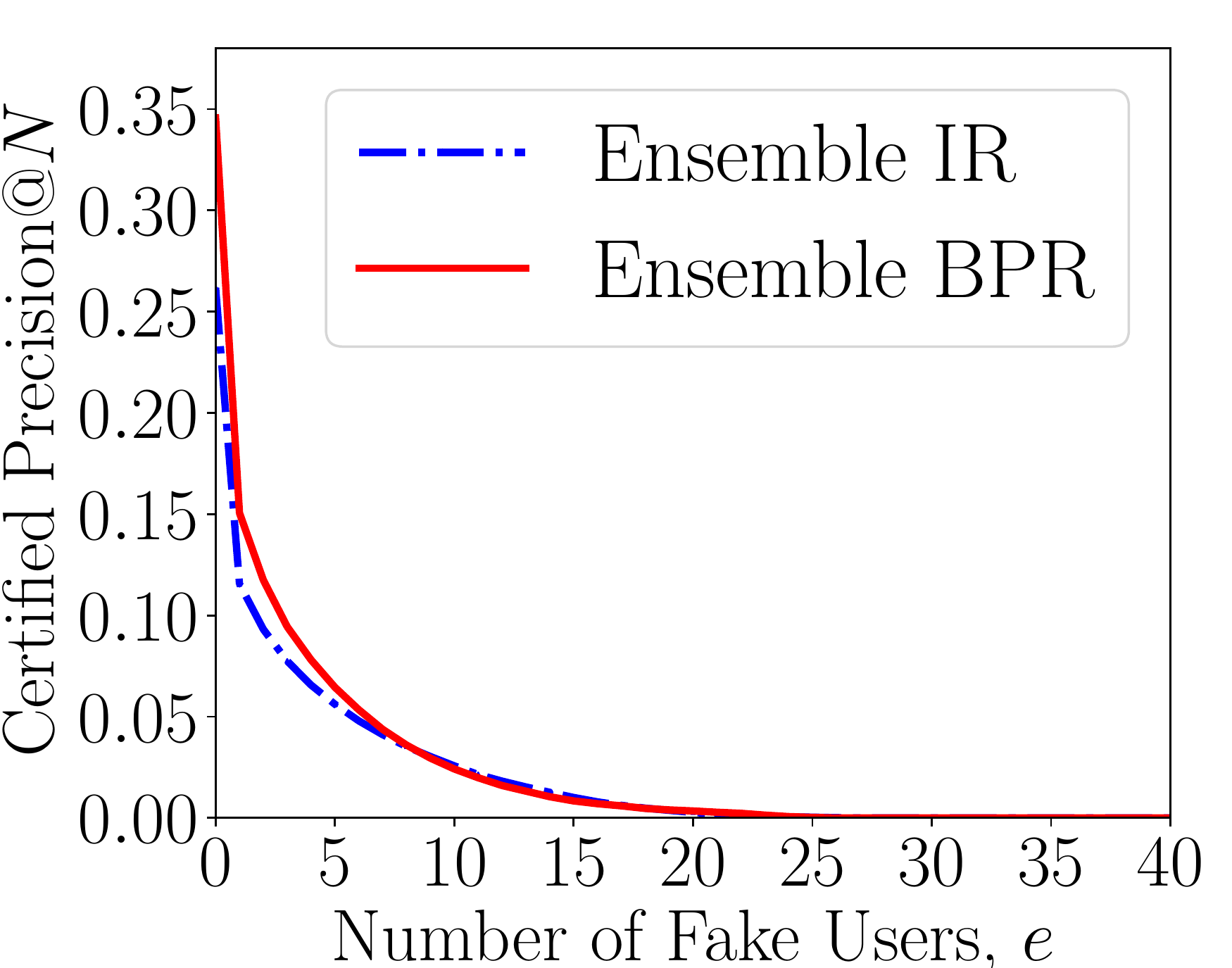}}
{\includegraphics[width=0.16\textwidth]{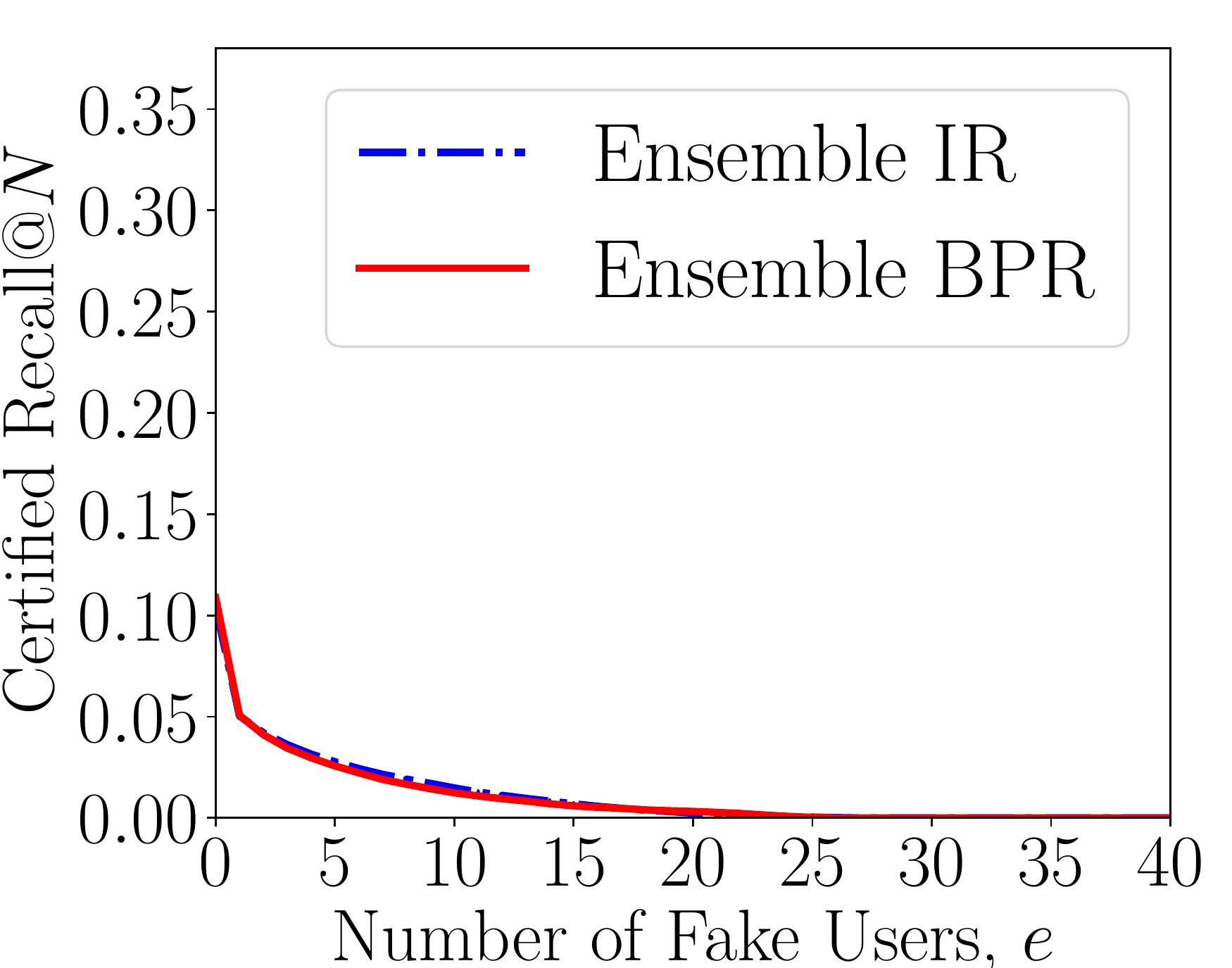}}
{\includegraphics[width=0.16\textwidth]{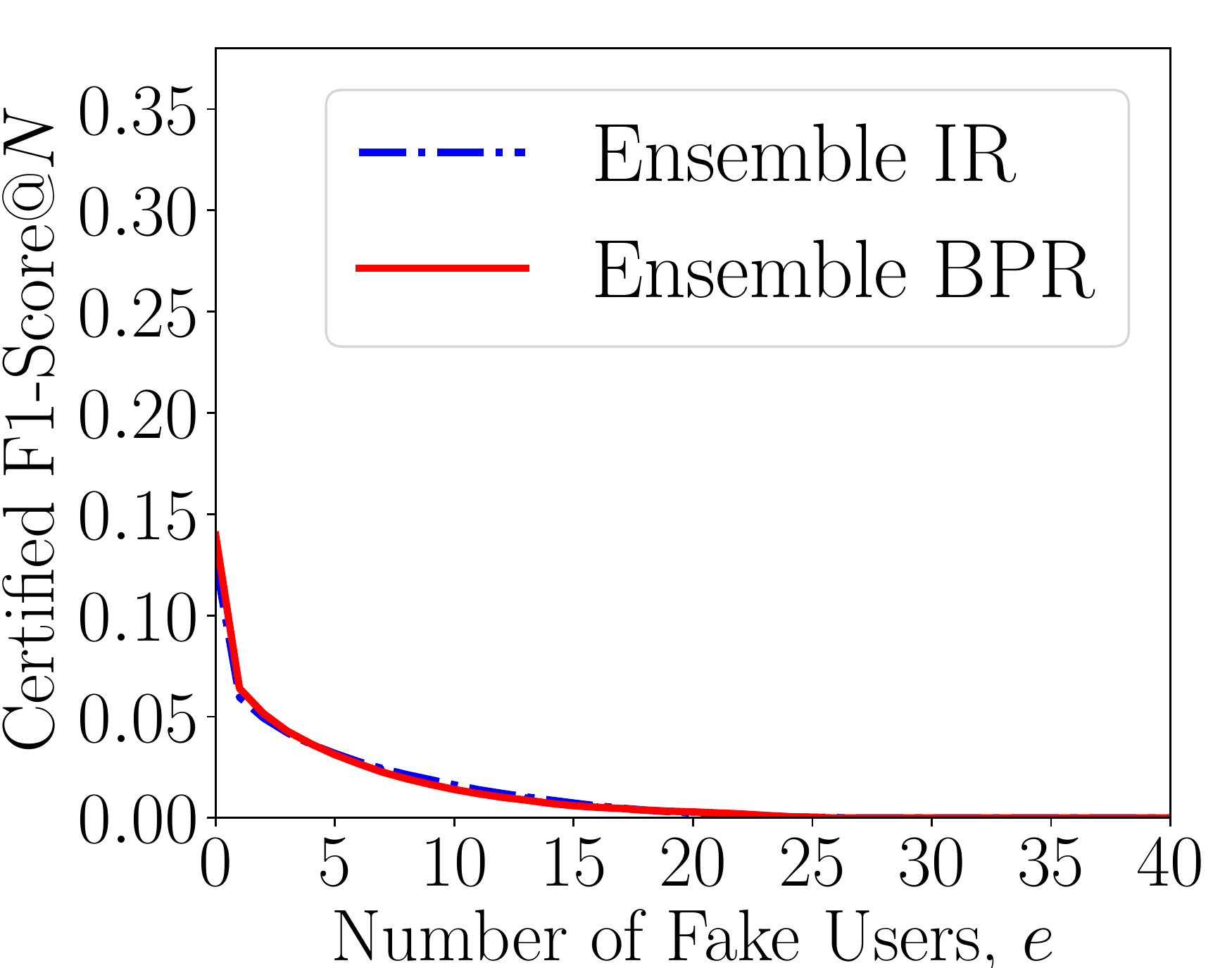}}
\vspace{-3mm}
\caption{Comparing the certified Precision@$N$, certified Recall@$N$, and certified F1-Score@$N$ of ensemble IR and ensemble BPR for MovieLens-100k (left three) and MovieLens-1M (right three), where $N=10$.}
\label{compare_sar_bpr}
\end{figure*}

\vspace{-1mm}
\myparatight{Impact of $N$} Figure~\ref{impact_of_N} shows the impact of $N$. The results show that  $N$ achieves a tradeoff between Precision@$N$ under no attacks (i.e., $e=0$) and robustness under attacks. Specifically, a smaller $N$ achieves a higher Precision@$N$ under no attacks but the certified Precision@$N$ decreases more quickly as $e$ increases. The certified Recall@$N$ increases as $N$ increases. The reason is that more items are recommended to each user as $N$ increases. 
The certified F1-Score@$N$ drops more quickly as $e$ increases when $N$ is smaller, because the certified Recall@$N$ drops more quickly.

\myparatight{Impact of $s$} Figure~\ref{impact_of_s} shows the impact of $s$. We have two observations. First, our method achieves similar Precision@$N$, Recall@$N$, and F1-Score@$N$ for different $s$ when there are no attacks (i.e., $e=0$). Second, a larger $s$ achieves a larger certified Precision@$N$, certified Recall@$N$, or certified F1-Score@$N$ when $e$ is small, but they decrease more quickly as $e$ increases. This is because it's more likely to sample fake users in a submatrix when $s$ is larger.

\myparatight{Impact of $T$ and $\alpha$}  Figure~\ref{impact_of_T} and~\ref{impact_of_alpha} show the impact of $T$ and $\alpha$, respectively. We have the following observations. First, Precision@$N$, Recall@$N$, or F1-Score@$N$ is similar for different $T$ when there are no attacks. In other words, a small $T$ is enough for our ensemble recommender system to achieve good recommendation performance when there are no attacks. Second,  certified Precision@$N$, certified Recall@$N$, or certified F1-Score@$N$ increases as $T$ or $\alpha$ increases. The reason is that a larger $T$ or $\alpha$ can produce tighter estimated item-probability lower/upper bounds, based on which we may compute larger certified intersection sizes $r$ in our Algorithm~\ref{alg:certify}. Therefore, we use a larger $T$ by default in our experiments to better show the certified Precision@$N$, certified Recall@$N$, and certified F1-Score@$N$ of PORE. We also observe that the certified Precision@$N$, certified Recall@$N$, and certified F1-Score@$N$ are insensitive to $\alpha$ once it is small enough. 

\myparatight{Ensemble IR vs. ensemble BPR}
Figure~\ref{compare_sar_bpr} compares ensemble IR and ensemble BPR. The results show that they achieve similar certified Precision@$N$/Recall@$N$/F1-Score@$N$. One exception is that ensemble BPR achieves higher certified Precision@$N$ on MovieLens-1M dataset. 

\myparatight{Standard recommender system vs. ensemble recommender system under no attacks} Table~\ref{asr-train-test} compares  standard recommender systems and our ensemble recommender systems with respect to the standard Precision@$N$, Recall@$N$, and F1-Score@$N$ when there are no attacks, where  $s=300$ for MovieLens-100k and $s=1,000$ for MovieLens-1M. A standard recommender system leverages IR (or BPR) to train a single recommender system on the entire rating-score matrix.  The results show that our ensemble recommender system achieves comparable performance with a standard recommender system when there are no attacks. \neil{Ensemble BPR achieves higher Precision@N than ensemble IR on both datasets. The reason is that  BPR achieves higher precision than  IR at training the base recommender systems.}

\begin{figure}[!t]
	 \centering
{\includegraphics[width=0.23\textwidth]{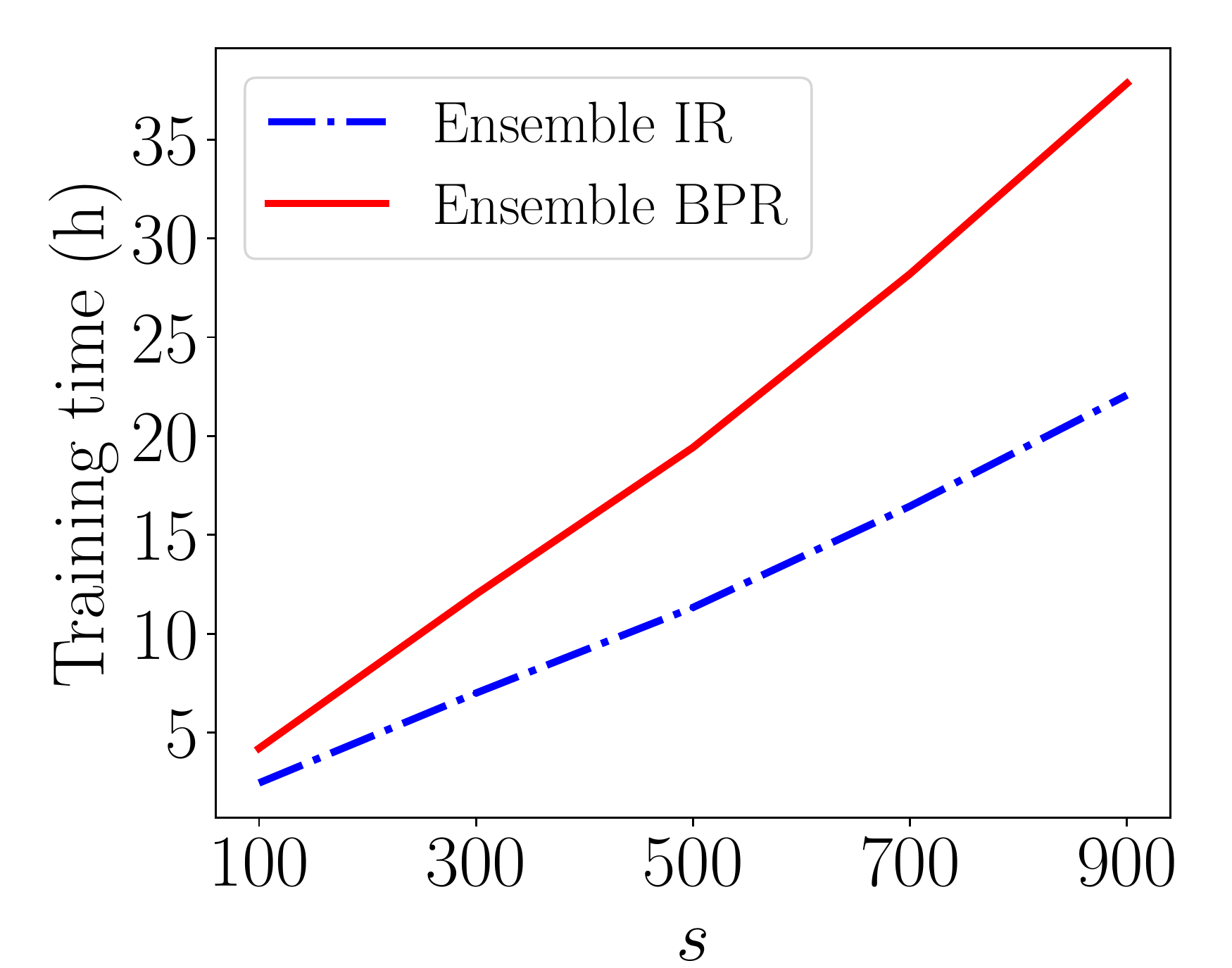}}
{\includegraphics[width=0.23\textwidth]{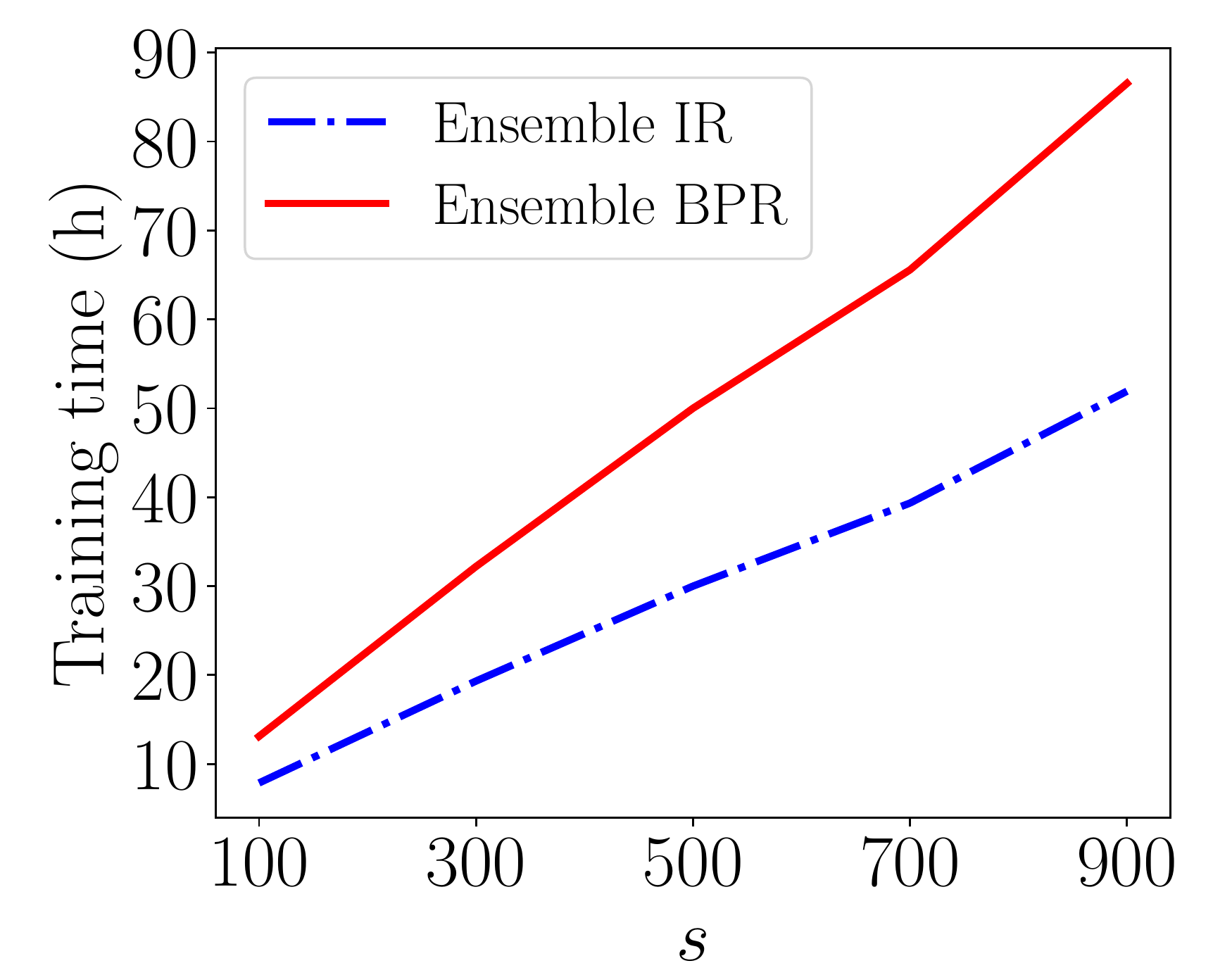}}
\vspace{-1mm}
\caption{\neil{Training time of PORE as a function of $s$ on MovieLens-100k (left) and MovieLens-1M (right).}}
\label{computation_complexity_s}
\end{figure}

\begin{figure}[!t]
	 \centering
{\includegraphics[width=0.23\textwidth]{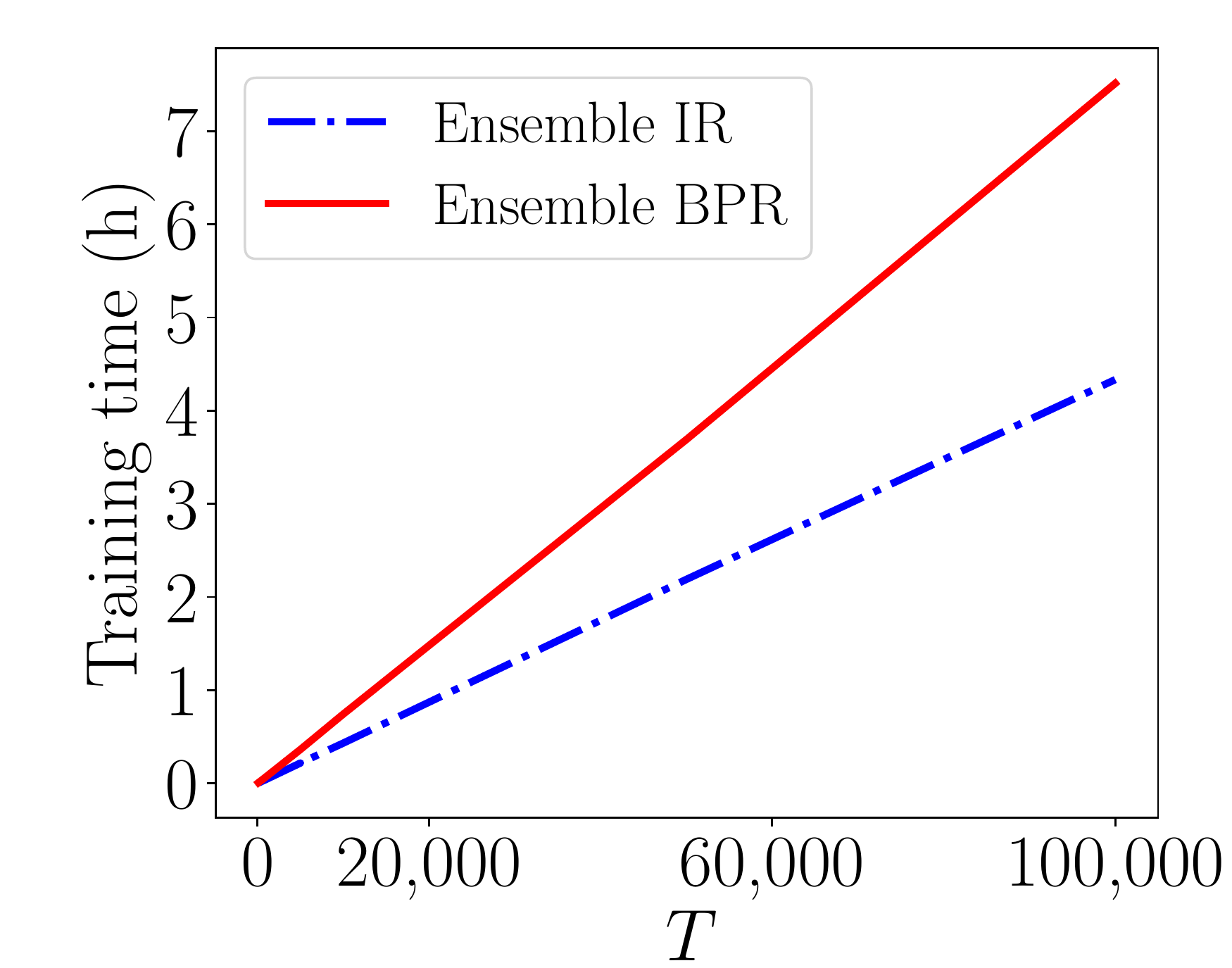}}
{\includegraphics[width=0.23\textwidth]{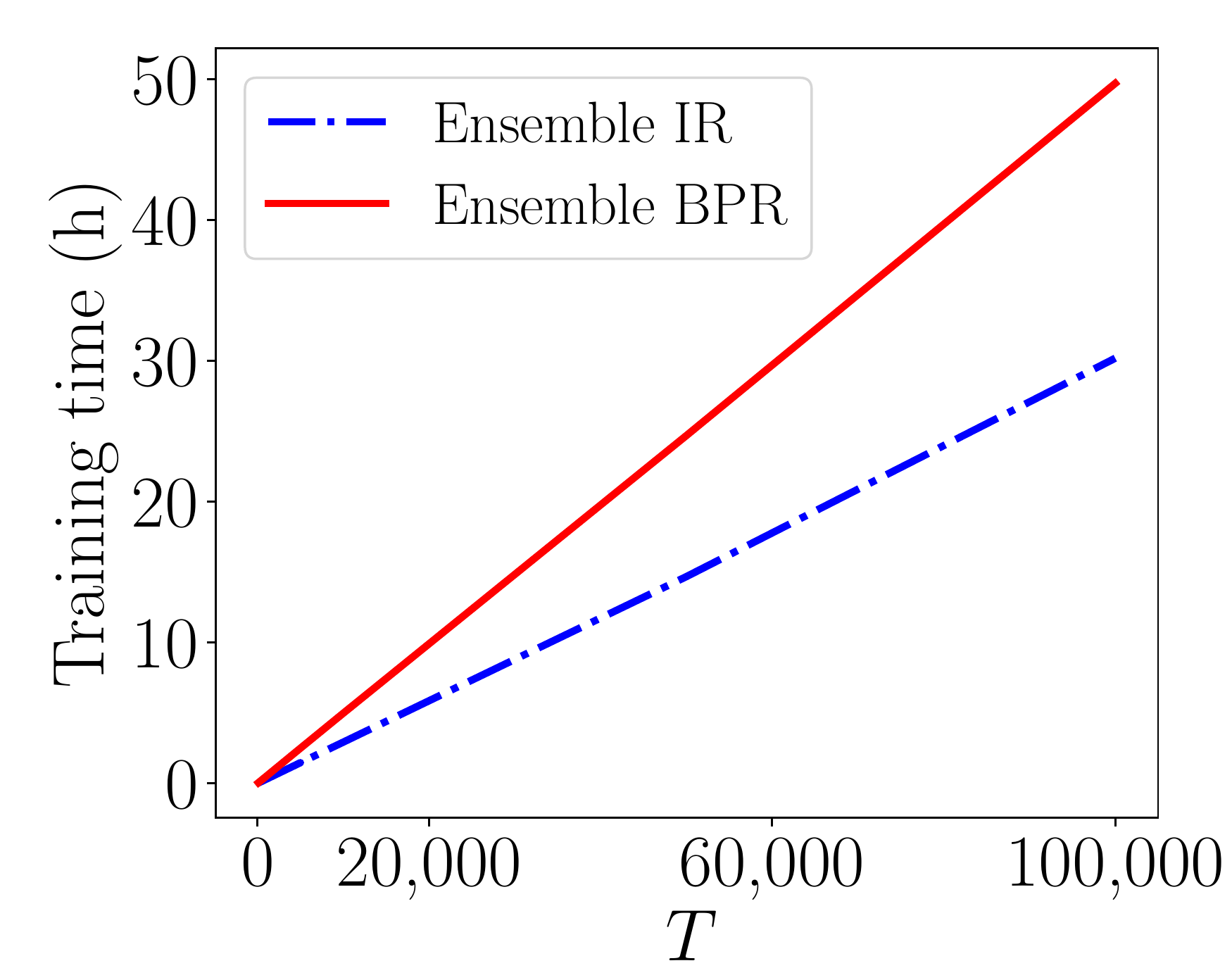}}
\vspace{-1mm}
\caption{\neil{Training time of PORE as a function of $T$ on MovieLens-100k (left) and MovieLens-1M (right).}}
\label{computation_complexity_T}
\end{figure}

\myparatight{\neil{Training time of ensemble IR and BPR}} \neil{PORE trains $T$ base recommender systems, each of which is trained using rating scores of $s$ users. 
Figure~\ref{computation_complexity_s} shows the training time of ensemble IR/BPR as a function of $s$, while Figure~\ref{computation_complexity_T} shows the impact of $T$ on the  training time of ensemble IR/BPR. As expected, the training time of ensemble IR/BPR increases linearly as $s$ or $T$ increases. This is because a larger $s$ means each base recommender system is trained using more data, while a larger $T$ means more base recommender systems are trained. Ensemble IR is more efficient than ensemble BPR because IR is more efficient than BPR. }

\begin{figure}[!t]
	 \centering
{\includegraphics[width=0.33\textwidth]{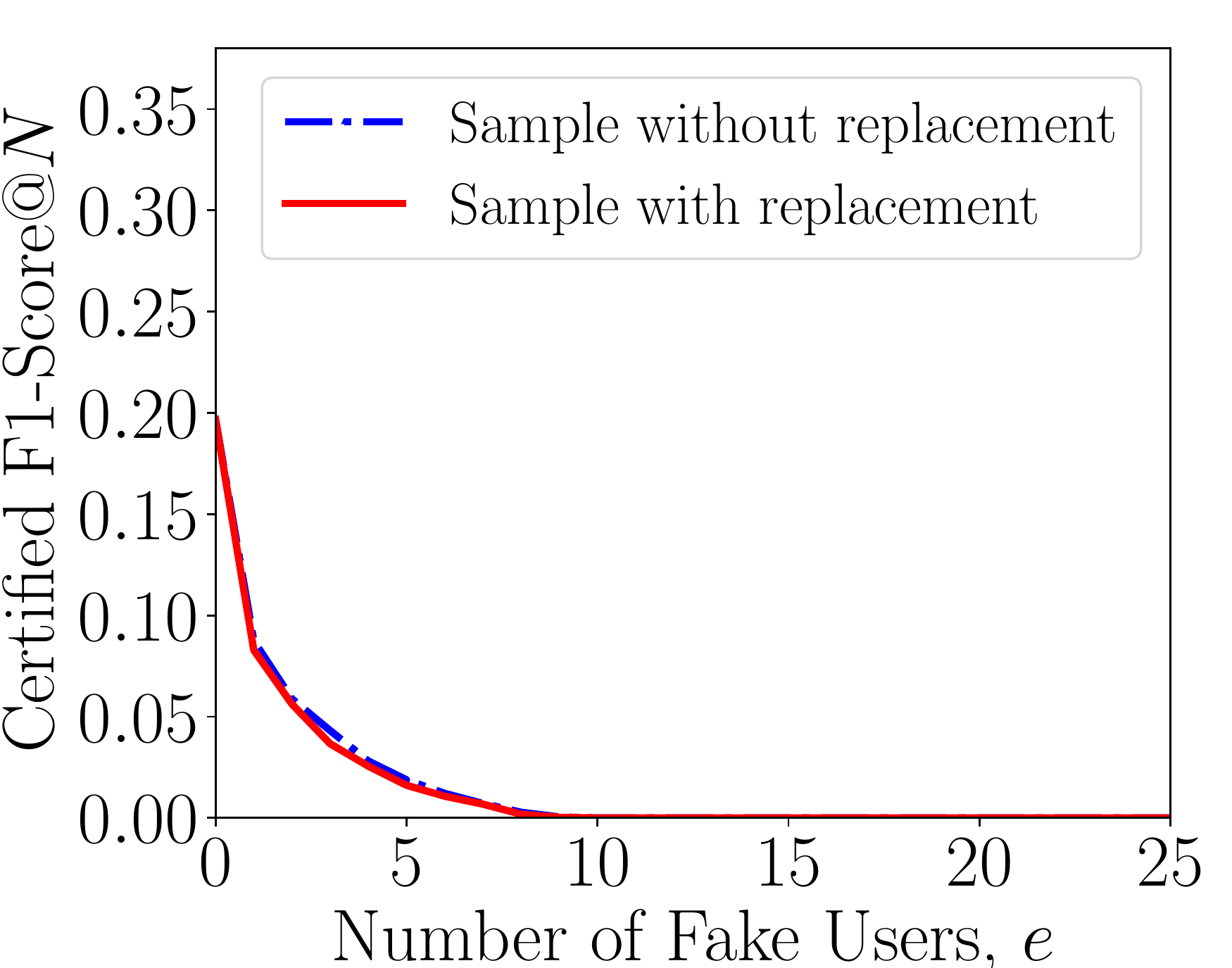}}
\vspace{-1mm}
\caption{Comparing  certified F1-Score@$N$ of sampling with and without replacement for ensemble IR on MovieLens-100k, where $N=10$.}
\label{samplingcomparison}
\end{figure}

\begin{figure}[!t]
	 \centering
{\includegraphics[width=0.33\textwidth]{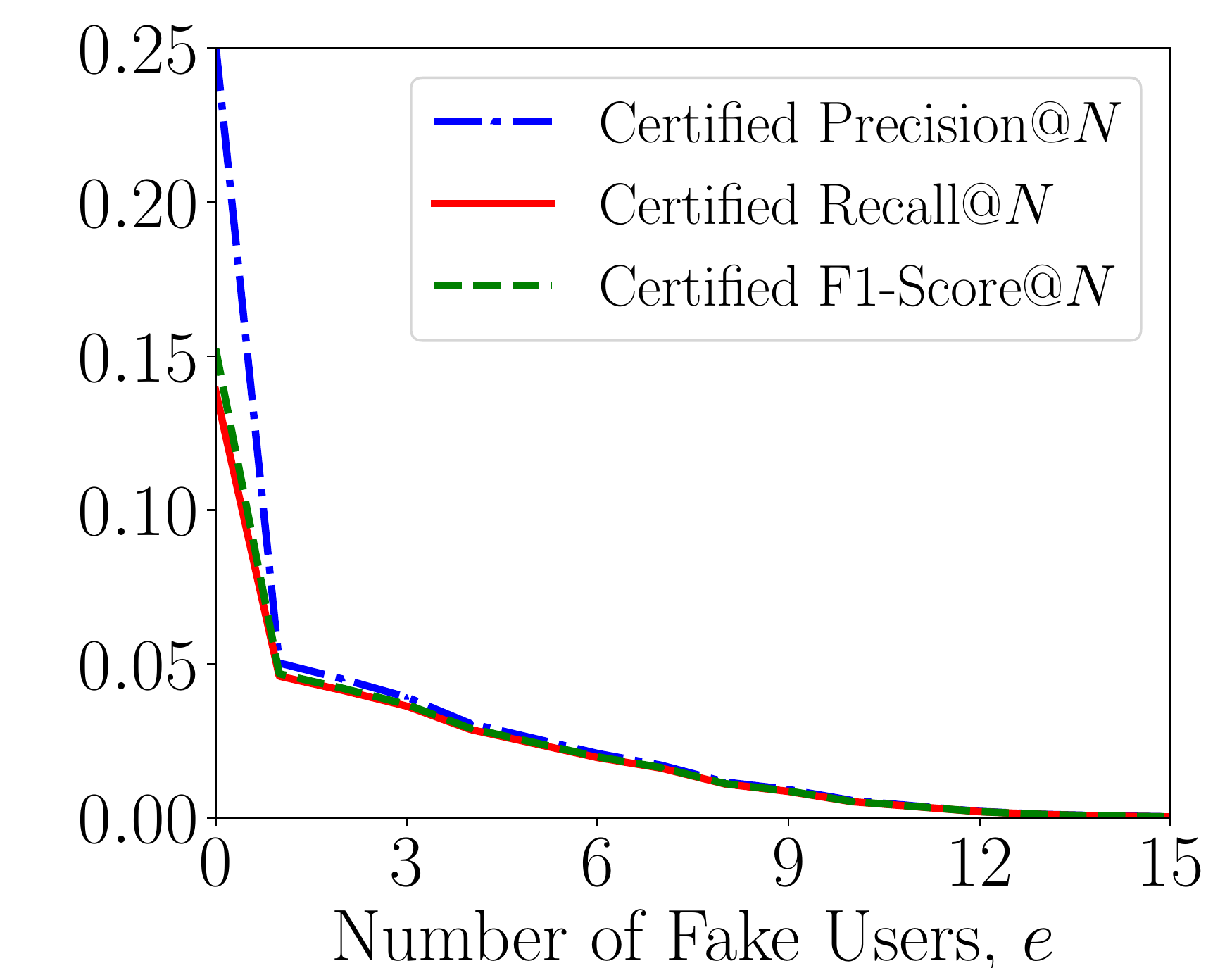}}
\vspace{-1mm}
\caption{\neil{Certified Precision@$N$, certified Recall@$N$, and certified F1-Score@$N$ of ensemble IR for MovieLens-10M, where $N=10$.}}
\label{10m}
\end{figure}

\myparatight{\neil{Sampling with vs. without replacement}}  \neil{ PORE randomly samples a submatrix without replacement, i.e., the sampled $s$ users are different in a submatrix. Sampling with replacement means that the $s$ users may have duplicates, i.e., a submatrix may include rating scores of less than $s$ unique users. We can extend our theoretical guarantees to sampling with replacement. The theoretical analysis is similar to  sampling without replacement, so we omit it for simplicity. Figure~\ref{samplingcomparison} compares the certified F1-Score@$N$ of sampling with and without replacement for ensemble IR on MovieLens-100k dataset. Our results show that sampling with and without replacement achieves comparable certified F1-Score@$N$. They also achieve comparable certified Precision@$N$ and certified Recall@$N$, which we omit for simplicity. }

\myparatight{\neil{Evaluation on a large dataset}} \neil{We also evaluate  PORE on MovieLens-10M~\cite{movielens}, which consists of around $10,000,000$ rating scores. We set $s=5,000$, $T=1,000$, and IR as the base recommender system algorithm, and we adopt the default settings for other parameters.  We set a smaller $T$ than our previous experiments because it is more expensive to train each base recommender system on MovieLens-10M.  Figure~\ref{10m} shows the results. Our results indicate that our method is applicable to a large dataset and can derive certified performance guarantees against data poisoning attacks. We note that the certified Precision@$N$/Recall@$N$/F1-Score@$N$ of bagging also reduces to 0 with just 1 fake user.}

\section{Related Work}
\label{related}

\myparatight{Data poisoning attacks to recommender systems} Many data poisoning attacks to recommender systems~\cite{lam2004shilling,burke2005segment,mobasher2007toward,seminario2014attacking,li2016data,yang2017fake,fang2018poisoning,christakopoulou2019adversarial,zhang2020practical,fang2020influence,song2020poisonrec,wadhwa2020data,tang2020revisiting,huangdata2021,zhang2021data,wu2021triple} have been proposed. Early attacks are algorithm-agnostic, i.e., the crafted rating scores of fake users do not depend on the recommender system algorithm~\cite{lam2004shilling,burke2005segment,mobasher2007toward}. 
Recently,  more advanced data poisoning attacks~\cite{li2016data,yang2017fake,fang2018poisoning,fang2020influence,song2020poisonrec,huangdata2021,zhang2021data,wu2021triple} have been optimized for  specific recommender system algorithms. For instance, Yang et al.~\cite{yang2017fake} proposed to inject fake co-visitations to poison association rule based recommender systems, Fang et al. proposed optimized data poisoning attacks to graph based recommender systems~\cite{fang2018poisoning} and matrix factorization based recommender systems~\cite{fang2020influence}, and Huang et al.~\cite{huangdata2021} proposed data poisoning attacks optimized to deep learning based recommender systems. 

\myparatight{Empirical defenses against data poisoning attacks to recommender systems} 
One family of defenses~\cite{burke2006classification,zhang2006attack,wu2012hysad,zhang2014hht,fang2018poisoning} aim to detect  fake users via analyzing their abnormal rating score patterns. The key assumption is  that the rating scores of fake users and genuine users have different patterns. 
Fo instance, Burke~\cite{burke2006classification} extracted features from each user's rating scores and trained a classifier to predict whether a user is fake or not. 
\neil{Another family of defenses~\cite{sandvig2007robustness,mehta2007robust,tang2019adversarial,chen2019adversarial,yuan2019adversarial,liu2020certifiable,hidano2020recommender} try to train more robust recommender systems. For instance, adversarial training~\cite{goodfellow2015explaining}, which was developed to train robust machine learning classifiers, has been extended to train robust recommender systems by multiple work~\cite{tang2019adversarial,yuan2019adversarial}. However, none of the above defenses provides provable robustness guarantees. In particular, they cannot derive certified Precision@$N$, certified Recall@$N$, and certified F1-Score@$N$. As a result, they can still be attacked by adaptive attacks.} 

\myparatight{Ensemble recommender systems}
Ensemble methods have been explored to improve the empirical performance of  recommender systems~\cite{bell2008bellkor,toscher2009bigchaos,wu2007collaborative}. For instance, around a decade ago, the winning teams~\cite{bell2008bellkor,toscher2009bigchaos} in the well-known Netflix competition on 
predicting user rating scores for movies  blended multiple base recommender systems built by different base algorithms. 
However,  these studies are different from ours. The key difference is that they didn't derive the provable robustness guarantees for their ensemble recommender systems. 
Moreover,  they use different ways to aggregate the base recommender systems, e.g., they aggregated the rating scores predicted by the base recommender systems.  

\myparatight{Provably robust classifiers against data poisoning attacks} Several works~\cite{jia2020intrinsic,jia2022certified} proposed  certified defenses against data poisoning attacks for machine learning classifiers. The key difference between classifiers and recommender systems is that an input has a single ground-truth label in a classifier while a user has multiple ground-truth items in a recommender system. As a result, these methods achieve sub-optimal certified robustness guarantees when generalized to recommender systems as shown in our experimental results.

\section{\CR{Discussion and Limitations}}
\label{discussion_limitation}

\vspace{-2mm}
\myparatight{\neil{Theoretical guarantees}} \neil{For any given number of fake users, our method can derive a certified intersection size $r$ for each user. Note that our theoretical guarantee still holds even if fake users rate new items.  However, when the fraction of fake users is large (e.g., 49\%), the derived $r$ and the corresponding certified Precision@$N$/Recall@$N$/F1-Score@$N$ may reduce to 0. As the first step on provably robust recommender systems, our method can derive a non-zero $r$ against a moderate number of fake users. It is still an open challenge to derive a non-trivial $r$ for a large fraction of fake users. Essentially, there is a trade-off between performance without attack and robustness, which is controlled by  $s$ (number of users in each submatrix). When the fraction of fake users is large, using a small $s$ makes the ensemble recommender system more robust but less accurate.

There are several directions to further improve the theoretical guarantees, i.e., derive a larger $r$ for a given fraction of fake users. First,  we considered a very strong threat model, where each fake user can arbitrarily rate all items. For instance, in MovieLens-100k, each fake user can rate up to 1,682 items, which accounts for 1.7\% of the rating scores from all genuine users. Fake users that rate a large number of items can be easily detected, as genuine users often rate a small number of items. Therefore, one way to further improve theoretical guarantee is to consider fake users that rate a bounded number of items. Second,  PORE is applicable to any base recommender system algorithm without considering the knowledge of the base algorithm. Therefore, the second possible way to derive better theoretical guarantees is to consider the domain knowledge of a specific base algorithm. }

\myparatight{\neil{Base algorithms and voting mechanisms}} \neil{We focus on using the same base algorithm to train each base recommender system in this work. We note that  the base recommender systems can be trained using different base algorithms. In particular, our theoretical guarantee holds for any (randomized) base algorithm. Therefore, given a set of base algorithms, we can randomly pick one to train a base recommender system. Moreover, we can view each base recommender system is trained using a randomized base algorithm sampled from the set of base algorithms. PORE uses hard voting when aggregating the items recommended by the base recommender systems. Hard voting has to be used to derive the theoretical guarantee of the ensemble recommender system.}
 
\myparatight{\neil{Targeted data poisoning attacks}} \neil{In this work, we focus on untargeted data poisoning attacks, which aim to reduce the overall performance of a recommender system. Our method can guarantee a lower bound of recommendation performance against any untargeted data poisoning attacks. Targeted data poisoning attacks aim to promote specific attacker-chosen items (called \emph{target items})~\cite{yang2017fake}. It is an interesting future work to derive provable robustness guarantees against such attacks. Specifically, given a fraction of fake users, we aim to derive an upper bound of the number of genuine users, to which the target items are recommended. }

\section{Conclusion and Future Work}
\label{conclusion}
In this work, we show that  PORE can turn an arbitrary base recommender system algorithm to be provably robust against data poisoning attacks via ensembling multiple base recommender systems built by the base algorithm on random subsamples of the rating-score matrix. 
Our ensemble recommender system guarantees a certain fraction of its top-$N$ items recommended to a user \neil{is} unaffected by fake users no matter how the attacker crafts their rating scores. Our empirical evaluation confirms that our ensemble recommender system provides provable robustness guarantees. Interesting future work includes deriving better provable robustness guarantees by bounding the number of items that fake users can rate and incorporating the knowledge of the base recommender system algorithm as well as deriving provable robustness guarantees against targeted data poisoning attacks.

\section*{Acknowledgements}
We thank the anonymous reviewers and shepherd for their constructive comments.  This work was supported by NSF under grant No. 2131859, 2112562, 2131859, 1937786, and 1937787, as well as ARO grant No. W911NF2110182.

\bibliographystyle{plain}
\bibliography{refs}

\appendix

\section{Proof of Theorem 3.1}
\label{proof_of_certified_theorem}
Recall that, given a rating-score matrix $\mathbf{M}$ and its poisoned version $\mathbf{M}^{\prime}$, $\mathbf{X}$ and $\mathbf{Y}$ are respectively two submatrices with $s$ rows randomly sampled from $\mathbf{M}$ and $\mathbf{M}^{\prime}$ without replacement. \neil{We use $\Phi$ to denote the domain space of $\mathbf{Y}$, i.e., each element in $\Phi$ is a submatrix with $s$ rows sampled from $\mathbf{M}'$. Note that since $\mathbf{M}$ is a submatrix of $\mathbf{M}'$, the domain space of $\mathbf{X}$ is a subset of $\Phi$.}  For simplicity, we define the following notations. Suppose we have $\mathbf{Z} \in \Phi$ and another rating-score matrix $\mathbf{W}$, we say $\mathbf{Z}  \prec \mathbf{W}$ (or $\mathbf{Z}  \nprec \mathbf{W}$) if $\mathbf{Z}$ is (or is not) in the domain space created by sampling $s$ rows from $\mathbf{W}$.  
We have $\mathbf{X} \prec \mathbf{M}$ and $\mathbf{Y} \prec \mathbf{M}'$ based on our defined notations. Similarly, give a user $u$ and $\mathbf{Z}\in \Phi$, we say $u \vdash \mathbf{Z}$ if $\mathbf{Z}$ contains rating scores of user $u$. We say  $u \nvdash \mathbf{Z}$ if $\mathbf{Z}$ does not contain user $u$'s rating scores.

The following lemma generalizes the Neyman-Pearson Lemma~\cite{neyman1933ix} to multiple functions:
\begin{lemma}
\label{lemma_np_general}
Let $\mathbf{X}$, $\mathbf{Y}$ be two random variables with probability densities $\text{Pr}(\mathbf{X}=\mathbf{Z})$ and $\text{Pr}(\mathbf{Y}=\mathbf{Z})$, where $\mathbf{Z} \in \Phi$. Let $g_{1}, g_{2}, \cdots, g_{\gamma}:\Phi \xrightarrow{} \{0,1\}$ be $\gamma$ random or deterministic functions. Let $\eta$ be an integer such that: $\sum_{l=1}^{\gamma}g_{l}(1|\mathbf{Z})\leq \eta, \forall \mathbf{Z}\in \Phi,$
where $g_{l}(1|\mathbf{Z})$ denotes the probability that $g_{l}(\mathbf{Z})=1$. Similarly, we use $g_{l}(0|\mathbf{Z})$ to denote the probability that $g_{l}(\mathbf{Z})=0$. Then, we have the following:  

(1) If $\Phi^{\prime} \subseteq \{\mathbf{Z}\in \Phi:g_{1}(1|\mathbf{Z})=g_{2}(1|\mathbf{Z})=\cdots=g_{\gamma}(1|\mathbf{Z})=0 \}$, $O_1=\{\mathbf{Z}\in \Phi \setminus \Phi^{\prime}: \text{Pr}(\mathbf{Y}=\mathbf{Z})<\rho \cdot \text{Pr}(\mathbf{X}=\mathbf{Z})  \}$ and $O_2=\{\mathbf{Z}\in \Phi\setminus \Phi^{\prime}: \text{Pr}(\mathbf{Y}=\mathbf{Z})=\rho \cdot \text{Pr}(\mathbf{X}=\mathbf{Z})   \}$ for some $\rho > 0$. Assuming we have $O_3 \subseteq O_2$, and $O=O_1 \cup O_3$. Then, if we have $\frac{\sum_{l=1}^{\gamma}\text{Pr}(g_{l}(\mathbf{X})=1)}{\eta}\geq \text{Pr}(\mathbf{X}\in O)$, then $\frac{\sum_{l=1}^{\gamma}\text{Pr}(g_{l}(\mathbf{Y})=1)}{\eta}\geq \text{Pr}(\mathbf{Y}\in O)$. 

(2) If $\Phi^{\prime} \subseteq \{\mathbf{Z}\in \Phi:g_{1}(1|\mathbf{Z})=g_{2}(1|\mathbf{Z})=\cdots=g_{\gamma}(1|\mathbf{Z})=0 \}$,  $O_1=\{\mathbf{Z}\in \Phi: \text{Pr}(\mathbf{Y}=\mathbf{Z})> \rho \cdot \text{Pr}(\mathbf{X}=\mathbf{Z})  \}$ and $O_2=\{\mathbf{Z}\in \Phi: \text{Pr}(\mathbf{Y}=\mathbf{Z})=\rho \cdot \text{Pr}(\mathbf{X}=\mathbf{Z}) \}$ for some $\rho > 0$. Assuming we have $O_3 \subseteq O_2$, and $O=O_1 \cup O_3 $. Then, if we have $\frac{\sum_{l=1}^{\gamma}\text{Pr}(g_{l}(\mathbf{X})=1)}{\eta}\leq \text{Pr}(\mathbf{X}\in O)$, then $\frac{\sum_{l=1}^{\gamma}\text{Pr}(g_{l}(\mathbf{Y})=1)}{\eta}\leq \text{Pr}(\mathbf{Y}\in O)$. 
\end{lemma}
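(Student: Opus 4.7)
The plan is to reduce this multi-function Neyman--Pearson statement to a single-function one by defining the randomized test $h(\mathbf{Z}) = \tfrac{1}{\eta}\sum_{l=1}^{\gamma} g_l(1\mid\mathbf{Z})$. The hypothesis $\sum_l g_l(1\mid\mathbf{Z})\le \eta$ guarantees $h(\mathbf{Z})\in[0,1]$, so $h$ can be treated as a (randomized) acceptance probability. Unpacking definitions, $\tfrac{1}{\eta}\sum_l \Pr(g_l(\mathbf{X})=1)=\mathbb{E}[h(\mathbf{X})]$ and similarly for $\mathbf{Y}$, while $\Pr(\mathbf{X}\in O)=\mathbb{E}[\mathbb{I}(\mathbf{X}\in O)]$. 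So the lemma reduces to: for a test $h$ whose rejection region is chosen by a likelihood-ratio thresholding at level $\rho$ (with appropriate handling of the boundary), power against $\mathbf{Y}$ dominates/is dominated accordingly. This is the standard NP shape, so the classical likelihood-ratio trick applies.

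For part (1), I will consider the signed sum
\[
S \;=\; \sum_{\mathbf{Z}\in\Phi}\bigl(h(\mathbf{Z})-\mathbb{I}(\mathbf{Z}\in O)\bigr)\,\bigl(\Pr(\mathbf{Y}=\mathbf{Z})-\rho\Pr(\mathbf{X}=\mathbf{Z})\bigr),
\]
and argue $S\ge 0$ by a case split on $\mathbf{Z}$. (i) If $\mathbf{Z}\in O$, then $h(\mathbf{Z})-1\le 0$ and, since $O\subseteq O_1\cup O_2\subseteq \Phi\setminus\Phi'$, the likelihood-ratio factor satisfies $\Pr(\mathbf{Y}=\mathbf{Z})-\rho\Pr(\mathbf{X}=\mathbf{Z})\le 0$; the product is $\ge 0$. (ii) If $\mathbf{Z}\in\Phi'$, then every $g_l(1\mid\mathbf{Z})=0$ so $h(\mathbf{Z})=0$, and since $O\subseteq\Phi\setminus\Phi'$ we also have $\mathbb{I}(\mathbf{Z}\in O)=0$; the contribution vanishes. (iii) If $\mathbf{Z}\in(\Phi\setminus\Phi')\setminus O$, then $h(\mathbf{Z})\ge 0$ and $\mathbf{Z}\notin O_1\cup O_3$ forces $\Pr(\mathbf{Y}=\mathbf{Z})-\rho\Pr(\mathbf{X}=\mathbf{Z})\ge 0$ (either strictly above the level set, or in $O_2\setminus O_3$ where it equals zero); the product is $\ge 0$. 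Expanding $S\ge 0$ and using the hypothesis $\mathbb{E}[h(\mathbf{X})]\ge \Pr(\mathbf{X}\in O)$ on the $\rho$-multiplied term gives the desired $\mathbb{E}[h(\mathbf{Y})]\ge \Pr(\mathbf{Y}\in O)$.

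Part (2) follows by the symmetric argument, applied to $\bigl(\mathbb{I}(\mathbf{Z}\in O)-h(\mathbf{Z})\bigr)\bigl(\Pr(\mathbf{Y}=\mathbf{Z})-\rho\Pr(\mathbf{X}=\mathbf{Z})\bigr)$. The case analysis is the mirror image: for $\mathbf{Z}\in O$ the first factor is $\ge 0$ while, by the new definitions of $O_1,O_2$, the second is $\ge 0$; for $\mathbf{Z}\notin O$ the first factor is $\le 0$ and $\mathbf{Z}$ lies either in $\Phi'$ (where $h(\mathbf{Z})=0$ kills the contribution), in $O_2\setminus O_3$ (where the second factor is zero), or outside $O_1\cup O_2$ (where the second factor is $\le 0$). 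Summing gives the reverse inequality.

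The main obstacle is bookkeeping around $\Phi'$ and the boundary set $O_2\setminus O_3$: in part (1), $O_1,O_2$ are taken inside $\Phi\setminus\Phi'$ whereas in part (2) they range over all of $\Phi$, so the role of $\Phi'$ changes between the two parts and one must verify the case splits separately. A secondary technical point is that the $g_l$ are allowed to be randomized, so care is needed in interpreting $g_l(1\mid\mathbf{Z})$ as an expectation and in linearity-of-expectation steps; once this is made explicit, the proof is the standard NP argument.
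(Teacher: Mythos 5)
Your proof is correct and is essentially the paper's own argument: your signed sum $S=\sum_{\mathbf{Z}}\bigl(h(\mathbf{Z})-\mathbb{I}(\mathbf{Z}\in O)\bigr)\bigl(\Pr(\mathbf{Y}=\mathbf{Z})-\rho\Pr(\mathbf{X}=\mathbf{Z})\bigr)\ge 0$ with $h=\frac{1}{\eta}\sum_l g_l(1\mid\cdot)$ is exactly the paper's chain of integral inequalities, split over the same three regions ($O$, $\Phi'$, and $O^c\setminus\Phi'$) and using the same sign facts, followed by the same $\rho>0$ step. The only difference is presentational (a single nonnegative sum versus a telescoped chain of bounds), including your observation that part (2) is the mirror-image case analysis, which the paper likewise treats as symmetric and omits.
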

\begin{proof}
We first prove part (1). For convenience, we denote the complement of $O$ as $O^{c}$. Then, we have the following: 
\allowdisplaybreaks
{\small 
\begin{align}
  & \frac{\sum_{l=1}^{\gamma}\text{Pr}(g_{l}(\mathbf{Y})=1)}{\eta}- \text{Pr}(\mathbf{Y}\in O) \\
=&\int_{\Phi}\frac{\sum_{l=1}^{\gamma}g_{l}(1|\mathbf{Z})}{\eta}\cdot \text{Pr}(\mathbf{Y}=\mathbf{Z})d\mathbf{Z} - \int_{O}\text{Pr}(\mathbf{Y}=\mathbf{Z})d\mathbf{Z} \\
=&\int_{O^{c}}\frac{\sum_{l=1}^{\gamma}g_{l}(1|\mathbf{Z})}{\eta}\cdot \text{Pr}(\mathbf{Y}=\mathbf{Z})d\mathbf{Z} \nonumber \\
& + \int_{O}\frac{\sum_{l=1}^{\gamma}g_{l}(1|\mathbf{Z})}{\eta}\cdot \text{Pr}(\mathbf{Y}=\mathbf{Z})d\mathbf{Z} - \int_{O}\text{Pr}(\mathbf{Y}=\mathbf{Z})d\mathbf{Z} \\
=&\int_{O^{c}}\frac{\sum_{l=1}^{\gamma}g_{l}(1|\mathbf{Z})}{\eta}\cdot \text{Pr}(\mathbf{Y}=\mathbf{Z})d\mathbf{Z} \nonumber\\
&- \int_{O}(1-\frac{\sum_{l=1}^{\gamma}g_{l}(1|\mathbf{Z})}{\eta})\cdot\text{Pr}(\mathbf{Y}=\mathbf{Z})d\mathbf{Z} \\
=&\int_{O^{c}\setminus \Phi^{\prime}}\frac{\sum_{l=1}^{\gamma}g_{l}(1|\mathbf{Z})}{\eta}\cdot \text{Pr}(\mathbf{Y}=\mathbf{Z})d\mathbf{Z} + \int_{ \Phi^{\prime}}\frac{\sum_{l=1}^{\gamma}g_{l}(1|\mathbf{Z})}{\eta} \nonumber \\
\label{lemma_np_general_e1}
& \cdot \text{Pr}(\mathbf{Y}=\mathbf{Z})d\mathbf{Z} 
- \int_{O}(1-\frac{\sum_{l=1}^{\gamma}g_{l}(1|\mathbf{Z})}{\eta})\cdot\text{Pr}(\mathbf{Y}=\mathbf{Z})d\mathbf{Z} \\
\geq&\rho \cdot[\int_{O^{c}\setminus  \Phi^{\prime}}\frac{\sum_{l=1}^{\gamma}g_{l}(1|\mathbf{Z})}{\eta}\cdot \text{Pr}(\mathbf{X}=\mathbf{Z})d\mathbf{Z} + \int_{ \Phi^{\prime}}\frac{\sum_{l=1}^{\gamma}g_{l}(1|\mathbf{Z})}{\eta}\nonumber \\
\label{lemma_np_general_e2}
&\cdot \text{Pr}(\mathbf{X}=\mathbf{Z})d\mathbf{Z}
- \int_{O}(1-\frac{\sum_{l=1}^{\gamma}g_{l}(1|\mathbf{Z})}{\eta})\cdot\text{Pr}(\mathbf{X}=\mathbf{Z})d\mathbf{Z}] \\
=&\rho \cdot[\int_{O^{c}}\frac{\sum_{l=1}^{\gamma}g_{l}(1|\mathbf{Z})}{\eta}\cdot \text{Pr}(\mathbf{X}=\mathbf{Z})d\mathbf{Z} \nonumber \\
& + \int_{O}\frac{\sum_{l=1}^{\gamma}g_{l}(1|\mathbf{Z})}{\eta}\cdot \text{Pr}(\mathbf{X}=\mathbf{Z})d\mathbf{Z} - \int_{O}\text{Pr}(\mathbf{X}=\mathbf{Z})d\mathbf{Z}] \\
=&\rho \cdot[\int_{\Phi}\frac{\sum_{l=1}^{\gamma}g_{l}(1|\mathbf{Z})}{\eta}\cdot \text{Pr}(\mathbf{X}=\mathbf{Z})d\mathbf{Z} - \int_{O}\text{Pr}(\mathbf{X}=\mathbf{Z})d\mathbf{Z}] \\
=&\rho \cdot [\frac{\sum_{l=1}^{\gamma}\text{Pr}(g_{l}(\mathbf{X})=1)}{\eta}- \text{Pr}(\mathbf{X}\in O)] \\
\geq & 0.
\end{align}
}
\begin{figure}[!t]
\centering
{\includegraphics[width=0.28\textwidth]{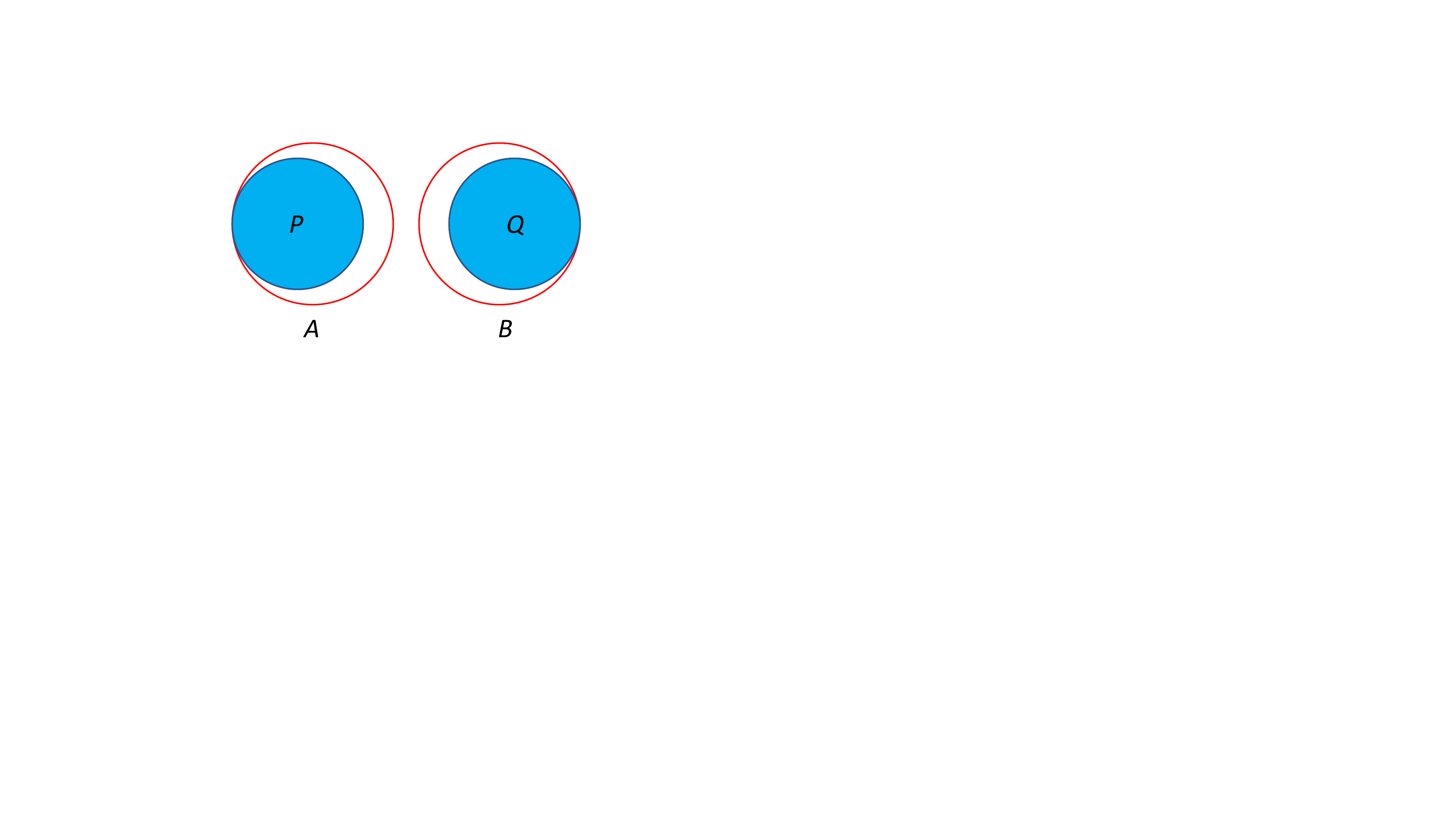}}
\caption{Illustration of subset $P$, $Q$, $A$, and $B$.}
\vspace{-4mm}
\label{illustration_figure}
\end{figure}
We have Equation~(\ref{lemma_np_general_e2}) from~(\ref{lemma_np_general_e1}) due to the fact that $\forall \mathbf{Z}\in O^{c}\setminus \Phi^{\prime}, \text{Pr}(\mathbf{Y}=\mathbf{Z} \geq \rho \cdot \text{Pr}(\mathbf{X}=\mathbf{Z})$, $\forall \mathbf{Z} \in \Phi^{\prime}, g_{1}(1|\mathbf{Z})=g_{2}(1|\mathbf{Z})=\cdots=g_{\gamma}(1|\mathbf{Z})=0$,  $1-\frac{\sum_{l=1}^{\gamma}g_{l}(1|\mathbf{Z})}{\eta} \geq 0$, and $\text{Pr}(\mathbf{Y}=\mathbf{Z}) \leq \rho \cdot \text{Pr}(\mathbf{X}=\mathbf{Z}), \forall \mathbf{Z}\in O$.
Similarly, we can prove the part (2). We omit the details for simplicity. 
\end{proof}
Given a user $u$ and $\Phi$, we define the following subsets: 
{\small 
\begin{align}
\label{definition_of_q_app}
&  P = \{\mathbf{Z}\in \Phi|\mathbf{Z}\prec \mathbf{M},u \vdash \mathbf{Z}\} ,   Q = \{\mathbf{Z}\in \Phi|\mathbf{Z}\prec \mathbf{M},u \nvdash \mathbf{Z}\} \\
\label{definition_of_b_app}
 &     A = \{\mathbf{Z}\in \Phi|\mathbf{Z}\prec \mathbf{M}',u \vdash \mathbf{Z}\},    B = \{\mathbf{Z}\in \Phi|\mathbf{Z}\prec \mathbf{M}',u \nvdash \mathbf{Z}\}
 \end{align}
 }
It is easy to verify that $P \cap Q = \emptyset$ and $A \cap B = \emptyset$.  Moreover, since $M'$ contains rating scores for all the user in $\mathbf{M}$, we have $P \subseteq A$ and $Q \subseteq B$. Figure~\ref{illustration_figure} shows an illustration of these four subsets.
Based on our sampling without replacement, we have the following probability mass functions for $\mathbf{X}$ and $\mathbf{Y}$ in $\Phi$:
{\small 
\begin{align}
\label{probability_density_start_app}
&\text{Pr}(\mathbf{X} = \mathbf{Z}) = 
\begin{cases}
 \frac{1}{{n \choose s}}, &\text{ if } \mathbf{Z} \in P  \cup Q, \\
 0, &\text{ otherwise}.
\end{cases} \\
\label{probability_density_end_app}
&\text{Pr}(\mathbf{Y} = \mathbf{Z} ) = 
\begin{cases}
 \frac{1}{{n' \choose s}}, &\text{ if } \mathbf{Z} \in A  \cup B, \\
 0, &\text{ otherwise}.
\end{cases} 
\end{align}
}
Given the probability mass functions, we have the following probabilities: 
\allowdisplaybreaks
{\small 
\begin{align}
\label{probability_of_x_in_region_start_app}
 &   \text{Pr}(\mathbf{X}\in P)= 1 - {n-1 \choose s}\cdot  \frac{1}{{n \choose s}} =  \frac{s}{n}, 
 & \text{Pr}(\mathbf{X}\in A) =  \frac{s}{n}, \\
  &   \text{Pr}(\mathbf{Y}\in P)  =  \frac{s}{n} \cdot \frac{{n \choose s}}{{n' \choose s}}, 
   \label{probability_of_x_in_region_end_app}
 & \text{Pr}(\mathbf{Y}\in A) = \frac{s}{n'}.
\end{align}
}
We have $\text{Pr}(\mathbf{X}\notin P)={n-1 \choose s}\cdot  \frac{1}{{n \choose s}}$ because there are overall ${n-1 \choose s}$ possible submatrices if the user $u$ is not sampled from $\mathbf{M}$. Then, we can compute $ \text{Pr}(\mathbf{X}\in P)$ based on the fact that $ \text{Pr}(\mathbf{X}\in P)+\text{Pr}(\mathbf{X}\notin P)=1$. We obtain $\text{Pr}(\mathbf{X}\in A)$ based on the fact that $P \subseteq A$ and $A \cap B = \emptyset$. Similarly, we can compute the probability of $\mathbf{Y}$ in these subsets. 

We will leverage the law of contraposition to prove our theorem. Suppose we have a statement: $U \longrightarrow V$, then, its contraposition is $\neg V \longrightarrow \neg U$. The law of contraposition tells us that a statement is true if and only if its contraposition is true. We define the following two predicates: 
{\small 
\begin{align}
 U: &  ~ \underline{p^{*}_{\mu_{r}}} >   \min ( \min_{c=1}^{N-r +1}\frac{ N' \cdot(\overline{p}^{*}_{\mathcal{H}_{c}} + \sigma)}{c},  \overline{p}^{*}_{v_1}+ \sigma)\\
&\text{ and } 1 \leq r \leq \min(k, N), \nonumber \\
V:&~  |\mathcal{I}_{u} \cap \mathcal{T}(\mathbf{M}',u)| \geq r ,
\end{align}
}
where $\sigma = \frac{s}{n'} \cdot \frac{{n' \choose s}}{{n \choose s}} -\frac{s}{n}$ . 

\myparatight{Deriving the necessary condition} We assume $\neg V$ is true, i.e.,  $|\mathcal{I}_{u} \cap \mathcal{T}(\mathbf{M}',u)| < r$. If $r=0 $ or $ r > \min(k, N)$, then, $\neg U$ is true. Next, we consider $1 \leq r \leq \min(k, N)$ and we will show $\underline{p^{*}_{\mu_{r}}} \leq   \min ( \min_{c=1}^{N-r +1}\frac{ N' \cdot(\overline{p}^{*}_{\mathcal{H}_{c}} + \sigma)}{c},  \overline{p}^{*}_{v_1}+ \sigma)$ is true. If $|\mathcal{I}_{u} \cap \mathcal{T}(\mathbf{M}',u)| < r$, then, there exist at least $k-r +1$ items in $\mathcal{I}_{u}$ that are not recommended to the user $u$ by our ensemble recommender system when taking $\mathbf{M}'$ as input. In other words, there exist at least $N - r +1$ items in $\mathcal{I} \setminus \mathcal{I}_u$ appears in the recommended item set $\mathcal{T}(\mathbf{M}',u)$. For simplicity, we use $\mathcal{D}_{r}$ and $\mathcal{V}_{r}$ to denote subsets of $k - r +1$ items in $\mathcal{I}_{u}$ and $N - r + 1$ items in $\mathcal{I}\setminus\mathcal{I}_{u}$, respectively. Formally, we have the following: 
\begin{align}
    \exists \mathcal{D}_{r}, \mathcal{V}_{r}, \text{ s.t. } \mathcal{D}_{r}\cap \mathcal{T}(\mathbf{M}',u) =\emptyset, \mathcal{V}_{r} \subseteq \mathcal{T}(\mathbf{M}',u), 
\end{align}
The above equation means that the poisoned item probabilities of the items in $\mathcal{V}_{r}$ are no smaller than the poisoned item probabilities of the items in $\mathcal{D}_{r}$. In other words, we have the following for $\mathcal{D}_{r}$ and $\mathcal{V}_{r}$:
{\small 
\begin{align}
    \max_{i \in \mathcal{D}_{r}}\text{Pr}(i \in \mathcal{A}(\mathbf{Y},u)) \leq \min_{j \in \mathcal{V}_{r}}\text{Pr}(j \in \mathcal{A}(\mathbf{Y},u)).
\end{align}
}
Moreover, since there exist $\mathcal{D}_{r}$ and $\mathcal{V}_{r}$, we have the following necessary condition if $|\mathcal{I}_{u} \cap \mathcal{T}(\mathbf{M}',u)| < r$ and $1 \leq r \leq \min(k, N)$: 
{\small 
\begin{align}
\label{necessary_condition_of_nega_u_is_true_app}
    \min_{\mathcal{D}_{r}}\max_{i \in \mathcal{D}_{r}}\text{Pr}(i \in \mathcal{A}(\mathbf{Y},u)) \leq \max_{\mathcal{V}_{r}}\min_{j \in \mathcal{V}_{r}}\text{Pr}(j \in \mathcal{A}(\mathbf{Y},u)).
\end{align}
}
Next, we will derive the lower bound of left-hand side and the upper bound of the right-hand side for the above equation. 

\myparatight{Deriving a lower bound of $ \min_{\mathcal{D}_{r}}\max_{i \in \mathcal{D}_{r}}\text{Pr}(i \in \mathcal{A}(\mathbf{Y},u))$} For $\forall i \in \mathcal{D}_{r}$, based on Equation~(\ref{probability_upper_low_bound_theorem_1}), we have the following: 
{\small 
\begin{align}
\label{probability_upper_low_bound_theorem_1_app}
&\underline{p^{*}_i} \triangleq \frac{\lfloor \underline{p_i} \cdot {n \choose s}\rfloor}{{n \choose s}} \leq \underline{p_i} \leq \text{Pr}(i \in \mathcal{A}(\mathbf{X},u))
\end{align}
}
Based on the definition of $\mathbf{X}$, we have: 
{\small 
\begin{align}
\label{probability_upper_bound_proof}
    \text{Pr}(i \in \mathcal{A}(\mathbf{X},u)) \geq \underline{p^{*}_i}.
\end{align}
}
For $\forall i \in \mathcal{D}_{r}$, we define the function $g_{i}(\mathbf{Z})= \mathbb{I}(i \in \mathcal{A}(\mathbf{Z},u))$, where $\mathbb{I}$ is an indicator function. Specifically, we have $g_{i}(\mathbf{Z})=0$ for $\forall \mathbf{Z}\in B$ based on the definition of $B$ in Equation~(\ref{definition_of_b_app}), which will be used when we leverage Lemma~\ref{lemma_np_general} to derive the lower bound of $\text{Pr}(g_{i}(\mathbf{Y})=1)$. We have $\text{Pr}(g_{i}(\mathbf{X})=1) \geq \underline{p^{*}_i}$ based on Equation~(\ref{probability_upper_bound_proof}) and the definition of $g_i$.  For $\forall i \in \mathcal{D}_{r}$, we can find $C_{i} \subseteq P$ such that we have the following:
{\small 
\begin{align}
\label{construct_of_region_A_i}
    \text{Pr}(\mathbf{X} \in C_{i}) = \underline{p^{*}_i}. 
\end{align}
}
Note that we can find such a subset because $\underline{p^{*}_i}$ is an integer multiple of $\frac{1}{{n \choose s}}$. Then, we have the following:
{\small 
\begin{align}
\label{condition_of_np_lemma_i_u_of_x_app}
 \text{Pr}(g_{i}(\mathbf{X})=1) \geq   \text{Pr}(\mathbf{X} \in C_{i}).
\end{align}
}
For simplicity, we define the following quantity: 
{\small 
\begin{align}
    \tau = {n \choose s}/{n' \choose s}. 
\end{align}
}
Next, we will apply Lemma~\ref{lemma_np_general} to obtain the lower bound of $\text{Pr}(g_{i}(\mathbf{Y})=1)$. In particular, we let $\Phi^{\prime} =  B$ since we have $g_{i}(\mathbf{Z})=0$ for $\forall \mathbf{Z} \in B$. Note that we omit $Q$ since we have $Q \subseteq B$. Then, we have $A = \Phi \setminus \Phi^{\prime}$.  Based on Equation~(\ref{probability_density_start_app}) -~(\ref{probability_density_end_app}), we have $\text{Pr}(\mathbf{Y}=\mathbf{Z})= \tau \cdot \text{Pr}(\mathbf{X}=\mathbf{\mathbf{Z}})$ if $\mathbf{Z} \in P$ and $\text{Pr}(\mathbf{Y}=\mathbf{Z}) > \tau \cdot \text{Pr}(\mathbf{X}=\mathbf{\mathbf{Z}})$ if $\mathbf{Z} \in A \setminus P$.  We let $O_1 = \emptyset$ since there is no subset that satisfies $\text{Pr}(\mathbf{Y}=\mathbf{Z}) < \tau \cdot \text{Pr}(\mathbf{X}=\mathbf{\mathbf{Z}})$. Furthermore, we let $O_2 = P $, $O_3 = C_{i} \subseteq O_2$, $\eta = 1$, and $\gamma=1$. Finally, we let $O = O_1 \cup O_3 = C_i$. We can apply Lemma~\ref{lemma_np_general} based on the condition in  Equation~(\ref{condition_of_np_lemma_i_u_of_x_app}) and we have the following: 
\begin{align}
    \text{Pr}(g_{i}(\mathbf{Y})=1) \geq \text{Pr}(\mathbf{Y} \in C_i).
\end{align}
Based on the definition of $g_{i}$, we have the following: 
{\small 
\begin{align}
& \text{Pr}(i \in \mathcal{A}(\mathbf{Y},u)) \\
=& \text{Pr}(g_{i}(\mathbf{Y})=1) \\
\geq & \text{Pr}(\mathbf{Y} \in C_i) \\
= &   \text{Pr}(\mathbf{X} \in C_i)\cdot \tau \\
=& \underline{p^{*}_i} \cdot \tau.
\end{align}
}
For simplicity, we denote $\mathcal{D}_{r}=\{d^{\prime}_{1},d^{\prime}_{2}, \cdots, d^{\prime}_{z}\}$, where $z = k-r +1$. Without loss of generality, we assume $\underline{p_{d^{\prime}_{1}}} \geq \cdots \geq \underline{p_{d^{\prime}_{z}}}$. We have the following: 
\begin{align}
 \max_{i \in \mathcal{D}_{r}}\text{Pr}(i \in \mathcal{A}(\mathbf{Y},u)) 
\geq \max_{i \in \mathcal{D}_{r}} \underline{p^{*}_i} \cdot \tau
= \underline{p^{*}_{d^{\prime}_{1}}} \cdot \tau. 
\end{align}
Then, we have the following: 
\begin{align}
\min_{\mathcal{D}_{r}}\max_{i \in \mathcal{D}_{r}}\text{Pr}(i \in \mathcal{A}(\mathbf{Y},u)) 
=  \min_{\mathcal{D}_{r}} \underline{p^{*}_{d^{\prime}_{1}}} \cdot \tau
\end{align}
Therefore, when $\mathcal{D}_{r}=\{\mu_{r},\mu_{r+1},\cdots, \mu_{k}\}$, $\underline{p^{*}_{d^{\prime}_{1}}} \cdot \tau$ reaches the minimal value which is $\underline{p^{*}_{\mu_{r}}}\cdot \tau$. In other words, we have the following: 
\begin{align}
    \min_{\mathcal{D}_{r}}\max_{i \in \mathcal{D}_{r}}\text{Pr}(i \in \mathcal{A}(\mathbf{Y},u))  \geq \underline{p^{*}_{\mu_{r}}}\cdot \tau. 
\end{align}

\myparatight{Deriving an upper bound of $\max_{\mathcal{V}_{r}}\min_{j \in \mathcal{V}_{r}}\text{Pr}(j \in \mathcal{A}(\mathbf{Y},u))$} For $\forall j \in \mathcal{V}_r$, given Equation~(\ref{probability_upper_low_bound_theorem_1}), we have the following: 
{\small 
\begin{align}
\label{probability_upper_low_bound_theorem_1_app_low}
&\overline{p}^{*}_j = \frac{\lceil \overline{p}_j \cdot {n \choose s}\rceil}{{n \choose s}}  \geq \overline{p}_j \geq \text{Pr}(j \in \mathcal{A}(\mathbf{X},u)).
\end{align}
}
Suppose we have  $\mathcal{V}_{r}=\{v^{\prime}_1, v^{\prime}_2, \cdots, v^{\prime}_{w}\}$, where $w= N - r +1$. Without loss of generality, we assume the following:
\begin{align}
\label{without_loss_condition_2_app}
\overline{p}_{v^{\prime}_1} \leq \overline{p}_{v^{\prime}_2}\leq \cdots \leq \overline{p}_{v^{\prime}_w}.  
\end{align}
We first derive an upper bound of $\min_{j \in \mathcal{V}_{r}}$ $\text{Pr}(j \in \mathcal{A}(\mathbf{Y},u))$. Given an arbitrary item $j \in \mathcal{V}_{r}$, we have the following inequality based on Equation~(\ref{probability_upper_low_bound_theorem_1_app_low}) and our definition of $\mathbf{X}$:
\begin{align}
\label{probability_lower_bound_proof}
    \text{Pr}(j \in \mathcal{A}(\mathbf{X},u)) \leq \overline{p}^{*}_j.
\end{align}
Given an item $j \in \mathcal{V}_r$, we define the function $g_{j}(\mathbf{Z}) = \mathbb{I}(j \in \mathcal{A}(\mathbf{Z},u))$. We have $\text{Pr}(g_{j}(\mathbf{X}) = 1) \leq \overline{p}^{*}_j$ based on Equation~(\ref{probability_lower_bound_proof}) and the definition of $g_j$. Then, we can leverage Lemma~\ref{lemma_np_general} to derive an upper bound for $\text{Pr}(g_{j}(\mathbf{X}) = 1)$. In particular, we can find $C^{\prime}_j \subseteq P$ such that we have the following:
\begin{align}
    \text{Pr}(\mathbf{X} \in C^{\prime}_j) = \overline{p}^{*}_j.
\end{align}
We let $C_{j} = C^{\prime}_j \cup (A \setminus P)$. Since we have $\text{Pr}(\mathbf{X} \in A\setminus P)= \text{Pr}(\mathbf{X} \in A)-\text{Pr}(\mathbf{X} \in P)=0$, we have the following:
\begin{align}
    \text{Pr}(\mathbf{X} \in C_j) = \text{Pr}(\mathbf{X} \in C^{\prime}_j) + \text{Pr}(\mathbf{X} \in A\setminus P) = \overline{p}^{*}_j.
\end{align}
Based on $\text{Pr}(g_{j}(\mathbf{X}) = 1) \leq \overline{p}^{*}_j$, we have the following:
\begin{align}
\label{condition_of_np_lemma_j_u_of_x_app}
    \text{Pr}(g_{j}(\mathbf{X})=1) \leq \text{Pr}(\mathbf{X}\in C_j).
\end{align}
Next, we will apply Lemma~\ref{lemma_np_general} to obtain the upper bound of $\text{Pr}(g_{j}(Y)=1)$. In particular, we let $\Phi^{\prime} =  B$ since we have $g_{j}(\mathbf{Z})=0$ for $\forall \mathbf{Z} \in B$. Then, we have $A = \Phi \setminus \Phi^{\prime}$.  Based on Equation~(\ref{probability_density_start_app}) -~(\ref{probability_density_end_app}), we have $\text{Pr}(\mathbf{Y}=\mathbf{Z})= \tau \cdot \text{Pr}(\mathbf{X}=\mathbf{\mathbf{Z}})$ if $\mathbf{Z} \in P$ and $\text{Pr}(\mathbf{Y}=\mathbf{Z}) > \tau \cdot \text{Pr}(\mathbf{X}=\mathbf{\mathbf{Z}})$ if $\mathbf{Z} \in A \setminus P$.  We let $O_1 = A \setminus P$, $O_2 = P $, $O_3 = C^{\prime}_{j} \subseteq O_2$, $\eta=1$, and $\gamma=1$. Finally, we let $O = O_1 \cup O_3 = C_j$. We can apply Lemma~\ref{lemma_np_general} based on the condition in  Equation~(\ref{condition_of_np_lemma_j_u_of_x_app}) and we have the following: 
{\small 
\begin{align}
&\text{Pr}(j \in \mathcal{A}(\mathbf{Y},u)) \\
    =&\text{Pr}(g_{j}(\mathbf{Y})=1) \\
    \leq & \text{Pr}(\mathbf{Y} \in C_j) \\
    =& \text{Pr}(\mathbf{Y} \in C^{\prime}_j) + \text{Pr}(\mathbf{Y} \in A\setminus P) \\
    =& \text{Pr}(\mathbf{X} \in C^{\prime}_j)\cdot \tau + \text{Pr}(\mathbf{Y} \in A) - \text{Pr}(\mathbf{Y} \in P) \\
    =& \overline{p}^{*}_j \cdot \frac{{n \choose s}}{{n' \choose s}}+ \frac{s}{n'}-\frac{s}{n}\cdot \frac{{n \choose s}}{{n' \choose s}}.
\end{align}
}
Given Equation~(\ref{without_loss_condition_2_app}), we have the following:
{\small 
\begin{align}
\min_{j \in \mathcal{V}_{r}}\text{Pr}(j \in \mathcal{A}(\mathbf{Y},u)) \leq \overline{p}^{*}_{v^{\prime}_{1}} \cdot \frac{{n \choose s}}{{n' \choose s}}+ \frac{s}{n'}-\frac{s}{n}\cdot \frac{{n \choose s}}{{n' \choose s}}.
\end{align}
}
Therefore, when $\mathcal{V}_{r}$ contains the set of items in $\mathcal{I} \setminus \mathcal{I}_{u}$ that have the largest probability bounds, which we denote as $\mathcal{V}_{r}=\{v_1, v_2, \cdots, v_w\}$ where $\overline{p}_{v_1} \leq \overline{p}_{v_2}\leq \cdots \leq \overline{p}_{v}$, the upper bound of $\max_{\mathcal{V}_{r}}\min_{j \in \mathcal{V}_{r}}$ $ \text{Pr}(j \in \mathcal{A}(\mathbf{Y},u))$ reaches the maximum value. Formally, we have the following: 
{\small 
\begin{align}
\max_{\mathcal{V}_{r}}\min_{j \in \mathcal{V}_{r}}\text{Pr}(j \in \mathcal{A}(\mathbf{Y},u))  \leq \overline{p}^{*}_{v_{1}} \cdot \frac{{n \choose s}}{{n' \choose s}}+ \frac{s}{n'}-\frac{s}{n}\cdot \frac{{n \choose s}}{{n' \choose s}}. 
\end{align}
}

Next, we will derive another upper bound for $\text{Pr}(g_{j}(Y)=1)$ via jointly considering multiple items. In particular, we use $\mathcal{H}_{c}$ to denote an arbitrary subset of $\mathcal{V}_{r}$ that contains $c$ items, i.e., $\mathcal{H}_{c} \subseteq \mathcal{V}_{r}$. We denote $\overline{p}_{\mathcal{H}_{c}} = \sum_{j \in \mathcal{H}_{c}} \overline{p}_j$, i.e., the summation of probability upper bounds for the items in $\mathcal{H}_{c}$. Then, for each item $j \in \mathcal{H}_{c}$, we define $g_{j}(\mathbf{Z}) = \mathbb{I}(j \in \mathcal{A}(\mathbf{Z},u))$. Given a positive integer $\eta$, we define the following quantity: 
{\small 
\begin{align}
    \overline{p}^{*}_{\mathcal{H}_{c}} = \frac{\lceil (\overline{p}_{\mathcal{H}_{c}}/\eta)\cdot {n \choose s} \rceil}{ {n \choose s}  }.
\end{align}
}
We can find $C^{\prime}_{\mathcal{H}_{c}} \subseteq P$ such that we have the following: 
{\small 
\begin{align}
    \text{Pr}(\mathbf{X}\in C^{\prime}_{\mathcal{H}_{c}}) = \overline{p}^{*}_{\mathcal{H}_{c}}. 
\end{align}
}
Then, we define $C_{\mathcal{H}_{c}} =C^{\prime}_{\mathcal{H}_{c}} \cup (A \setminus P)$ and we have the following: 
{\small 
\begin{align}
    \text{Pr}(\mathbf{X} \in C_{\mathcal{H}_{c}}) = \text{Pr}(\mathbf{X} \in C^{\prime}_{\mathcal{H}_{c}}) + \text{Pr}(\mathbf{X} \in A \setminus P) = \overline{p}^{*}_{\mathcal{H}_{c}}. 
\end{align}
}
Based on the definition of $g_{j}(\mathbf{Z})$, we have the following: 
{\small 
\begin{align}
\label{condition_of_np_lemma_multiple_j_u_of_x_app}
    \frac{\sum_{j \in \mathcal{H}_{c}} \text{Pr}(g_{j}(X)=1)}{\eta}\leq \frac{\sum_{j \in \mathcal{H}_{c}} \overline{p}_j}{\eta} \leq  \overline{p}^{*}_{\mathcal{H}_{c}} =  \text{Pr}(\mathbf{X} \in C_{\mathcal{H}_{c}}). 
\end{align}
}
Next, we will leverage Lemma~\ref{lemma_np_general} to derive an upper bound for $\sum_{j \in \mathcal{H}_{c}} \text{Pr}(g_{j}(\mathbf{Y})=1)$. Given a rating-score matrix $\mathbf{Z}$ as input, the recommender system algorithm $\mathcal{A}$ recommends $N'$ items to a user. Therefore, we have  $\sum_{j \in \mathcal{H}_{c}} \mathbb{I}(j \in \mathcal{A}(\mathbf{Z},u)) \leq N'$, i.e., $\sum_{j \in \mathcal{H}_{c}} g_{j}(\mathbf{Z}) \leq N'$. Based on this, we let $\eta = N'$. Since there are $c$ items in $\mathcal{H}_{c}$, we let $\gamma =  c$. We let $\Phi^{\prime} =  B$ since we have $g_{j}(\mathbf{Z})=0$ for $\forall \mathbf{Z} \in B$. Then, we have $A = \Phi \setminus \Phi^{\prime}$.  Based on Equation~(\ref{probability_density_start_app}) -~(\ref{probability_density_end_app}), we have $\text{Pr}(\mathbf{Y}=\mathbf{Z})= \tau \cdot \text{Pr}(\mathbf{X}=\mathbf{\mathbf{Z}})$ if $\mathbf{Z} \in P$ and $\text{Pr}(\mathbf{Y}=\mathbf{Z}) > \tau \cdot \text{Pr}(\mathbf{X}=\mathbf{\mathbf{Z}})$ if $\mathbf{Z} \in A \setminus P$.  We let $O_1 = A \setminus P$, $O_2 = P $, and $O_3 = C^{\prime}_{\mathcal{H}_{c}} \subseteq O_2$. Finally, we let $O = O_1 \cup O_3 = C_{\mathcal{H}_{c}}$. We can apply Lemma~\ref{lemma_np_general} based on the condition in  Equation~(\ref{condition_of_np_lemma_multiple_j_u_of_x_app}) and we have the following: 
{\small 
\begin{align}
&\sum_{j \in \mathcal{H}_{c}}\text{Pr}(j \in \mathcal{A}(\mathbf{Y},u)) \\
    =&\sum_{j \in \mathcal{H}_{c}}\text{Pr}(g_{j}(\mathbf{Y})=1) \\
    \leq & N' \cdot \text{Pr}(\mathbf{Y} \in C_{\mathcal{H}_{c}}) \\
    =& N' \cdot( \text{Pr}(\mathbf{Y} \in C^{\prime}_{\mathcal{H}_{c}}) + \text{Pr}(\mathbf{Y} \in A\setminus P)) \\
    =& N' \cdot(\text{Pr}(\mathbf{X} \in C^{\prime}_{\mathcal{H}_{c}})\cdot \tau + \text{Pr}(\mathbf{Y} \in A) - \text{Pr}(\mathbf{Y} \in P)) \\
    =& N' \cdot (\overline{p}^{*}_{\mathcal{H}_{c}} \cdot \frac{{n \choose s}}{{n' \choose s}}+  (\frac{s}{n'}-\frac{s}{n}\cdot \frac{{n \choose s}}{{n' \choose s}})).
\end{align}
}
Then, we have the following: 
{\small 
\begin{align}
\label{2_derive_min_app_0}
& \min_{j \in \mathcal{V}_{r}}\text{Pr}(j \in \mathcal{A}(\mathbf{Y},u)) \\
\label{2_derive_min_app_1}
\leq & \min_{j \in \mathcal{H}_{c}}\text{Pr}(j \in \mathcal{A}(\mathbf{Y},u)) \\
\label{2_derive_min_app_2}
\leq & \frac{\sum_{j \in \mathcal{H}_{c}}\text{Pr}(j \in \mathcal{A}(\mathbf{Y},u))}{c} \\
\label{2_derive_min_app_3}
\leq & N' \cdot (\overline{p}^{*}_{\mathcal{H}_{c}} \cdot \frac{{n \choose s}}{{n' \choose s}}+  (\frac{s}{n'}-\frac{s}{n}\cdot \frac{{n \choose s}}{{n' \choose s}}))/c.
\end{align}
}
We have Equation~(\ref{2_derive_min_app_1}) from~(\ref{2_derive_min_app_0}) because $\mathcal{H}_{c} \subseteq \mathcal{V}_{r}$ and Equation~(\ref{2_derive_min_app_2}) from~(\ref{2_derive_min_app_1}) because the smallest value is no larger than the average value in a set. We note that the upper bound of $ \min_{j \in \mathcal{V}_{r}}\text{Pr}(j \in \mathcal{A}(\mathbf{Y},u)) $ is non-decreasing as $\overline{p}_{\mathcal{H}_{c}}$ increases. Based on Equation~(\ref{without_loss_condition_2_app}), the upper bounds reaches the minimal value when $\mathcal{H}_{c} = \{v^{\prime}_1, v^{\prime}_2,$ $  \cdots, v^{\prime}_{c}\}$. Taking all possible $c$ into consideration, we have the following: 
{\small 
\begin{align}
   & \min_{j \in \mathcal{V}_{r}}\text{Pr}(j \in \mathcal{A}(\mathbf{Y},u)) \\ 
    \leq & \min_{c=1}^{N -r +1} N' \cdot (\overline{p}^{*}_{\mathcal{H}_{c}} \cdot \frac{{n \choose s}}{{n' \choose s}}+  (\frac{s}{n'}-\frac{s}{n}\cdot \frac{{n \choose s}}{{n' \choose s}}))/c, \nonumber
\end{align}
}
where  $\mathcal{H}_{c}=\{v^{\prime}_1, v^{\prime}_2, \cdots, v^{\prime}_{c}\}$. 
Similarly, the upper bound of $\max_{\mathcal{V}_{r}}$ $\min_{j \in \mathcal{V}_{r}}$ $\text{Pr}(j \in \mathcal{A}(\mathbf{Y},u))$ reaches the maximum value when $\mathcal{V}_{r}$ contains the $N-r + 1$ items among all items in $\mathcal{I}\setminus \mathcal{I}_u$ that have the largest probability upper bounds, which we denote as $\mathcal{V}_{r}=\{v_1, v_2, \cdots, v_w\}$, where $\overline{p}_{v_1} \leq \overline{p}_{v_2}\leq \cdots \leq \overline{p}_{v_w}$ and $w=N -r +1$. Formally, we have the following: 
{\small 
\begin{align}
& \max_{\mathcal{V}_{r}}\min_{j \in \mathcal{V}_{r}}\text{Pr}(j \in \mathcal{A}(\mathbf{Y},u)) \\
\leq & \min_{c =1}^{N-r +1} N' \cdot(\overline{p}^{*}_{\mathcal{H}_{c}} \cdot \frac{{n \choose s}}{{n' \choose s}}+  (\frac{s}{n'}-\frac{s}{n}\cdot \frac{{n \choose s}}{{n' \choose s}}))/c, 
\end{align}
}
where $\mathcal{H}_{c}=\{v_1, v_2, \cdots, v_{c}\}$.
Since we have Equation~(\ref{necessary_condition_of_nega_u_is_true_app}) when $\neg U$ is true and $1 \leq r \leq \min(k, N)$, we have the following: 
{\small 
\begin{align}
\label{equation_of_the_middle}
    \underline{p^{*}_{\mu_{r}}} \cdot  \frac{{n \choose s}}{{n' \choose s}}  \leq  & \min ( \min_{c=1}^{N-r +1}\frac{ N' \cdot(\overline{p}^{*}_{\mathcal{H}_{c}} \cdot \frac{{n \choose s}}{{n' \choose s}}+  (\frac{s}{n'}-\frac{s}{n}\cdot \frac{{n \choose s}}{{n' \choose s}}))}{c}, \nonumber\\
    & \overline{p}^{*}_{v_1}\cdot \frac{{n \choose s}}{{n' \choose s}}+ \frac{s}{n'}-\frac{s}{n}\cdot \frac{{n \choose s}}{{n' \choose s}}).
\end{align}
}
where $\mathcal{V}_{r}=\{v_1, v_2, \cdots, v_{N - r +1}\}$  and $\mathcal{H}_{c}=\{v_1, v_2, \cdots, v_{c}\}$. 
The Equation~(\ref{equation_of_the_middle}) is equivalent to the following: 
\begin{align}
   \underline{p^{*}_{\mu_{r}}}  \leq   \min ( \min_{c=1}^{N-r +1}\frac{ N' \cdot(\overline{p}^{*}_{\mathcal{H}_{c}} + \sigma)}{c}, 
     \overline{p}^{*}_{v_1}+ \sigma),  
\end{align}
where $\sigma = \frac{s}{n'} \cdot \frac{{n' \choose s}}{{n \choose s}} -\frac{s}{n}$ . 

\myparatight{Applying the law of contraposition} We leverage the law of contraposition and we have the following: if we have $1 \leq r \leq \min(k, N)$ and the following: 
{\small 
\begin{align}
\label{optimization_prob_app}
   \underline{p^{*}_{\mu_{r}}} >   \min ( \min_{c=1}^{N-r +1}\frac{ N' \cdot(\overline{p}^{*}_{\mathcal{H}_{c}} + \sigma)}{c}, 
     \overline{p}^{*}_{v_1}+ \sigma),  
\end{align}
}
where $\sigma = \frac{s}{n'} \cdot \frac{{n' \choose s}}{{n \choose s}} -\frac{s}{n}$. Then, we have $|\mathcal{I}_{u} \cap \mathcal{T}(\mathbf{M}',u)| \geq r $.  The Equation~(\ref{optimization_prob_app}) is satisfied for $\forall \mathbf{M}' \in \mathcal{L}(\mathbf{M},e)$. We can find the maximum value of $r$, where $1 \leq r \leq \min(k, N)$, that satisfies the Equation~(\ref{optimization_prob_app}), which is essentially the optimization problem in the Equation~(\ref{optimization_problem_theorem_1}). We reach the conclusion.

\section{Proof of Theorem 3.2}
\label{proof_of_proposition}
Based on Equation~(\ref{cp_lower_bound})~-~(\ref{cp_upper_bound}) and Boole's inequality in probability theory, we have the following probability: 
\begin{align}
    \text{Pr}((p_i \geq \underline{p_i},\forall i \in \mathcal{I}_u) \wedge (p_j \leq \overline{p}_j,\forall j \in \mathcal{I}\setminus\mathcal{I}_u)) \geq 1 - \alpha_u, 
\end{align}
where $R \wedge S$ is true if and only if $R$ is true and $S$ is true. Note that there is no randomness in our optimization problem in Equation~(\ref{optimization_problem_theorem_1}). Therefore, the probability that our Algorithm~\ref{alg:certify} computes an incorrect $r_u$ 
for the user $u$ is at most $\alpha_u$. 
Recall that we set $\alpha_u = \frac{\alpha}{n}$ in our Algorithm~\ref{alg:certify}. 
Based on the Boole's inequality, we know the probability that our Algorithm~\ref{alg:certify} computes an incorrect $r_u$
for at least one user among  all users in $\mathcal{U}$ is at most $\alpha$.

\end{document}